\newif\ifshowproofs
\definecolor{darkgreen}{rgb}{0.0, 0.5, 0.0}
\def\comments{1}
\newcommand{\pras}[1]{\textcolor{BrickRed}{\sf{[#1 --PR]}}}
\newcommand{\kangning}[1]{\textcolor{orange}{\sf{[#1 --KW]}}}
\newcommand{\zihan}[1]{\textcolor{purple}{\sf{[#1 --ZT]}}}
\newcommand{\moses}[1]{\textcolor{darkgreen}{\sf{[#1 --MC]}}}
\newcommand{\edit}[2]{\textcolor{red}{#1}}
\newcommand{\pras}[1]{}
\newcommand{\kangning}[1]{}
\newcommand{\moses}[1]{}
\newcommand{\zihan}[1]{}
\newcommand{\edit}[2]{#1}
\title{Metric Distortion for Tournament Voting and Beyond}
\author{%
  \begin{tabular}{@{}c@{\hspace{5em}}c@{}}
    Moses Charikar       & Prasanna Ramakrishnan \\  
    {\sl Stanford University} & {\sl Stanford University} \\
    \texttt{moses@cs.stanford.edu} & \texttt{pras1712@stanford.edu} \\[1.5em]
    Zihan Tan            & Kangning Wang         \\
    {\sl Rutgers University}  & {\sl Rutgers University} \\
    \texttt{zihantan1993@gmail.com} & \texttt{kn.w@rutgers.edu}
  \end{tabular}%
}
\date{}
\newtheorem{theorem}{Theorem}[section]
\newtheorem{lemma}[theorem]{Lemma}
\newtheorem{corollary}[theorem]{Corollary}
\newtheorem{observation}[theorem]{Observation}
\theoremstyle{definition}
\newtheorem{definition}[theorem]{Definition}
\newtheorem{example}[theorem]{Example}
\newtheorem{remark}{Remark} %
\DeclareMathOperator*{\E}{\mathbb{E}}
\renewcommand{\d}{\,\mathrm{d}}
\crefname{Program}{Program}{Programs}
\def\@fnsymbol#1{\ensuremath{\ifcase#1\or \dagger\or \ddagger\or
   \mathsection\or \mathparagraph\or \|\or **\or \dagger\dagger
   \or \ddagger\ddagger \else\@ctrerr\fi}}
\definecolor{linkc}{rgb}{0.4, 0.3, 0.7}
\definecolor{citec}{rgb}{0.4, 0.7, 0.7}
\definecolor{urlc}{rgb}{0.5, 0.1, 0.2}
\newtcolorbox[auto counter,number within=section]{mybox}[2][]{colback=orange!8!white,colframe=gray!30!black,title={#2},#1}
\newcommand*\dif{\mathop{}\!\mathrm{d}}
\newcommand{\cg}{\succ}
\newcommand{\cgeq}{\succeq}
\newcommand{\oneif}{\mathds{1}}
\newcommand{\score}{{\mathsf{score}}}
\DeclareMathOperator{\plu}{plu}
\DeclareMathOperator{\SC}{SC}
\begin{document}

\maketitle

\begin{abstract}
In the well-studied metric distortion problem in social choice, we have voters and candidates located in a shared metric space, and the objective is to design a voting rule that selects a candidate with minimal total distance to the voters. However, the voting rule has limited information about the distances in the metric, such as each voter's ordinal rankings of the candidates in order of distances. The central question is whether we can design rules that, for any election and underlying metric space, select a candidate whose total cost deviates from the optimal by only a small factor, referred to as the \emph{distortion}.

A long line of work resolved the optimal distortion of \emph{deterministic rules}, and recent work resolved the optimal distortion of randomized (weighted) \emph{tournament rules}, which only use the aggregate preferences between pairs of candidates. In both cases, simple rules achieve the optimal distortion of 3. Can we achieve the best of both worlds: a \emph{deterministic tournament rule} matching the lower bound of $3$? Prior to our work, the best rules have distortion $2 + \sqrt{5} \approx 4.2361$.

In this work, we establish a lower bound of $3.1128$ on the distortion of any deterministic tournament rule, even when there are only 5 candidates, and improve the upper bound with a novel rule guaranteeing distortion $3.9312$. We then generalize tournament rules to the class of \emph{$k$-tournament rules} which obtain the aggregate preferences between $k$-tuples of candidates. We show that there is a family of deterministic $k$-tournament rules that achieves distortion approaching $3$ as $k$ grows. Finally, we show that even with $k = 3$, a randomized $k$-tournament rule can achieve distortion less than $3$, which had been a longstanding barrier even for the larger class of ranked voting rules.
\end{abstract}

\thispagestyle{empty}
\newpage

\renewcommand{\contentsname}{Contents}
\doublespacing
\tableofcontents
\singlespacing

\thispagestyle{empty}
\newpage
\setcounter{page}{1}

\section{Introduction}

Social choice theory studies methods of aggregating individual preferences to reach a collective decision. This process involves two main steps: first, eliciting individual preferences, and second, applying a social choice rule to decide the winner based on these elicited preferences.

In preference elicitation, much of social choice theory considers ranked ballots, where each voter reports a ranking of all the candidates according to her preference. However, ranked voting is not the only historically significant format. Back in the 13th century, Ramon Llull \cite{llull1274artifitium,hagele2000lulls} introduced a voting rule that holds a mini-election between each pair of candidates and declares as the winner the candidate with the most pairwise wins. This rule is now known as the Copeland rule. Interestingly, if instead of counting pairwise wins it chose the candidate with the most total votes during the process, the rule would become the Borda rule, first proposed by Nicholas of Cusa in the 15th century \cite{cusa1433,christianson2008church}.\footnote{Some sources attribute the Borda rule to Llull as well, but according to \cite{colomer2013ramon}, ``[i]n contrast to some previous tentative suggestions, careful reading of Llull's papers demonstrates that he did not propose a rank-order count system, such as those proposed later on by Cusanus and Borda.''}

The Copeland rule and the Borda rule both fall within the category of \emph{tournament rules}, meaning the only information they elicit is how many voters prefer $a$ to $b$ (written as $a \succ b$) for each pair of candidates $a$ and $b$.\footnote{In the literature, the term ``tournament rules'' more often refers to \emph{unweighted} tournament rules, where the only information they elicit is whether or not a majority of voters prefer $a$ to $b$ for each pair of candidates $a$ and $b$. Since our focus is on weighted tournament rules that also know the margins of victory, we use the term to refer to weighted ones throughout our discussion similar to some prior works.} Researchers have been studying a large variety of tournament rules; we discuss some of these in \cref{sec:related} and refer the reader to the nice surveys of \cite{brandt2016tournament,fischer2016weighted} for a comprehensive overview.

Several factors make tournament rules compelling for theoretical study and practical use. %
Given our strong understanding of voting when there are only two candidates, tournament rules are a way to use the tools of graph theory to lift this understanding up to settings with more than two candidates. As a result, tournament rules, like the Copeland and Borda rules, tend to feel natural and intuitive. \cite{fishburn1977condorcet} also notes that from a data perspective, tournament rules are more efficient than ranked voting rules, since they require only information on the quadratically many comparisons between pairs of candidates, rather than the exponentially many possible rankings. %
In this sense, the tournament graph is a compressed representation of an election, which makes it easier to summarize, evaluate, and audit the results.  Similarly, tournament rules are especially appealing when eliciting a full ranking of candidates from each voter becomes cumbersome, since an accurate tournament graph can be computed by querying random voters about their preferences between pairs of candidates. Recently, this model of preference elicitation has become widespread in \emph{reinforcement learning from human feedback} (RLHF), where generative models are fine-tuned using human preferences over pairs of outputs \cite{christiano2017deep,DBLP:conf/nips/Ouyang0JAWMZASR22,touvron2023llama,zhu2023principled}.

\edit{}{The appeal of tournament rules naturally prompts the question: what can we do by incorporating information just beyond pairwise comparisons? 
With this in mind, we introduce \emph{$k$-tournament rules} as a generalization of tournament rules. These rules elicit information based on $k$-wise comparisons: for each tuple of $k$ candidates $(c_1, c_2, \ldots, c_k)$, they ask how many voters have the preference $c_1 \succ c_2 \succ \cdots \succ c_k$. $k$-tournaments capture preference elicitation methods in some RLHF settings: for example, in the work of \cite{DBLP:conf/nips/Ouyang0JAWMZASR22}, they ``present labelers with anywhere between $k = 4$ and $k = 9$ responses to rank.'' Though this generalization feels natural and has seen practical use, there is little published discussion on these rules, possibly due to the historical absence of a framework that can guide the design and evaluation of $k$-tournaments.}

\paragraph{The metric distortion framework.} In this work, we study tournaments \edit{}{and $k$-tournaments} through the lens of metric distortion \cite{DBLP:conf/aaai/AnshelevichBP15,DBLP:journals/ai/AnshelevichBEPS18}. The metric distortion framework is inspired by the proximity spatial model of voting \cite{enelow1984spatial,enelow1990advances,merrill1999unified,armstrong2020analyzing} which assumes that voters and candidates are in the same metric space, where each voter prefers those candidates who are closer to her. The goal is to select a candidate with a small social cost, defined as the sum of distances from the candidate to all voters. If a voting rule (that only elicits partial information) always has social cost at most $t$ times the optimal social cost regardless of the underlying metric space, then we say this rule has metric distortion at most $t$.

Early work on metric distortion \cite{DBLP:conf/aaai/AnshelevichBP15, DBLP:journals/ai/AnshelevichBEPS18,DBLP:conf/sigecom/GoelKM17} found it to be a useful tool for gaining insight into existing voting rules by using metric spaces to probe their strengths and weaknesses. More recent work \cite{DBLP:conf/ec/MunagalaW19,DBLP:conf/aaai/Kempe20a,DBLP:conf/focs/GkatzelisHS20,DBLP:conf/ijcai/KizilkayaK22,DBLP:conf/sigecom/Kizilkaya023, DBLP:journals/jacm/CharikarRWW24} has found another alluring appeal of metric distortion: to use the goal of optimal distortion to guide the discovery of novel voting rules. While it would be reasonable to expect these rules to be specialized to the metric setting (like some rules proposed by \cite{DBLP:conf/aaai/Kempe20a} and \cite{DBLP:conf/soda/CharikarR22}, which encode the metrics as constraints in a linear program), in a pleasant surprise, they have largely been simple and intuitive, without any need for metric spaces to describe them.

\edit{}{Just as they have been in voting theory as a whole, tournament rules have been central to this line of research.} 
This line of work was initiated by \cite{DBLP:conf/aaai/AnshelevichBP15,DBLP:journals/ai/AnshelevichBEPS18}, who showed that the \emph{Copeland} rule has distortion $5$, and established a general lower bound of $3$ for deterministic rules\footnote{Note that 3 is a critical threshold for metric distortion, since if a majority of voters prefers $a$ over $b$ then the cost of $a$ is at most $3$ times the cost of $b$. Distortion 3 can therefore be viewed as a relaxation of a \emph{Condorcet winner} (which may not always exist).}. They conjectured that another rule, \emph{Ranked Pairs} \cite{tideman1987independence}, has distortion $3$ and proved their conjecture in elections with at most $4$ candidates, but it was later disproved by \cite{DBLP:conf/sigecom/GoelKM17,DBLP:conf/aaai/Kempe20a}.

\edit{}{However, this conjecture was later disproved \edit{by}{} \cite{DBLP:conf/sigecom/GoelKM17}, \edit{and}{} \cite{DBLP:conf/aaai/Kempe20a} \edit{showed that Ranked Pairs can have distortion as large as $\sqrt{m/2}$ in $m$-candidate elections.}{}}
The first deterministic improvement over the Copeland rule was by \cite{DBLP:conf/ec/MunagalaW19}, who gave a novel rule called \emph{Weighted Uncovered Set} with distortion $2 + \sqrt{5} \approx 4.236$ \cite{DBLP:conf/ec/MunagalaW19}.
Finally, \cite{DBLP:conf/focs/GkatzelisHS20} proved that there is a deterministic rule with distortion 3, resolving the optimal distortion of deterministic voting rules.
Since this work, several simple deterministic voting rules have been found with optimal distortion $3$ \cite{DBLP:conf/ijcai/KizilkayaK22,DBLP:conf/sigecom/Kizilkaya023}. \edit{}{Still, a deterministic tournament rule with distortion $3$, or any improvement beyond $2 + \sqrt{5}$, remained elusive. }

With deterministic distortion settled, focus shifted to randomized rules. It was conjectured that the optimal distortion is $2$ \cite{DBLP:conf/sigecom/GoelKM17}, but this was disproved independently by \cite{DBLP:conf/soda/CharikarR22} and \cite{pulyassary2021randomized}, with \cite{DBLP:conf/soda/CharikarR22} establishing a lower bound of $2.112$. Recently, \cite{DBLP:journals/jacm/CharikarRWW24} designed a randomized voting rule with distortion at most $2.753$, breaking the longstanding barrier of $3$. Their voting rule uses a mixture of \emph{Maximal Lotteries}, a game-theoretic voting rule which dates back to the 60's \cite{kreweras1965aggregation, fishburn1984probabilistic}, and a randomized variant of \cite{DBLP:conf/ec/MunagalaW19}'s Weighted Uncovered Set. Closing the gap between $2.112$ and $2.753$ remains a major challenge of computational social choice.

Just as they have been in voting theory as a whole, tournament rules have been central to the metric distortion problem---the Copeland rule, Ranked Pairs, Weighted Uncovered Set, and Maximal Lotteries are all tournament rules. And yet, there are persistent gaps in our understanding of the metric distortion of tournament rules. Unlike the general setting, in which deterministic rules are resolved but randomized rules remain open, the situation for tournament rules is reversed: \cite{DBLP:journals/jacm/CharikarRWW24} showed that the Maximal Lotteries rule achieves the optimal randomized distortion of $3$, but for deterministic rules, the gap lies between $3$ \cite{DBLP:conf/sigecom/GoelKM17} and $2 + \sqrt{5}$ \cite{DBLP:conf/ec/MunagalaW19}, as if rolling back the years of progress on the general problem.

Given the historical interplay between metric distortion and tournament rules, closing this gap could have value in both directions. If metric distortion continues to be a reliable compass towards new voting rules, we may uncover natural tournament rules that remain unknown. For example, a deterministic tournament rule with distortion $3$ may be a deterministic analogue of Maximal Lotteries; a natural stand-in for settings where randomness is undesirable. On the other hand, the work of \cite{DBLP:journals/jacm/CharikarRWW24} suggests that such voting rules could aid in closing the gap for general randomized voting rules, just as Maximal Lotteries and Weighted Uncovered Set were key pieces of that work.

\subsection{Our Contributions and Technical Overview}

\paragraph{Metric distortion for deterministic tournament rules.}

Our first contribution improves our understanding of the optimal distortion of deterministic tournament rules. We show that the optimal distortion of tournament rules must lie between $3.1128$ (\Cref{thm:lb}) and $3.9312$ (\Cref{thm:2ub}).

\begin{theorem}\label{thm:tournament-bounds}
The optimal distortion of deterministic tournament rules lies between $3.1128$ and $3.9312$.     
\end{theorem}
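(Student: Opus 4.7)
The statement has two independent parts: a lower bound of $3.1128$ and an upper bound of $3.9312$. My plan is to handle them separately, since the techniques differ substantially.

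For the lower bound, I would construct an explicit adversarial instance on $m = 5$ candidates. The key object is a weighted tournament $T$ (a complete directed graph with margin weights summing to the number of voters on each edge) such that, for every candidate $c$, there exists a metric space consistent with $T$ in which the social cost of $c$ is at least $3.1128$ times that of some other candidate. Since a tournament rule sees only $T$, its chosen winner will be at distortion $\geq 3.1128$ on one of these witness metrics. The plan is: (i) pick a highly symmetric tournament on $5$ candidates, for instance one whose automorphism group acts transitively so that only one candidate needs to be analyzed; (ii) for the fixed candidate $c$, formulate the worst-case distortion as a linear program over metrics (using the standard reformulation via voter-candidate distance variables plus triangle inequality and the $a \succ b$ constraints that voters preferring $a$ contribute at least $|\{v : v \text{ prefers } a\}|$-worth of distance shifts), and solve it; (iii) tune the edge margins of $T$ to push this LP value up to $3.1128$. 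I expect the main obstacle here to be finding the right symmetric tournament and margins; the value $3.1128$ is not a clean closed form, which suggests it comes from a numerical optimization over the polytope of valid tournaments, so the write-up will exhibit the specific margins and a concrete witness metric per candidate, and verify the bound analytically.

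For the upper bound, I would design a new deterministic tournament rule that strictly refines the Weighted Uncovered Set of Munagala–Wang and analyze its distortion. Writing $w(a,b)$ for the number of voters preferring $a$ to $b$, recall that $a$ \emph{covers} $b$ if for every $x$, $w(a,x) \geq w(b,x)$, up to the majority-style weighting used in the relevant notion. My approach would be to define a parametric domination relation, say: $a$ $\lambda$-\emph{dominates} $b$ if $w(a,b) \geq \lambda \cdot n/2$, together with a two-step version of this relation (``$a$ $\lambda$-dominates some $x$ that $\mu$-dominates $b$''). The selected winner is then a candidate that is not dominated via any short chain, chosen to minimize a specific potential. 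The analysis proceeds by fixing the winner $c$ and the social-cost optimum $c^*$, and bounding $\mathrm{SC}(c)/\mathrm{SC}(c^*)$ via triangle inequality: for any voter $v$, $d(v,c) \leq d(v,c^*) + d(c^*,c)$, and $d(c^*,c)$ is controlled by summing $d(v,c^*) + d(v,c)$ over the voters witnessing the pairwise margins on the chain connecting $c$ to $c^*$. Optimizing the parameters $\lambda, \mu$ (and the chain length) should give a competitive guarantee of $3.9312$. The central obstacle is matching bounds on both sides of the chain simultaneously: if $c$ is selected over $c^*$, then $c^*$ must have failed some chain test, which gives one side of the inequality; the rule must be defined so that a complementary bound also applies, yielding the right optimized constant. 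The proof will then reduce to a finite case analysis on where the chain from $c^*$ to $c$ breaks, with each case yielding a bound at most $3.9312$ after carefully plugging in the thresholds.

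Both arguments are self-contained modulo the standard metric-distortion LP machinery. I anticipate that the lower-bound construction will be the more delicate of the two, since the value $3.1128$ appears to be genuinely numerical and the witness metrics must be consistent with a single shared tournament, while the upper bound is more mechanical once the rule and chain-length are chosen correctly.
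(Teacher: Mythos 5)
Your lower-bound plan contains a step that cannot work as stated. You propose to ``pick a highly symmetric tournament on $5$ candidates, for instance one whose automorphism group acts transitively so that only one candidate needs to be analyzed.'' But a transitive automorphism group on $5$ vertices contains a $5$-cycle, so such a weighted tournament is cyclically symmetric, and by the result of Munagala and Wang (restated as \Cref{thm:cyc-sym} in this paper) every candidate in a cyclically symmetric election has distortion at most $3$; your per-candidate LPs can therefore never certify a value above $3$, let alone $3.1128$. Any improved lower bound must use an inherently asymmetric instance, and this is exactly how the paper proceeds: a $5$-cycle whose consecutive margins grow geometrically ($\beta\lambda, \beta\lambda^2, \dots$), where each candidate is played off against its predecessor. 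Even then, the witness metrics are not generic LP outputs: four of the candidates are handled with the $(0,1,2,3)$-biased metrics of \cite{DBLP:conf/soda/CharikarR22}, but the remaining candidate needs a ``half-integral'' biased metric ($x_j \in \{\tfrac12,1\}$), and one must exhibit explicit preference profiles realizing the margins (the constraint $s_{i\cg j}+s_{j\cg k}+s_{k\cg i}\le 2$ is necessary but not sufficient for realizability). So step (i) of your plan must be abandoned, and steps (ii)--(iii) alone do not point you to the right family of instances; the constant $3.1128$ arises as $1+2\lambda$ with $2\lambda^5+\lambda^4+\lambda^3-\lambda-4=0$, tied to this specific asymmetric construction.

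On the upper bound, your rule (a parametric domination relation with two-step chains) is in the right spirit---the paper's ``blanketing'' relation also strengthens covering by an intermediate candidate---but the analysis you sketch is the classical one: triangle inequality $d(v,c)\le d(v,c^*)+d(c^*,c)$ with $d(c^*,c)$ bounded by summing voter distances along a chain of large margins. That style of argument is what yields Copeland's bound of $5$ and, with the uncovered-set threshold, $2+\sqrt5$; it was precisely the barrier standing before this work, and it is not clear it can be pushed to $3.9312$. The paper's improvement rests on a genuinely new sufficient condition proved inside the biased-metric framework (\Cref{lem:post-shift} and \Cref{cor:local}), where the intermediate candidate $k$ is made a proxy for $i^*$ by ``shifting'' the blocks composing $2\SC(i^*)$, giving the condition $s_{I\cg j^*}\le\lambda(1-s_{i^*,k\cg J})$ that a plain chain/triangle-inequality computation does not deliver; the constant $3.9312=1+2\lambda$ with $\lambda^3=\lambda^2+1$ comes from optimizing exactly these conditions. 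Moreover, you do not address existence of a candidate passing your chain test: for the blanketing relation this is \Cref{lem:ubk} and requires a nontrivial argument constructing a second cycle from a hypothesized cycle of blanketing candidates, not simply taking an argmin as in the $\beta$-uncovered set. Without these two ingredients your outline stops short of a proof.
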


We offer a handful of qualitative interpretations of these results. 

First, the lower bound proves a strict separation between the optimal distortion of deterministic and randomized tournament voting rules. Adding to other separations proved for general voting rules \cite{DBLP:journals/ai/AnshelevichBEPS18,DBLP:journals/jacm/CharikarRWW24} and \emph{unweighted} tournament rules\footnote{These are rules, such as the Copeland rule, which only use the binary results of majority votes between pairs of candidates.} \cite{DBLP:journals/ai/AnshelevichBEPS18, DBLP:journals/corr/abs-2403-18340}, these results continue to demonstrate the power of randomization in voting.

Second, while one might expect the distortion of deterministic tournament rules to behave like general deterministic rules, we find that they more closely mirror general randomized rules. Like with randomized rules, our results show that the optimal distortion for deterministic tournament rules is non-integral due to unexpected pathological counterexamples.

Finally, the lower bound eliminates the possibility of a clean deterministic tournament rule with distortion $3$, dashing hopes that such a rule could shed new light on randomized metric distortion. However, our techniques build on the tools developed by \cite{DBLP:journals/jacm/CharikarRWW24} for randomized rules, and allow us to extract more mileage out of them. We are optimistic that these insights may find broader value in metric distortion. Our upper bounds also show that like for general randomized rules, we may also continue to discover interesting new voting rules in narrowing the remaining gap for tournament rules as well.

Our main technical tool in proving these bounds is the \textit{biased metric} framework introduced by \cite{DBLP:conf/soda/CharikarR22} and refined by \cite{DBLP:journals/jacm/CharikarRWW24}. This framework characterizes the worst-case metric spaces for the problem, and provides necessary and sufficient conditions for low metric distortion (\Cref{thm:biased-iff}). In particular, this condition is an inequality between two integrals of functions built by considering what fraction of voters satisfy certain preferences. 

We give a geometric interpretation of these functions, which we view as a stack of blocks. This physical analogy is particularly helpful for reasoning about these conditions, since we can ``move'' blocks around to ensure that they cover a certain area. We use this approach to derive sufficient conditions for the existence of a low distortion candidate that can be verified based on local properties of the tournament graph (\Cref{lem:post-shift}). This condition in turn leads to a notion we call \emph{blanketing}, a stronger version of a covering condition in the social choice literature used to derive the \emph{Uncovered Set} \cite{fishburn1977condorcet,miller1980new,moulin1986choosing} and \emph{Weighted Uncovered Set} \cite{DBLP:conf/ec/MunagalaW19} voting rules. Our new voting rule, \emph{\nameref{box:unblanketed}} (\cpageref{box:unblanketed}), selects a candidate $j^*$ that is not blanketed (for appropriate choice of parameters) by any other candidate. Using a graph theoretic argument, we show that such a candidate must always exist. 

One challenge in proving a lower bound is a result due to \cite{DBLP:conf/ec/MunagalaW19} that if an election has a cyclically symmetric tournament graph, then every candidate has distortion 3. Therefore, any improved lower bound must use an inherently asymmetric election, in contrast to the hardest instances for deterministic rules \cite{DBLP:journals/ai/AnshelevichBEPS18}, and randomized tournament rules \cite{DBLP:conf/sigecom/GoelKM17}. 

A natural starting point is to understand what kinds of instances do not satisfy the sufficient conditions used for our upper bound (\Cref{lem:post-shift}). We find that these conditions can fail when candidates lie along a path with exponentially increasing weights, since these conditions bound the distortion using ratios of these weights. Following this idea, we construct a cycle on 5 candidates where the weights follow an increasing geometric sequence (\Cref{fig:5lb}), and the goal is to show that each candidate can have high distortion when the candidate preceding it is optimal in a certain biased metric. The increasing weights make it easy to give a lower bound for all but one of the candidates (the one whose predecessor beats it by the least). For most of the candidates, the underlying metric that we use comes from the subclass of biased metrics called \textit{$(0,1,2,3)$-metrics} used by \cite{DBLP:conf/soda/CharikarR22} to give improved lower bounds against randomized voting rules. However, for the last candidate, we use a ``half-integral'' version of these metrics. This contradicts their conjecture that the $(0,1,2,3)$-metrics are the hardest metrics, and opens up the possibility of improving their lower bounds as well. \\

The limitations of tournament rules in the metric distortion setting naturally prompt the question: what can we do by incorporating information just beyond pairwise comparisons? What additional information is useful enough for better voting rules to be possible?

With this in mind, we introduce \emph{$k$-tournament rules} as a generalization of tournament rules. These rules elicit information based on $k$-wise comparisons: for each tuple of $k$ candidates $(c_1, c_2, \ldots , \allowbreak c_k)$, they ask how many voters have the preference $c_1 \succ c_2 \succ \cdots \succ c_k$. $k$-tournaments capture preference elicitation methods in some RLHF settings: for example, in the work of \cite{DBLP:conf/nips/Ouyang0JAWMZASR22}, they ``present labelers with anywhere between $k = 4$ and $k = 9$ responses to rank.'' Similarly, the work of \cite{zhu2023principled} develops a theory of RLHF under $k$-wise preferences.

\edit{}{Though this generalization feels natural and has seen practical use, there is little published discussion on these rules, possibly due to the historical absence of a framework that can guide the design and evaluation of $k$-tournaments.} %

\paragraph{$k$-tournament rules and their metric distortion.} In our second contribution, we develop the first $k$-tournament voting rules\edit{}{, generalizing tournament rules,} and show that the additional information afforded to these rules allows them to have improved distortion, both in the deterministic and randomized settings.

\begin{theorem}\label{thm:k-tournament-bounds}
The optimal distortion of deterministic $k$-tournament rules is $3 + \tilde{O}\big(\frac{1}{k^{1/4}}\big)$. The optimal distortion for randomized $k$-tournament rules is $3 - \Omega(1)$, even for $k = 3$.     
\end{theorem}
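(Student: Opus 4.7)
My plan attacks the two parts of the theorem separately but through a common theme: both leverage the richer information in $k$-wise comparisons through the biased metric framework (\Cref{thm:biased-iff}) and the blanketing viewpoint developed for \Cref{thm:2ub}. For the deterministic bound of $3 + \tilde{O}(k^{-1/4})$, I would first generalize the blanketing condition. Specifically, $k$-wise comparisons reveal, for every subset $S$ of at most $k$ candidates and each $c \in S$, the fraction of voters placing $c$ first within $S$. Averaging the blanketing inequalities over random $(k{-}1)$-subsets of the remaining candidates yields a strengthened condition that interpolates between pairwise blanketing and the plurality-based conditions underlying the distortion-$3$ rules of \cite{DBLP:conf/focs/GkatzelisHS20,DBLP:conf/ijcai/KizilkayaK22}. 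A candidate un-blanketed in this stronger sense should satisfy a suitably refined hypothesis of \Cref{lem:post-shift}, yielding distortion $3$ plus an error from the sample approximation. Existence follows by the same graph-theoretic argument as in the deterministic tournament case, since the new condition is only weakening the local inequalities edge-by-edge. Balancing the concentration error against the blanketing slack in the optimization gives the exponent $1/4$ in $\tilde{O}(k^{-1/4})$.

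For the randomized bound at $k = 3$, my plan is to strictly improve upon Maximal Lotteries, whose distortion-$3$ guarantee is tight precisely on ``balanced'' instances where pairwise margins cluster near $1/2$ and the equilibrium distribution is nearly uniform. The key observation is that $3$-wise comparisons reveal the full distribution over the six strict orderings of every triple, strictly refining the three pairwise margins. On a Maximal Lotteries-tight instance, this refined information must expose a triple-level asymmetry: some triple $(a,b,c)$ has top-of-triple preferences that deviate substantially from what pairwise margins alone predict. I would design an auxiliary rule that uses this deviation to shift probability mass toward triple-plurality winners, then mix it with Maximal Lotteries using a small weight $\varepsilon$. Pushing the mixture through the biased metric sufficient condition (\Cref{thm:biased-iff}), the Maximal Lotteries component provides the baseline of $3$ on non-tight instances (with slack that absorbs the $\varepsilon$ perturbation), while on tight instances the auxiliary rule contributes an improvement of $\Omega(1)$ in the relevant integral, yielding distortion $3 - \Omega(\varepsilon)$ uniformly.

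The main obstacles differ between the two parts. On the deterministic side, the delicate step is controlling the approximation error of the sampled blanketing inequalities uniformly over adversarial voter distributions, which requires a union-bound argument balanced against the slack the rule can afford; arguing the optimal tradeoff is what yields the $k^{1/4}$ rate, and getting anything better would likely require a less black-box concentration. On the randomized side, the hard part is pinning down a quantitative structural dichotomy showing that every Maximal Lotteries-tight instance must exhibit a $3$-wise asymmetry of bounded-below magnitude, and then calibrating $\varepsilon$ so that the auxiliary rule's improvement on tight instances is not washed out by degradation of the integral condition on nearby non-tight instances.
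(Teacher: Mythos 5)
There is a genuine gap in both halves of your plan; the missing ingredient is precisely the paper's central tool, stable lotteries and the \nameref{lem:main} (\Cref{lem:main}). For the deterministic part, the obstacle is that the sufficient condition you must ultimately verify (e.g.\ \Cref{cor:lp-condition} or a refinement of \Cref{lem:post-shift}) quantifies over \emph{arbitrary} partitions $I \sqcup J = C$, and the relevant statistics $s_{I \succ j^*}$ and $1 - s_{i^* \succ J}$ involve sets far larger than $k$. Your proposed ``averaging of blanketing inequalities over random $(k-1)$-subsets'' gives no mechanism for controlling these quantities: $s_{i \succ J}$ is an intersection-type (minimum-type) statistic over all of $J$, and its value for a random $(k-1)$-subset $J' \subseteq J$ can be far larger than $s_{i \succ J}$, so there is no concentration statement of the kind you invoke (there is also no sampling noise to concentrate over --- the rule receives exact $k$-wise frequencies). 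The paper instead gets global control from an equilibrium object: it computes a stable $(k-1)$-lottery and a \emph{reverse} stable $(k-1)$-lottery (bottom-of-set information, which your blanketing generalization never uses) after \nameref{box:quasi-kernel}, and the $k^{-1/4}$ rate comes from balancing the pruning loss $\frac{4}{\theta}-3$ against the lottery slack $g(k)=\Theta(\sqrt{\log k / k})$ in \Cref{thm:distortion_simul_veto} --- a trade-off your sketch does not reproduce, since the $g(k)$ term arises from Chernoff bounds on the lottery's own draws inside \Cref{lem:main}, not from any approximation of large-set statistics by small subsets.

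For the randomized part at $k=3$, your argument rests on an unproved structural dichotomy: that every instance where Maximal Lotteries is (near-)tight exhibits a triple-level asymmetry of magnitude bounded below. This is not well-defined as stated (there is no canonical ``prediction'' of triple statistics from pairwise margins against which a deviation could be measured), and no lemma is sketched that would establish it; this is exactly the hard step, not a calibration detail. You also do not ensure that the auxiliary rule has bounded distortion on the instances where Maximal Lotteries has no slack, so the $\varepsilon$-mixture can be pushed above $3$ there. The paper avoids both issues: it uses the consistency dichotomy of \cite{DBLP:journals/jacm/CharikarRWW24} (\Cref{lem:reduce_to_balanced}) to reduce to $\theta$-regular profiles, guarantees bounded distortion of the second component via \nameref{box:quasi-kernel} and \Cref{cor:two-step}, and then beats $3$ on regular profiles by the new inequality $s_{i \succ J} \leq \frac{p_J^{-k}}{k+1}$ for stable $k$-lotteries (\Cref{lem:small_stability_representation}), verified against the LP condition of \Cref{cor:lp-condition}. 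Without a replacement for these steps, your proposal does not yield either bound.
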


An intriguing feature of these bounds is that they actually require even less information than $k$-tournament rules would provide. Rather than the full rankings over tuples of $k$ candidates, they only need to know for each set of $k$ candidates, how often each candidate is a voter's favorite or least favorite within the set. We view this as a potentially practical takeaway from our result: in settings (like \cite{DBLP:conf/nips/Ouyang0JAWMZASR22,zhu2023principled}) where one might hope to achieve better performance by querying voters over sets of $k$ candidates instead of just pairs, asking for their top and bottom choices in the set might be sufficient for high performance, without demanding as high a cognitive load as full rankings.

The key insight in our voting rules lies in a connection to a different problem: \emph{committee selection}. In this problem, rather than choosing a single winner, we would like to choose a committee (or distribution over committees) of $k$ winners such that no outside candidate is preferred over the committee by a large fraction of voters. Then a $k$-committee selection algorithm can be viewed as a $(k + 1)$-tournament rule, but with $k$ winners. We can use this correspondence to leverage techniques in committee selection to develop $k$-tournament voting rules.

We apply stable lotteries \cite{DBLP:journals/teco/ChengJMW20}, a randomized committee selection algorithm originally designed for proportional fairness guarantees, to our study of $k$-tournaments.
A stable $k$-lottery is a distribution on candidates obtained from the equilibrium of a 2-player game, one who picks a committee of $k$ candidates and another, who picks a single candidate, each hoping that voters will prefer their proposed candidate(s) over the other player's proposal.
We first prove a technical lemma on the distribution of stable $k$-lotteries: \nameref{lem:main}. Intuitively, if for a subset $S$ of candidates, many voters think that a single candidate $i \notin S$ is better than all of $S$, then the probability distribution of a stable $k$-lottery cannot put too much weight on candidates in $S$. 
Such a characterization fits well with the biased metric framework, enabling us to analyze voting rules that use stable lotteries as components.

We next propose two $k$-tournament rules: a deterministic rule of \emph{\nameref{box:simul_veto}} (\cpageref{box:simul_veto}) and a randomized rule of \emph{\nameref{box:two_lotteries}} (\cpageref{box:two_lotteries}).

For the deterministic $k$-tournament rule of \nameref{box:simul_veto}, we show that as we increase $k$, its distortion can get arbitrarily close to $3$ (\cref{thm:distortion_simul_veto}), which is the performance of the optimal deterministic ranked voting rule. 
\nameref{box:simul_veto} starts with a pruning procedure on the set of candidates (called \nameref{box:quasi-kernel}, first introduced and used by \cite{DBLP:journals/jacm/CharikarRWW24}) to come up with a shortlist $\hat{C}$. Next we run an elimination process to determine the winner, a $k$-tournament version of Simultaneous Veto \cite{DBLP:conf/sigecom/Kizilkaya023}.
Each candidate in $\hat{C}$ starts out with an initial score which is reduced continuously over time and the last candidate remaining with positive score is declared the winner.
The procedure uses the $k$-tournament information given to it to compute starting scores for all candidates in $\hat{C}$.
In particular, the initial scores are exactly the weights in the probability distribution of a stable $(k - 1)$-lottery on $\hat{C}$. The rate at which the scores are decreased changes dynamically, and is given by the weights of a reverse stable $(k - 1)$-lottery on the set of candidates with non-zero scores. This reverse stable $(k - 1)$-lottery is a stable $(k - 1)$-lottery computed by reversing the voter rankings of candidates.
In analyzing the metric distortion of this rule using the biased metric framework, we need good bounds on the initial scores of candidates and the rates at which the scores decrease. For the initial scores, we need to relate the total probability mass placed by the stable $(k - 1)$-lottery on candidates in a subset $J$ to the fraction of voters who prefer a candidate $i \notin J$ to all of $J$.
For the rates of score decrease, we need to relate the probability mass placed by the reverse stable $(k - 1)$-lottery on candidates in a subset $I$ to the fraction of voters who prefer all of $I$ to a candidate $j \notin I$.
Both these bounds come from \nameref{lem:main}.

For the randomized $k$-tournament rule of \nameref{box:two_lotteries}, we show that for any $k \geq 3$, its distortion is strictly less than $3$ (\cref{thm:distortion_two_lotteries}), which even among ranked voting rules is only surpassed by the recent work of \cite{DBLP:journals/jacm/CharikarRWW24}. In particular, combined with the fact that the distortion of any randomized ($2$-)tournament rules must be at least $3$ \cite{DBLP:conf/sigecom/GoelKM17}, our result shows that a distortion separation between the classes of randomized tournament rules and randomized $3$-tournament rules. The design of \nameref{box:two_lotteries} closely follows the work of \cite{DBLP:journals/jacm/CharikarRWW24}, and the difference is that we use a draw from a stable lottery to replace the favorite candidate of a random voter.

\subsection{Related Work}
\label{sec:related}

\subsubsection{Tournaments and Social Choice}
Tournaments have been a prominent topic in social choice theory. Surveys including \cite{brandt2016tournament,fischer2016weighted,DBLP:conf/ijcai/Suksompong21} provide a comprehensive overview, and here we only discuss some of the closest-in-topic lines of research.

Tournaments can come in two forms---unweighted and weighted---and can be represented by tournament graphs (whose vertices correspond to the candidates) with unweighted or weighted edges. The unweighted version captures majority relations and contains the information of whether a majority of voters prefer $a$ to $b$ for each pair of candidates $a$ and $b$. The weighted version includes additional information on the margins of victory. Fishburn \cite{fishburn1977condorcet} classifies voting rules based on the information they use. He calls a voting rule that relies only on the unweighted tournament a \emph{C1 rule}, and a rule that relies on the weighted tournament (but nothing more) a \emph{C2 rule}. In this language, the novel voting rules in this work, including \nameref{box:unblanketed}, \nameref{box:two_lotteries}, and \nameref{box:simul_veto}, are all C2 rules.

\paragraph{Tournaments from rankings.}
There are $2^{\binom{m}{2}}$ possible unweighted tournament graphs in an election with $m$ candidates (assuming no ties). A line of research has investigated whether all these unweighted tournament graphs can be realized by the pairwise majority relations of $n$ voters with ordinal rankings of the candidates. The answer is positive---any such graph can indeed be realized \cite{mcgarvey1953theorem}. However, quantitatively, realizing such a graph may require $n = \Theta\big(\frac{m}{\log m}\big)$ voters, and this bound is known to be tight \cite{stearns1959voting,erdos1964representation}. Moreover, Alon \cite{alon2002voting} shows that any \mbox{unweighted} tournament graph can be realized with each margin of victory being at least $\frac{1}{2} + \Theta\big(\frac{1}{\sqrt{m}}\big)$ versus $\frac{1}{2} - \Theta\big(\frac{1}{\sqrt{m}}\big)$, and this bound is tight.

\paragraph{Voting rules  with $k$-queries.} The very recent work of \cite{Halpern2024computing} is the only prior work we are aware of that can be viewed as studying $k$-tournament voting rules. Their work characterizes the number of queries needed to implement certain voting rules, where each query returns the preference distribution among voters on a size-$k$ subset of candidates. They show that some scoring rules such as Plurality, and other voting rules such as Single Transferable Vote (STV), are not implementable with any number of queries of bounded size---in our terms, these rules are \textit{not} $k$-tournament rules for any finite $k$. These limitations motivate designing specialized voting rules that effectively leverage $k$-tournament information. In our work, we introduce such a family of voting rules, and show that they can achieve low metric distortion.

\paragraph{$k$-tournaments.} 
Mathematicians have proposed the concept of \emph{$k$-tournaments} \cite{marshall1994properties,marshall2002regular,petrovic2006edge} (also called \emph{$k$-hypertournaments} \cite{gutin1997hamiltonian,guofei2000score,DBLP:journals/jgt/AiGG22}) as a generalization of \mbox{unweighted} tournament graphs. In these hypergraphs, for each size-$k$ subset $S$ of vertices, exactly one of the $k!$ possible orientations of the hyperedge on $S$ is included in the hypergraph. Our definition of $k$-tournament rules is aligned with this existing notion but allows for fractional hyperedges: instead of requiring exactly one orientation to be included, we require the total (nonnegative) weight of these orientations to be one.

\paragraph{Tournaments and reinforcement learning from human feedback.} 
Reinforcement learning from human feedback (RLHF) \cite{christiano2017deep} is a widely used method for finetuning large language models (e.g., \cite{achiam2023gpt,touvron2023llama}).
Here, a pre-trained foundation model is aligned with human values or trained for a specific task using human feedback on model outputs.
Recently, there has been a surge of interest in applying methods from social choice theory to analyzing RLHF \cite{ge2024axioms,conitzer2024social,dai2024mapping,mishra2023ai,zhong2024provable,park2024principled,chakraborty2024maxmin,swamy2024minimaximalist,siththaranjan2024distributional}.
While most applications of RLHF use human feedback in the form of comparisons between pairs of model outputs (i.e., information captured by a tournament), some applications use rankings for small sets of model responses (i.e., $k$-tournament information) \cite{DBLP:conf/nips/Ouyang0JAWMZASR22}.
Recent work \cite{zhu2023principled} analyzed statistical properties of algorithms that use such pairwise and $k$-wise comparison information. 

\subsubsection{Distortion in Social Choice}

Procaccia and Rosenschein \cite{DBLP:conf/cia/ProcacciaR06} were the first to introduce the notion of \emph{distortion} in social choice, with the aim of quantifying the welfare loss associated with various voting rules. Distortion can also be used to quantify the welfare loss from the information limitations inherent in classes of voting rules \cite{DBLP:journals/ai/AnshelevichBEPS18,boutilier2016incomplete}. We refer the reader to the survey of \cite{DBLP:conf/ijcai/AnshelevichF0V21} for a more comprehensive overview of the field.

The metric assumption was first introduced by \cite{DBLP:journals/ai/AnshelevichBEPS18} and enables many social choice rules to achieve constant metric distortion. Most research on metric distortion is focused on ranked voting rules. Among deterministic rules, the optimal metric distortion is $3$ \cite{DBLP:conf/focs/GkatzelisHS20,DBLP:conf/ijcai/KizilkayaK22,DBLP:conf/sigecom/Kizilkaya023}. Among randomized ones, the simple rule of random dictatorship achieves metric distortion $3$ as well \cite{DBLP:journals/jair/AnshelevichP17,DBLP:conf/sigecom/FeldmanFG16}, and the optimal metric distortion is known to be between $2.112$ \cite{DBLP:conf/soda/CharikarR22} and $2.753$ \cite{DBLP:journals/jacm/CharikarRWW24}.

The classes of unweighted and weighted tournament rules are particularly relevant to our work. %
Among deterministic unweighted tournament rules, the classical Copeland rule achieves the optimal metric distortion of $5$ \cite{DBLP:journals/ai/AnshelevichBEPS18}. Among randomized unweighted tournament rules, the C1 Maximal Lotteries rule was recently shown to be optimal, achieving metric distortion of $4$ \cite{DBLP:journals/corr/abs-2403-18340}. As discussed earlier, within the class of weighted tournament rules, the C2 Maximal Lotteries rule is optimal among randomized ones with metric distortion $3$, and our work improves the distortion bound for the optimal deterministic rule from $[3, 2 + \sqrt{5}]$ to $(3.112, 3.932)$.

Several works have quantitatively studied the tradeoff between distortion and the amount of information available to voting rules, such as \cite{DBLP:conf/aaai/Kempe20b,DBLP:journals/jair/AnagnostidesFP22,DBLP:conf/ijcai/Ebadian0M24} within the metric distortion framework and \cite{DBLP:conf/nips/MandalPSW19,DBLP:conf/sigecom/MandalSW20} without the metric assumption. \cite{goel2025metric} use metric distortion to analyze models where voters can deliberate in groups of size $k$, and show that the additional information resulting from deliberations can significantly improve distortion guarantees, even if the groups are quite small.

\subsubsection{Stable Lotteries}

Stable lotteries \cite{DBLP:journals/teco/ChengJMW20} play a crucial role in our design and analysis of $k$-tournament rules. The notion of stable lotteries stems from the line of research on proportional fairness in committee selection and participatory budgeting, and serves as a generalization of core-stable committees. The aim in these settings is to select a subset of candidates to form a committee that can fairly represent all the voters, subject to a budget constraint on the number (or total weight) of selected candidates. A core-stable solution \cite{foley1970lindahl,aziz2017justified,DBLP:conf/sigecom/FainMS18} is one in which for every parameter $\alpha \in (0, 1]$, no $\alpha$ fraction of voters $S$ can find an alternative solution using $\alpha$ fraction of the total budget so that all these voters $S$ strictly prefer the new solution over the old one. Core stability provides a strong guarantee of proportional fairness and implies notions including justified representation (JR) and extended justified representation (EJR) in approval voting \cite{aziz2017justified,DBLP:conf/nips/PierczynskiSP21}. Unfortunately, in many settings, core-stable committees do not exist (or sometimes are not known to exist) \cite{DBLP:conf/sigecom/FainMS18,DBLP:journals/teco/ChengJMW20}, and the notion of stable lotteries is a relaxation that always exists in several of such settings \cite{DBLP:journals/teco/ChengJMW20}. Stable lotteries have been applied as a technical tool in other social choice research such as in showing the existence of approximate core-stable committees \cite{jiang2020approximately} and in deriving the optimal (non-metric) distortion of randomized ranked voting rules \cite{DBLP:journals/teco/EbadianKPS24}. In contrast to our work that establishes new properties of stable-lottery solutions (\nameref{lem:main}), these works apply the existence and proportional fairness guarantees of stable lotteries as a black box.

\section{Preliminaries}
\label{sec:prelim}

\subsection{Social Choice and \texorpdfstring{$k$}{k}-Tournaments}
\label{subsec:prelim/tournament}

We first describe a general classical social choice setting. Let there be a set $V$ of $n$ voters and a set $C$ of $m$ candidates. We often use $v$ to denote a voter, and numbers, indices $i, j, k$, or $a, b$ to denote candidates.\footnote{Note that $k$ is overloaded, used for $k$-tournaments in \Cref{sec:sr-lemma,sec:k_tournament} in line with the committee selection literature, and denoting a candidate in \Cref{sec:biased,sec:det2} in line with \cite{DBLP:journals/jacm/CharikarRWW24}. These uses are strictly separated to avoid confusion.} We specifically use $i^*$ to denote the intended optimal candidate, and $j^*$ to denote the candidate a voting rule intends to choose.

Each voter has their preference over the candidates in the form of a total order. The preferences of all voters form a \emph{preference profile}. The notation $a \succ_v b$ denotes that voter $v$ prefers candidate $a$ to candidate $b$. We may drop $v$ in the shorthand $a \succ b$ when it is clear from context. A \emph{social choice rule}, or \emph{voting rule}, selects a candidate in the set $C$ as the winner. Formally, a deterministic social choice rule maps information on voter preferences to a winning candidate, and a randomized social choice rule maps that information to a distribution over winning candidates.

We use the notation $S_{P}$ to denote the subset of voters that satisfy condition $P$, %
and use $s_{P}$ to denote the fraction of voters that satisfy condition $P$. In other words, $S_{P} = \{v \in V : v \text{ satisfies } P\}$ and $s_{P} = \frac{1}{n}|S_{P}|$. For example, $s_{a \succ b}$ denotes the fraction of voters who prefer $a$ to $b$.

In our specific setting, the information on voter preferences that a social choice rule has access to is in the form of \emph{$k$-tournaments}. 
\begin{definition}[$k$-tournament rules]
A deterministic (respectively, randomized) \emph{$k$-tournament rule} is a function that maps the information
\[
\left(s_{c_1 \succ c_2 \succ \cdots \succ c_k}\right)_{c_1, c_2, \ldots, c_k \in C}
\]
to a winning candidate (respectively, a distribution over winning candidates).
\end{definition}

When discussing $k$-tournament rules, we assume the number of candidates $m$ is at least $k$. (Alternatively, we can redefine $k$-tournament rules to be $m$-tournament rules for $k > m$.) It is easy to see that $k$-tournament rules are also $(k + 1)$-tournament rules. We refer to $2$-tournament rules as \emph{tournament rules}.

\subsection{Metric Distortion}
\label{subsec:prelim/distortion}

In the metric distortion framework, there is a (pseudo-)metric space $(X, d)$ where the points are $X = V \cup C$, and the distance function $d \colon X \times X \to \mathbb{R}_{\geq 0}$ defines the distance between each pair of points in the metric space. It satisfies the following conditions.
\begin{itemize}
\item $d(x, x) = 0$ for all $x \in X$.
\item $d(x, y) = d(y, x)$ for all $x, y \in X$ (symmetry).
\item $d(x, y) + d(y, z) \geq d(x, z)$ for all $x, y, z \in X$ (triangle inequality).
\end{itemize}

Each voter is assumed to prefer closer candidates to farther ones: $a \succ_v b$ if $d(v, a) < d(v, b)$, and break ties in any consistent way if $d(v, a) = d(v, b)$.

The \emph{social cost} of any candidate is the sum of its distances to all voters: $\SC(c) = \sum_{v \in V} d(v, c)$.
\begin{definition}[metric distortion]
The \emph{metric distortion} of a voting rule $f$ is the minimum number $\alpha \in [1, +\infty]$ so that for all metric spaces, the (expected, if $f$ is randomized) social cost of the selected candidate of $f$ is at most $\alpha$ times the social cost of any candidate.
\end{definition}

\section{The Biased Metric Framework}
\label{sec:biased}

The approach used by \cite{DBLP:conf/soda/CharikarR22,DBLP:journals/jacm/CharikarRWW24} was to determine the set of worst-case metrics for any voting rule, given a fixed election instance. They showed that each such metric can be parametrized by a vector $(x_1, \dots, x_m) \in \mathbb{R}_{\geq0}^m$. Semantically $x_j$ represents the distance between $j$ and the optimal candidate $i^*$, and the following metrics minimize the distances between the voters and the optimal candidate, while maximizing the distances from the voters to the other candidates.

\begin{definition}\label{def:biased}
Let $(x_1, \ldots, x_m)  \in \mathbb{R}_{\geq0}^m$  such that $x_{i^*} = 0$ for some $i^*$. Given an election instance, the \emph{biased metric} for the vector $(x_1, \ldots, x_m)$ is defined as follows. For a voter $v$ and candidates $i^*$ and $j$, let
\begin{align*}
d(i^*, v) &= \frac12 \max_{i, j: j \cgeq_v i} (x_j - x_i),\\
d(j, v) - d(i^*, v) &= \min_{k: j \cgeq_v k} x_k.
\end{align*}
\end{definition}

\cite{DBLP:conf/soda/CharikarR22} showed that given any election instance and any voting rule (i.e., a distribution over candidates), some biased metric maximizes the distortion of the voting rule. 

For a fixed biased metric parametrized by $(x_1, \ldots, x_m)$, \cite{DBLP:journals/jacm/CharikarRWW24} showed that the expected difference in social cost between the candidate output by a voting rule giving probability $p_j$ to candidate $j$ and the optimal candidate $i^*$ can be written as 
$$\sum_{j \in C} (\SC(j) - \SC(i^*))p_j  = \int_0^\infty \sum_{j\notin I_t}  s_{I_t \cg j} p_j \dif t$$ 
where $I_t := \{k \in C: x_k \leq t\}$. (Note that similarly we define $J_t := I_t^c = \{k \in C: x_k > t\}$.) On the other hand, we can express the social cost of the optimal candidate as
$$2\SC(i^*) =  \int_0^\infty(1 - s_{\forall j\cg i,  x_j - x_i \leq t}) \dif t.$$

We refer the reader to \cite{DBLP:journals/jacm/CharikarRWW24} for a full explanation, but we offer an intuitive visual way of viewing these integrals (see \Cref{fig:stacked-blocks}), which also suggests ways of working with them to get bounds on the distortion.  %
\begin{figure}[ht!] %
  \centering
  \includegraphics[width=0.9\textwidth]{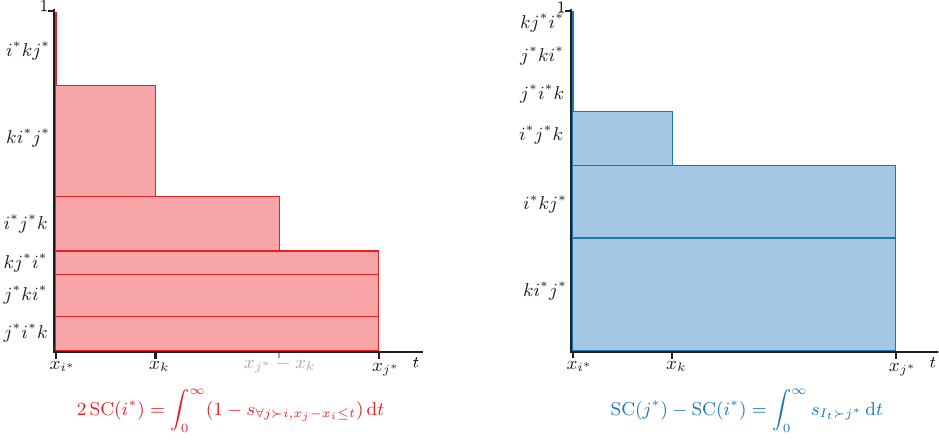} 
  \caption{The biased metric integrals as stacked blocks. Each block is annotated with its corresponding ranking on the left, and voters with the same ranking are aggregated into a single block. The figures are exact representations of the integrals when there are only three candidates $\{i^*, j^*, k\}$, but when there are more candidates, the figure on the left can be viewed as a lower bound, and the figure on the right can be viewed as an upper bound. For example, if $i^*\cg_v j^*\cg_v k$, then $2d(i^*, v)\geq x_{j^*} - x_k$ and $d(j^*,v) - d(i^*, v) \leq x_k$.} 
  \label{fig:stacked-blocks}
\end{figure}

In order to compare $2\SC(i^*)$ and $\SC(j^*) - \SC(i^*)$, we fix a biased metric for which $x_{i^*} = 0$. We represent $2\SC(i^*)$ in the following way. Each voter $v$ corresponds to a block with length $2d(i^*, v) =  \max_{i, j: j \cgeq_v i} (x_j - x_i)$ and height $\frac1n$. $2\SC(i^*)$ is precisely the total area of these blocks. We then stack these blocks on top of each other, in decreasing order of height, which allows us to represent $2\SC(i^*)$ as the integral of a decreasing function. In particular, the function maps $t$ to the fraction of voters (total height of blocks) whose blocks are length greater than $t$. This function is precisely $1 - s_{\forall j\cg i,  x_j - x_i \leq t}$.

Similarly, we can represent $\SC(j^*) - \SC(i^*)$ as the stack of blocks where each voter $v$ corresponds to a block with length $d(j^*, v) - d(i^*, v) = \min_{k: j \cgeq_v k} x_k$, and height $\frac1n$. $\SC(j^*) - \SC(i^*)$ can be represented as the integral of a decreasing function, which maps $t$ the fraction of voters (total height of blocks) whose blocks are length greater than $t$, which is precisely $s_{I_t \cg j^*}$. Note that when considering a distribution over candidates which chooses $j$ with probability $p_j$, we take the weighted average of these functions, which maps $t$ to $\sum_{j\in C} s_{I_t \cg j}p_j$ (note $s_{I_t \cg j} = 0$ for $j \in I_t$).

\vspace{11pt}

Though it is a simple idea, this interpretation of the costs in terms of stacked blocks is a powerful way of understanding the biased metrics and how we can derive distortion bounds from them. 

For example, one tool that was central in finding deterministic voting rules with distortion 3 was the notion of the graph $G(j^*, i^*)$ for a pair of candidates $j^*, i^*$. Defined by \cite{DBLP:conf/ec/MunagalaW19} (and later specialized to the notion of the \textit{domination graph} by \cite{DBLP:conf/focs/GkatzelisHS20}), $G(j^*, i^*)$  is a bipartite graph, each side of which is a copy of the set of voters $V$, and there is an edge from from $v$ to $v'$ if there exists a candidate $k$ such that $j^* \cgeq_v k$ and $k \cgeq_{v'} i^*$. \cite{DBLP:conf/ec/MunagalaW19} showed that if $G(j^*, i^*)$ has a perfect matching, then $\SC(j^*) \leq 3\SC(i^*)$  (\cite{DBLP:conf/focs/GkatzelisHS20} showed that there always exists a candidate $j^*$ such that $G(j^*, i^*)$ has a perfect matching for each $i^* \neq j^*$). 

\cite{DBLP:journals/jacm/CharikarRWW24} showed that the biased metrics can be used to recover the \cite{DBLP:conf/ec/MunagalaW19}'s result, and this approach can be viewed elegantly through the perspective of the stacked blocks. In particular, a matching in $G(j^*, i^*)$ corresponds exactly with a matching of blocks in $\SC(j^*) - \SC(i^*)$ with longer blocks in $2\SC(i^*)$. That is, if $j^* \cgeq_v k$ and $k \cgeq_{v'} i^*$ (so that $(v, v')$ is an edge in $G(j^*, i^*)$)  then
$$d(j^*, v) - d(i^*, v) \leq x_k = x_k - x_{i^*} \leq 2d(i^*, v')$$
meaning that the block corresponding to $v$ in $\SC(j^*) - \SC(i^*)$ is shorter than the block corresponding to $v'$ in $2\SC(i^*)$.

All together, the biased metric framework can be neatly summarized in the following theorem, which bounds the distortion of a (possibly randomized) voting rule on a fixed election instance.

\begin{theorem}[necessary and sufficient integral condition \cite{DBLP:journals/jacm/CharikarRWW24}]\label{thm:biased-iff}
Let $\Delta$ be the distribution over candidates that chooses candidate $j$ with probability $p_j$. $\E_{j^*\sim \Delta}[\SC(j^*)] \leq (1 + 2\lambda)\SC(i^*)$ if and only if for all  $(x_1, \dots, x_m) \in \mathbb{R}_{\geq0}^m$ such that $x_{i^*} = 0$,
\begin{equation}\label{eq:precise-biased}
\int_0^\infty \sum_{j\notin I_t}  s_{I_t \cg j} p_j \dif t \leq  \lambda\int_0^\infty(1 - s_{\forall j\cg i,  x_j - x_i \leq t}) \dif t  
\end{equation}
where $I_t := \{k \in C: x_k \leq t\}$. In particular, $\SC(j^*) \leq (1 + 2\lambda)\SC(i^*)$ if and only if for all  $(x_1, \dots, x_m) \in \mathbb{R}_{\geq0}^m$ such that $x_{i^*} = 0$,
\begin{equation}\label{eq:precise-biased-det}
\int_0^\infty  s_{I_t \cg j^*}  \dif t \leq  \lambda\int_0^\infty(1 - s_{\forall j\cg i,  x_j - x_i \leq t}) \dif t.
\end{equation}
\end{theorem}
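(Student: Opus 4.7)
The plan is to prove the iff by combining two ingredients: a reduction to biased metrics and an explicit layer-cake decomposition of the two social-cost quantities appearing in \Cref{eq:precise-biased}.

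First, I would invoke the structural result of \cite{DBLP:conf/soda/CharikarR22}, which states that for a fixed preference profile and distribution $\Delta$ over candidates, the worst-case value of $\E_{j \sim \Delta}[\SC(j)]/\SC(i^*)$ over all metrics consistent with the profile is attained by some biased metric with parameters $(x_1, \dots, x_m) \in \mathbb{R}_{\geq 0}^m$ and $x_{i^*} = 0$. This collapses the implicit ``for all metrics'' quantifier on the LHS of the iff into ``for all biased metric parameters,'' matching the RHS.

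Second, I would derive the two integral identities on a biased metric via a layer-cake computation. From \Cref{def:biased}, $d(j,v) - d(i^*,v) = \min_{k:\, j \cgeq_v k} x_k$, which equals $\int_0^\infty \oneif\!\left[\min_{k:\, j \cgeq_v k} x_k > t\right] \dif t$. The indicator equals $1$ exactly when every $k$ with $x_k \leq t$ satisfies $k \cg_v j$, i.e., when $I_t \cg_v j$ (which in particular forces $j \notin I_t$). Summing over voters $v$ and then taking the $\Delta$-weighted sum over $j$ yields
\[
\sum_j p_j\bigl(\SC(j) - \SC(i^*)\bigr) \;=\; n \int_0^\infty \sum_{j \notin I_t} p_j\, s_{I_t \cg j}\, \dif t.
\]
Analogously, $2 d(i^*, v) = \max_{i,j:\, j \cgeq_v i}(x_j - x_i)$, and the indicator $\oneif[2 d(i^*, v) > t]$ equals $1 - \oneif[\forall j \cg i,\; x_j - x_i \leq t]$ (the $j = i$ case contributes $0 \leq t$ vacuously, so restricting to $j \cg i$ is harmless). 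A second layer-cake application gives
\[
2\,\SC(i^*) \;=\; n \int_0^\infty \bigl(1 - s_{\forall j \cg i,\; x_j - x_i \leq t}\bigr)\, \dif t.
\]

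Third, the distortion bound $\E[\SC(j^*)] \leq (1+2\lambda)\SC(i^*)$ rearranges to $\sum_j p_j(\SC(j)-\SC(i^*)) \leq 2\lambda \cdot \SC(i^*)$; substituting the two identities above and cancelling the common factor of $n$ produces exactly \Cref{eq:precise-biased}, and the substitution is reversible, giving both directions of the iff. The deterministic statement \Cref{eq:precise-biased-det} follows by specializing $\Delta$ to the point mass on $j^*$, so that the only surviving term in the inner sum is $s_{I_t \cg j^*}$.

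The only non-routine step is the reduction to biased metrics, which is inherited from \cite{DBLP:conf/soda/CharikarR22} as a black box. Granting that, the main technical care lies in correctly identifying the level sets of the biased-metric distances: recognizing that $\min_{k:\, j \cgeq_v k} x_k > t$ is equivalent to the event $\{I_t \cg_v j\}$, and that $\max_{j \cgeq_v i}(x_j - x_i) \leq t$ is the conjunction $\forall j \cg i,\; x_j - x_i \leq t$. Once these set-theoretic identifications are made, the proof reduces to Fubini/layer-cake bookkeeping.
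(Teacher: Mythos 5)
Your proposal is correct and follows essentially the same route the paper takes: the paper also treats the reduction to biased metrics as a black box from \cite{DBLP:conf/soda/CharikarR22} and establishes exactly the two layer-cake identities for $\sum_j p_j(\SC(j)-\SC(i^*))$ and $2\SC(i^*)$ (presented there via the ``stacked blocks'' picture), deferring full details to \cite{DBLP:journals/jacm/CharikarRWW24}. Your identifications of the level sets ($\min_{k:\, j \cgeq_v k} x_k > t$ iff $I_t \cg_v j$, and the complement event for $2d(i^*,v)$) match the paper's, so no further comparison is needed.
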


We can use \Cref{thm:biased-iff} to derive sufficient conditions for a candidate (and thereby a voting rule) to have small distortion. One straightforward and particularly useful condition is below. These conditions will be useful for \cref{sec:sr-lemma,sec:k_tournament}.

\begin{corollary}[\cite{DBLP:conf/soda/CharikarR22}]\label{cor:lp-condition}
 $\SC(j^*) \leq (1 + 2\lambda)\SC(i^*)$ if for all partitions of the candidates $I \sqcup J = C$ such that $i^* \in I$ and $j^* \in J$, 
\begin{equation}\label{eq:weaker-biased-det}
s_{I \cg j^*}  \leq  \lambda(1 - s_{i^* \cg J}).
\end{equation}
Similarly, $\sum_{j\in C} p_j\SC(j) \leq (1 + 2\lambda)\SC(i^*)$ if for all partitions of the candidates $I \sqcup J = C$ such that $i^* \in I$, 
\begin{equation}\label{eq:weaker-biased-rand}
\sum_{j \in J}p_j s_{I \cg j}  \leq  \lambda(1 - s_{i^* \cg J}).
\end{equation}
\end{corollary}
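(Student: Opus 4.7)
The plan is to derive the corollary from the ``if'' direction of \Cref{thm:biased-iff} by instantiating its integrand pointwise in the parameter $t$. Given any biased-metric parameters $(x_1,\ldots,x_m)\in\mathbb{R}_{\geq 0}^m$ with $x_{i^*}=0$, I would use the canonical one-parameter family of partitions $(I_t,J_t)$ that already appears in the theorem, namely $I_t=\{k:x_k\leq t\}$ and $J_t=\{k:x_k>t\}$. Because $x_{i^*}=0$, we have $i^*\in I_t$ for every $t\geq 0$, so the hypothesis of the corollary is applicable at every level.

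For the deterministic case, I would apply the hypothesis to each partition $(I_t,J_t)$. When $j^*\in J_t$ it yields $s_{I_t\cg j^*}\leq \lambda(1-s_{i^*\cg J_t})$; when $j^*\in I_t$ the quantity $s_{I_t\cg j^*}$ is $0$ (no voter strictly prefers $j^*$ to itself), so the same inequality holds trivially. Integrating this pointwise bound over $t\in[0,\infty)$ yields
\[
\int_0^\infty s_{I_t\cg j^*}\dif t \;\leq\; \lambda\int_0^\infty (1-s_{i^*\cg J_t})\dif t.
\]
The randomized case proceeds identically: multiplying the hypothesis by $p_j$ and summing over $j\in J_t$ produces an analogous pointwise bound on $\sum_{j\notin I_t} s_{I_t\cg j}\,p_j$, which I would then integrate.

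The main technical step---and essentially the only nontrivial one---is relating the right-hand side integrand $1-s_{i^*\cg J_t}$ to the one appearing in \Cref{thm:biased-iff}, namely $1-s_{\forall j\cg i,\; x_j-x_i\leq t}$. I would establish this via the voter-set containment $S_{\forall j\cg i,\; x_j-x_i\leq t}\subseteq S_{i^*\cg J_t}$: if a voter $v$ satisfies $x_j-x_i\leq t$ whenever $j\cg_v i$, then for any $k\in J_t$ we have $x_k>t$, so taking $j=k$ and $i=i^*$ rules out $k\cg_v i^*$ (else $x_k\leq t$, a contradiction); hence $i^*\cg_v k$ by strictness of preferences, and $v\in S_{i^*\cg J_t}$. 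This inclusion gives $s_{i^*\cg J_t}\geq s_{\forall j\cg i,\; x_j-x_i\leq t}$, and substituting into the integrated inequality produces exactly the integral condition \eqref{eq:precise-biased-det} (respectively \eqref{eq:precise-biased}), at which point \Cref{thm:biased-iff} delivers the desired distortion bound.

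The only place where care is needed is this voter-set inclusion; the rest is a direct integration of the partition hypothesis against the natural family of level-set partitions supplied by the biased metric. The argument goes through so cleanly precisely because the partition formulation of the corollary is designed to shadow, level by level, the integrand that upper-bounds $2\SC(i^*)$ in \Cref{thm:biased-iff}.
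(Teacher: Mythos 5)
Your proposal is correct and follows essentially the same route as the paper's own proof sketch: instantiate the hypothesis at the level-set partitions $(I_t, J_t)$, integrate, and reduce the comparison of integrands to the voter-set containment $S_{\forall j\cg i,\; x_j - x_i \leq t} \subseteq S_{i^* \cg J_t}$, then invoke \Cref{thm:biased-iff}. Your write-up merely spells out details (the trivial case $j^* \in I_t$ and the randomized summation) that the paper leaves implicit.
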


\begin{proof}[Proof sketch]   
To prove the corollary, it suffices to show that $1 - s_{i^* \cg J_t} \leq 1 - s_{\forall j\cg i,  x_j - x_i \leq t}$. If for a voter $v$, whenever $j\cg i$ we have $x_j - x_i \leq t$, then we must have $i^*\cg j$ for all $j$ such that $x_j > t$. This means exactly that $ S_{\forall j\cg i,  x_j - x_i \leq t} \subseteq S_{i^* \cg J_t}$. 
\end{proof}

In the view of \cite{DBLP:conf/soda/CharikarR22}, the conditions of \Cref{cor:lp-condition} define a linear program which can be used to determine the distortion of a given voting rule.

\NewEnviron{hideproofAone}[1][Proof]{%
  \ifshowproofs
    \begin{proof}[#1]
      \BODY
    \end{proof}
  \else
    \par\noindent\textit{#1.} See \Cref{prf:tourn-biased}.\hfill$\square$\par\nobreak\vskip\medskipamount
  \fi
}

\NewEnviron{hideenremark}[1][Proof]{%
  \par\noindent\textit{#1.} See \Cref{prf:tourn-biased}. \hfill$\square$%
  \par\nobreak\vskip\medskipamount
}

\section{Deterministic Selection in (\texorpdfstring{$2$}{2}-)Tournaments}
\label{sec:det2}

\subsection{Deriving Tournament Conditions with Biased Metrics}\label{sec:tourn-biased}

A natural way of deriving low distortion tournament rules using \Cref{thm:biased-iff} is to find sufficient conditions implying \cref{eq:precise-biased-det} which can be verified solely using information in the tournament graph. If we can show that candidates satisfying these conditions always exist, then we have a voting rule with distortion at most $1 + 2\lambda.$

As an illustrative example, the following sufficient condition can be derived straightforwardly from \Cref{cor:lp-condition}.

\begin{lemma}\label{lem:lp-tournament}
$\SC(j^*) \leq (1 + 2\lambda) \SC(i^*)$ if for all partitions of the candidates $I \sqcup J = C$ such that $i^* \in I$ and $j^* \in J$, 
$$\min_{i\in I} s_{i\cg j^*} \leq \lambda \max_{j\in J} s_{j\cg i^*}.$$
\end{lemma}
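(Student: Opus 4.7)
The plan is to deduce this lemma directly from \Cref{cor:lp-condition} by replacing the two sides of inequality \eqref{eq:weaker-biased-det} with the simpler per-candidate quantities appearing in the lemma's hypothesis. Concretely, the strategy is to show the chain
\[
s_{I \cg j^*} \;\leq\; \min_{i \in I} s_{i \cg j^*} \;\leq\; \lambda \max_{j \in J} s_{j \cg i^*} \;\leq\; \lambda\bigl(1 - s_{i^* \cg J}\bigr),
\]
for every partition $I \sqcup J = C$ with $i^* \in I$ and $j^* \in J$, and then invoke \Cref{cor:lp-condition}.

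The middle inequality is exactly the assumed hypothesis of the lemma, so the real work lies in the two outer set-containment steps, both of which are immediate from the semantics of the $s_P$ notation. For the left inequality, note that any voter who prefers every candidate of $I$ over $j^*$ in particular prefers the minimizing $i \in I$ over $j^*$, so $S_{I \cg j^*} \subseteq S_{i \cg j^*}$ for each $i \in I$; taking the minimum over $i$ yields $s_{I \cg j^*} \leq \min_{i \in I} s_{i \cg j^*}$. For the right inequality, $1 - s_{i^* \cg J}$ counts voters for whom there exists some $j \in J$ with $j \cg i^*$; this event contains the single-candidate event $\{j \cg i^*\}$ for every fixed $j \in J$, so $\max_{j \in J} s_{j \cg i^*} \leq 1 - s_{i^* \cg J}$.

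Combining these three inequalities gives $s_{I \cg j^*} \leq \lambda(1 - s_{i^* \cg J})$ for every admissible partition, which is precisely the condition \eqref{eq:weaker-biased-det} of \Cref{cor:lp-condition}; the conclusion $\SC(j^*) \leq (1 + 2\lambda)\SC(i^*)$ then follows. There is no real technical obstacle here; the point of the lemma is conceptual, providing a condition phrased entirely in terms of pairwise margins $s_{a \cg b}$ (i.e., purely tournament-graph data), which is what we need for designing tournament rules in the subsequent sections.
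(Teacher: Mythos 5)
Your proof is correct and follows exactly the paper's own argument: the same chain $s_{I \cg j^*} \leq \min_{i \in I} s_{i \cg j^*} \leq \lambda \max_{j \in J} s_{j \cg i^*} \leq \lambda(1 - s_{i^* \cg J})$, verified by the same set-containment observations, followed by an appeal to \Cref{cor:lp-condition}. Nothing to add.
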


\begin{hideproofAone}[Proof of \cref{lem:lp-tournament}]
If the condition of the lemma is true, then for all partitions of the candidates $I \sqcup J = C$ such that $i^* \in I$ and $j^* \in J$, we have
$$s_{I\cg j^*} \leq \min_{i\in I} s_{i\cg j^*} \leq \lambda \max_{j\in J} s_{j\cg i^*} \leq \lambda(1 - s_{i^*\cg J}).$$
Therefore, (\ref{eq:weaker-biased-det}) is satisfied, and the result follows from \cref{cor:lp-condition}. 
\end{hideproofAone}

This lemma alone is not especially powerful, but we note a couple of interesting consequences. First, if $s_{i^* \cg j^*} \leq \frac{\lambda}{1 + \lambda}$ then $\SC(j^*) \leq (1 + 2\lambda) \SC(i^*)$ (this result is implicit in \cite{DBLP:conf/ec/MunagalaW19}, and is a corollary of \cite[Corollary~5.1]{DBLP:conf/aaai/Kempe20a}). Second, this lemma can be used to recover a result of \cite{DBLP:conf/ec/MunagalaW19}, that in an election with a \textit{cyclically symmetric} tournament graph, every candidate has distortion at most 3. See \Cref{thm:cyc-sym} for the details.

Using the stacked blocks interpretation of the biased metrics, we can prove the following stronger version of \Cref{cor:lp-condition}, which in turn allows us to improve the distortion for deterministic tournament rules.

\begin{lemma}\label{lem:post-shift}
$\SC(j^*) \leq (1 + 2\lambda) \SC(i^*)$ if there exists a candidate $k$ such that either $k = i^*$ or
$$ s_{i^* \cg j^*} \leq \lambda s_{k\cg i^*},$$
and for all partitions of the candidates $I \sqcup J = C$ such that $i^*, k \in I$ and $j^* \in J$, 
$$s_{I\cg j^*} \leq \lambda (1 - s_{i^*, k\cg J}).$$
\end{lemma}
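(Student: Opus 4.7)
My plan is to apply \Cref{thm:biased-iff} and reduce the lemma to verifying \eqref{eq:precise-biased-det} for every $(x_1, \ldots, x_m) \in \mathbb{R}_{\geq 0}^m$ with $x_{i^*} = 0$. Following the stacked-blocks interpretation, I view the right-hand integral as the total area of blocks, one per voter $v$, of length $L_v := \max_{j \cg_v i}(x_j - x_i)$. The key idea is to split the integration at $x^{\star} := x_k$, apply the first hypothesis for $t \in [0, x^{\star})$ and the second for $t \in [x^{\star}, \infty)$, then use the block interpretation to reconcile the two halves without overcounting. The case $k = i^*$ is immediate from \Cref{cor:lp-condition}, so I assume $k \neq i^*$ from now on.

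For $t \in [0, x^{\star})$, the candidate $k$ need not lie in $I_t$, so the second hypothesis does not directly apply. Instead, I use monotonicity $s_{I_t \cg j^*} \leq s_{i^* \cg j^*}$ (since $i^* \in I_t$) together with the first hypothesis $s_{i^* \cg j^*} \leq \lambda s_{k \cg i^*}$ to obtain $\int_0^{x^{\star}} s_{I_t \cg j^*}\,\dif t \leq \lambda x^{\star} s_{k \cg i^*}$. On the RHS, this amount should be paid for by the initial length-$x^{\star}$ segments of the blocks of voters $v$ with $k \cg_v i^*$, since each such voter has $L_v \geq x^{\star}$.

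For $t \in [x^{\star}, \infty)$, the partition $(I_t, J_t)$ satisfies $i^*, k \in I_t$ (and the integrand is trivially zero when $j^* \in I_t$), so the second hypothesis yields $s_{I_t \cg j^*} \leq \lambda(1 - s_{i^*, k \cg J_t})$. The technical heart of the proof will then be to show
\[
\int_{x^{\star}}^{\infty} (1 - s_{i^*, k \cg J_t})\,\dif t \;\leq\; \int_0^{\infty} s_{A_t}\,\dif t \;-\; x^{\star} s_{k \cg i^*},
\]
where $s_{A_t} := 1 - s_{\forall j \cg i,\, x_j - x_i \leq t}$. To prove this I would partition voters into $T := S_{k \cg i^*}$ and $T^c$. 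For $v \in T$, the event ``$v \notin S_{i^*, k \cg J_t}$'' reduces to ``some $j \in J_t$ satisfies $j \cg_v i^*$''; with $M_v := \max\{x_j : j \cg_v i^*\} \geq x^{\star}$, voter $v$ contributes $(M_v - x^{\star})/n$ to the left integral, and $L_v \geq M_v$. For $v \in T^c$, since $i^* \cg_v k$, the event collapses to ``some $j \in J_t$ satisfies $j \cg_v k$'' (which subsumes the $j \cg_v i^*$ case); with $M'_v := \max\{x_j : j \cg_v k\}$, voter $v$ contributes $\max(M'_v - x^{\star}, 0)/n$, and $L_v \geq M'_v - x^{\star}$. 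Summing over $v$ gives the displayed inequality, and combining with the first half of the integral yields \eqref{eq:precise-biased-det}.

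The main obstacle I expect is precisely this block-accounting: the RHS must cover both the rectangle $\lambda x^{\star} s_{k \cg i^*}$ from the first hypothesis and the bulk integral from the second, without double-counting any voter's block. The stacked-blocks view makes the resolution transparent: for $v \in T$, the length-$x^{\star}$ initial segment of their block pays for the first hypothesis, while the remaining $L_v - x^{\star}$ pays for the second, hitting $L_v$ exactly. This reallocation of ``slack'' in the RHS is precisely how \Cref{lem:post-shift} improves over \Cref{cor:lp-condition}, and it motivates the name ``post-shift.''
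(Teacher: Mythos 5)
Your proposal is correct and follows essentially the same route as the paper: the paper's proof also splits the comparison at $x_k$, pays for $[0,x_k]$ with the blocks of voters in $S_{k \cg i^*}$ via the hypothesis $s_{i^*\cg j^*} \le \lambda s_{k \cg i^*}$, and handles $t \ge x_k$ by casing on each voter's preference between $i^*$ and $k$ — its "sliding" of the blocks of voters with $i^* \cg_v k$ by $x_k$ is exactly your per-voter accounting $L_v \ge M'_v - x_k$ for $v \in S_{i^*\cg k}$ and $L_v \ge M_v$ for $v \in S_{k \cg i^*}$. The only difference is presentational (explicit shifted indicator functions $g_v$ versus your direct integral bookkeeping), so no further comparison is needed.
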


Note the following immediate corollary only using edges in the tournament graph, by the facts that $s_{I\cg j^*} \leq \min_{i\in I} s_{i\cg j^*}$ and $1 - s_{i^*, k\cg J} \geq \max_{j\in J} \{s_{j\cg i^*},  s_{j\cg k}\}.$ This corollary can be used to show a result of \cite{DBLP:journals/ai/AnshelevichBEPS18} that \emph{Ranked Pairs} has distortion at most 3 when there are at most 4 candidates (see \Cref{thm:rp}).

\begin{corollary}\label{cor:post-shift}
$\SC(j^*) \leq (1 + 2\lambda) \SC(i^*)$ if there exists a candidate $k$ such that either $k = i^*$ or
$$ s_{i^* \cg j^*} \leq \lambda s_{k\cg i^*},$$
and for all partitions of the candidates $I \sqcup J = C$ such that $i^*, k \in I$ and $j^* \in J$, 
$$\min_{i\in I} s_{i\cg j^*} \leq \lambda \max_{j\in J} \{s_{j\cg i^*},  s_{j\cg k}\}.$$
\end{corollary}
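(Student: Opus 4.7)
The plan is to derive this corollary as a direct consequence of \Cref{lem:post-shift}, since the first hypothesis (about $s_{i^* \cg j^*}$ and $s_{k \cg i^*}$) is identical in the two statements. The only work is to verify that the edge-based bound of the corollary implies the set-based bound in \Cref{lem:post-shift}, namely $s_{I \cg j^*} \leq \lambda(1 - s_{i^*, k \cg J})$ for every partition $I \sqcup J = C$ with $i^*, k \in I$ and $j^* \in J$.

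First I would bound the left-hand side: for any partition with $j^* \in J$ and any fixed $i \in I$, the event $I \cg j^*$ is a refinement of the event $i \cg j^*$, so
\[
s_{I \cg j^*} \;\leq\; \min_{i \in I} s_{i \cg j^*}.
\]
Next I would lower-bound the right-hand side by noting that if a voter satisfies $i^*, k \cg J$, then for each $j \in J$ this voter prefers both $i^*$ and $k$ to $j$. Consequently $S_{j \cg i^*}$ and $S_{j \cg k}$ are both disjoint from $S_{i^*, k \cg J}$, giving
\[
1 - s_{i^*, k \cg J} \;\geq\; \max_{j \in J}\bigl\{ s_{j \cg i^*},\, s_{j \cg k} \bigr\}.
\]

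Combining these two estimates with the hypothesis $\min_{i \in I} s_{i \cg j^*} \leq \lambda \max_{j \in J}\{s_{j \cg i^*}, s_{j \cg k}\}$ yields exactly $s_{I \cg j^*} \leq \lambda (1 - s_{i^*, k \cg J})$, which verifies the partition condition of \Cref{lem:post-shift}. Since the first hypothesis is carried over verbatim, \Cref{lem:post-shift} applies and delivers $\SC(j^*) \leq (1 + 2\lambda)\SC(i^*)$. There is no real obstacle here, as the argument is a pair of elementary set containments; the content sits entirely in \Cref{lem:post-shift}, and this corollary just rephrases its hypotheses in terms of individual tournament edges so that they can be checked directly from the weighted tournament graph.
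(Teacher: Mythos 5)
Your proof is correct and matches the paper's own justification: the paper derives \Cref{cor:post-shift} from \Cref{lem:post-shift} using exactly the two containments $s_{I\cg j^*} \leq \min_{i\in I} s_{i\cg j^*}$ and $1 - s_{i^*, k\cg J} \geq \max_{j\in J}\{s_{j\cg i^*}, s_{j\cg k}\}$ that you state. Nothing is missing.
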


We present here the following corollary, which we will need in \cref{sec:k_tournament}.
\begin{corollary}\label{cor:two-step}
$\SC(j^*) \leq (\frac{4}{\theta} - 3) \SC(i^*)$ if there exists a candidate $k$ such that $s_{j^* \cg k} \geq \theta$ and either $k = i^*$ or $s_{k \cg i^*} \geq \theta$.
\end{corollary}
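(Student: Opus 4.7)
The plan is to apply Lemma~\ref{lem:post-shift} with $\lambda = \frac{2(1-\theta)}{\theta}$ (so that $1 + 2\lambda = \frac{4}{\theta} - 3$), using the same witness candidate $k$ from the hypothesis of the corollary. This reduces the task to verifying the two bullets in Lemma~\ref{lem:post-shift}.

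For the partition condition, consider any partition $I \sqcup J = C$ with $i^*, k \in I$ and $j^* \in J$. I would bound the left-hand side via $s_{I \cg j^*} \leq s_{k \cg j^*} \leq 1 - \theta$, using $k \in I$ together with the hypothesis $s_{j^* \cg k} \geq \theta$. For the right-hand side, since $j^* \in J$, we have $s_{i^*, k \cg J} \leq s_{i^*, k \cg j^*} \leq s_{k \cg j^*} \leq 1 - \theta$, and hence $1 - s_{i^*, k \cg J} \geq \theta$. Putting these together, $\lambda(1 - s_{i^*,k \cg J}) \geq \lambda \theta = 2(1-\theta) \geq 1 - \theta \geq s_{I \cg j^*}$, as required.

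For the remaining bullet, the case $k = i^*$ is trivial. When $k \neq i^*$, the hypothesis gives $s_{k \cg i^*} \geq \theta$, so $\lambda s_{k \cg i^*} \geq \lambda\theta = 2(1-\theta)$. It therefore suffices to show $s_{i^* \cg j^*} \leq 2(1-\theta)$. I would establish this by splitting voters with $i^* \cg j^*$ based on how they rank $k$ relative to $j^*$: those with $j^* \cg k$ inherit $i^* \cg k$ by transitivity, so contribute at most $s_{i^* \cg k} \leq 1 - \theta$ (from $s_{k \cg i^*} \geq \theta$); those with $k \cg j^*$ contribute at most $s_{k \cg j^*} \leq 1 - \theta$ (from $s_{j^* \cg k} \geq \theta$). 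Summing the two contributions gives $2(1-\theta)$.

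The step I expect to be the main obstacle is this transitivity-based bound $s_{i^* \cg j^*} \leq 2(1-\theta)$: it is the one place where the two hypotheses on $k$ have to be combined nontrivially to control a quantity that is not directly given. Once that is in hand, the rest of the proof is a mechanical substitution into Lemma~\ref{lem:post-shift}.
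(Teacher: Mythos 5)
Your proposal is correct and follows essentially the paper's route: the paper also applies the post-shift machinery (via \Cref{cor:post-shift}, the tournament-edge corollary of \Cref{lem:post-shift}) with the same $\lambda = \frac{2}{\theta}-2$ and the same witness $k$, and your transitivity argument for $s_{i^* \cg j^*} \leq 2(1-\theta)$ is just a direct proof of the cyclic-triple inequality $s_{i^* \cg j^*} + s_{j^* \cg k} + s_{k \cg i^*} \leq 2$ that the paper invokes. The only differences are cosmetic bookkeeping in how the partition condition is bounded.
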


\begin{hideproofAone}[Proof of \Cref{cor:two-step}]
We apply \cref{cor:post-shift} with $\lambda = \frac{2}{\theta} - 2$. First, $s_{i^* \cg j^*} \leq \lambda s_{k\cg i^*}$ holds (when $k \neq i^*$) since
\[
s_{i^* \cg j^*} \leq 2 - s_{j^* \cg k} - s_{k \cg i^*} \leq 2 - 2\theta = \lambda \theta \leq \lambda s_{k \cg i^*}.
\]
For the other condition,
\[
\min_{i\in I} s_{i\cg j^*} \leq s_{i^* \cg j^*} \leq \lambda \theta \leq \lambda s_{j^* \cg k} \leq \lambda \max_{j\in J} \{s_{j\cg i^*},  s_{j\cg k}\}. \qedhere
\]
\end{hideproofAone}

\begin{hideproofAone}[Proof of \Cref{lem:post-shift}]
We note that when $k = i^*$, the lemma degenerates to \Cref{cor:lp-condition}. Henceforth, assume that $k \neq i^*$.

We will start by giving a detailed outline of the proof, using the stacked blocks interpretation of biased metrics. Considering the integral on the right side of \cref{eq:precise-biased}, we know that the integrand is above $s_{k\cg i^*}$ up to $t \leq x_k$. (If $k \cg_v i^*$ then the block corresponding to $v$ has length at least $x_k$.) This bound may be the best we can give in terms of edges in the tournament graph, but there may be much excess area unaccounted for in the interval $t \in [0, x_k]$. The key idea is to ``slide'' the excess blocks (corresponding to voters $v$ for which $i^*\cg_v k$) over by $x_k$ (see \Cref{fig:shifted-blocks}). 

\begin{figure}[h!] %
  \centering
  \includegraphics[width=0.5\textwidth]{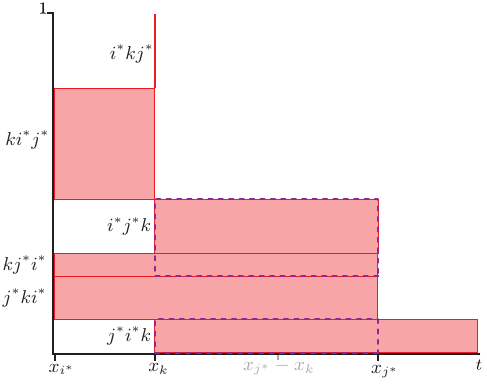} 
  \caption{After sliding the blocks corresponding to voters that prefer $i^*$ over $k$, the height of mass between $[0, x_k]$ is at least $s_{k\cg i^*}$, and the height of mass between $[x_k, x_{j^*}]$ is at least $s_{j^*\cg k}$ (highlighted by the dashed boxes).} 
  \label{fig:shifted-blocks}
\end{figure}

This change has the effect of $k$ becoming a proxy for $i^*$. For example, if we have a voter $v$ such that $j\cg_v k$ for some candidate $j$, then we claim that the block corresponding to $j$ covers the interval $[x_k, x_j]$. Note that this block has length at least $x_j - x_k$. If $k \cg_v i^*$ then $j \cg_v i^*$, and so the block has length at least $x_j$, and therefore covers the entire interval $[0, x_j]$. On the other hand if $k \cg_v i^*$, then ordinarily, the block would cover the interval $[0, x_j - x_k]$, but since it has been shifted over, it instead covers the interval $[x_k, x_j]$. The ultimate result is that at any $t \geq x_k$ and any $j \notin I_t$, $t$ is below all of the blocks corresponding to voters $v\in S_{j \cg k}$. 

More formally, let $f_v(t) = \displaystyle\oneif[0 \leq t \leq  \max_{i, j: j \cgeq_v i} (x_j - x_i)]$ (representing the block corresponding to $v$). Observe that
$$\int_0^\infty(1 - s_{\forall j\cg i,  x_j - x_i \leq t}) \dif t = \int_{0}^\infty \frac1n \sum_{v\in V} f_v(t) \dif t.$$
On the other hand, let 
$$g_v(t) = \begin{cases}
f_v(t)  & k \cg_v i^*\\
f_v(t - x_k)  & i^* \cg_v k\\
\end{cases}.$$
These functions correspond to the ``shifted'' blocks. Now observe that $\int_{0}^\infty f_v(t)\dif t = \int_{0}^\infty g_v(t) \dif t$ for all $v$, and so 
$$\int_0^\infty(1 - s_{\forall j\cg i,  x_j - x_i \leq t}) \dif t = \int_{0}^\infty \frac1n \sum_{v\in V} g_v(t) \dif t.$$
Observe that if $v \in S_{k\cg i^*}$, then $g_v(t) = f_v(t) = 1$ for all $t \in [0, x_k]$. Therefore,
$$\int_{0}^\infty \frac1n \sum_{v\in V} g_v(t) \dif t \geq x_k \cdot s_{k \cg i^*} + \int_{x_k}^\infty \frac1n \sum_{v\in V} g_v(t) \dif t.$$
(In fact, this is true with equality, since $g_v(t) = 0$ for  $v \in S_{i^*\cg k}$ and $t \in [0, x_k]$.) 

Now, we claim that for $t \geq x_k$, if $v \notin S_{k, i^* \cg J_t}$, then $g_v(t) = 1$. If $v \notin S_{k, i^* \cg J_t}$, then there exists some $j$ such that $x_j > t$ and either $j \cg_v i^*$ or $j \cg_v k$. We will consider two cases, depending on $v$'s preference between $k$ and $i^*$. 

Suppose that $k \cg_v i^*$. Then $g_v(t) = f_v(t)$ and in either case, we have that $j \cg_v i^*$. It follows that $f_v$ is 1 on the interval $[0, x_j]\ni t$, so $g_v(t) = 1$ as claimed. 

Now suppose that $i^* \cg_v k$. Then $g_v(t) = f_v(t - x_k)$.If $j \cg_v k$, then $f_v$ is $1$ on the interval $[0, x_j - x_k]$, and so $g_v$ is $1$ on the interval $[x_k, x_j] \ni t$.  If $j \cg_v i^*$, then again $f_v$ is $1$ on the interval $[0, x_j]$, which is even stronger than the previous case. Thus, $g_v(1) = 1$ as claimed. 

It follows that
$$\int_{x_k}^\infty \frac1n \sum_{v\in V} g_v(t) \dif t \geq \int_{x_k}^\infty (1 - s_{i^*, k \cg J_t}) \dif t .$$

Putting everything together, we have
$$\int_0^\infty(1 - s_{\forall j\cg i,  x_j - x_i \leq t}) \dif t  \geq x_k \cdot s_{k \cg i^*} + \int_{x_k}^\infty (1 - s_{i^*, k \cg J_t}) \dif t.$$
On the other hand, 
$$\int_0^\infty  s_{I_t \cg j^*} \dif t \leq x_k \cdot s_{i^*\cg j^*} + \int_{x_k}^\infty s_{I_t \cg j^*} \dif t.$$
Comparing terms, it is clear that the conditions of the lemma imply \cref{eq:precise-biased-det}, and so the result follows.
\end{hideproofAone}

\begin{remark}\label{rem:ub-hard}
As an aside, we mention one challenge in improving \Cref{lem:post-shift} and getting a stronger upper bound. A natural temptation would be to improve \Cref{lem:post-shift} similarly to how it improves \Cref{lem:lp-tournament}. We could introduce two candidates $k_1$ and $k_2$, or a sequence of candidates $k_1, k_2, \dots, k_r$. One could hope to show that after shifting the blocks that make up $2\SC(i^*)$, we can ensure that between the intervals $[x_{k_u}, x_{k_{u+1}}]$, the height of mass is at least $s_{k_{u+1}\cg k_u}$ (akin to \Cref{fig:shifted-blocks}). The issue is that with more than one intermediate candidate, a certain voter may be overextended across the intervals she must count towards. For example, if a voter has the preference $k_1 \cg i^* \cg k_2 \cg j^*$, then she must count towards $[x_{i^*}, x_{k_1}]$ and $[x_{k_2}, x_{j^*}]$, but the length of her block is only the maximum length of these intervals, not the sum. We note that if we could ignore this issue and suppose that the argument works, it would actually imply that \textit{Ranked Pairs} has distortion 3. The fact that the first step of the argument works is the reason why Ranked Pairs has distortion 3 when there are at most 4 candidates (see \Cref{thm:rp}), and the plausibility of this approach suggests why it was reasonable to conjecture that Ranked Pairs has distortion 3 in general. 
\end{remark}

We can use \Cref{lem:post-shift} to find sufficient conditions for a low distortion candidate based on local properties of the tournament graph.

\begin{corollary}\label{cor:local}
$\SC(j^*) \leq (1 + 2\lambda) \SC(i^*)$ if there exists a candidate $k$ such that  $k = i^*$ or $s_{k \cg i^*} \geq \frac1\lambda$, and one of the following conditions holds.
\begin{enumerate}[label=\textnormal{(\Roman*)}]

    \item\label{enum:1} $s_{k \cg j^*} \leq \frac{\lambda}{1 + \lambda}$.

    \item\label{enum:2} There exists a candidate $\ell \neq k$ such that  $ s_{k \cg j^*} \leq \lambda s_{\ell \cg k}$, and  $s_{\ell \cg j^*} \leq \lambda s_{j^* \cg k}$.
\end{enumerate}
\end{corollary}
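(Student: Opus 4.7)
My plan is to apply \Cref{lem:post-shift} directly, using the candidate $k$ from the hypothesis as the auxiliary. The lemma's first condition, $s_{i^*\cg j^*} \leq \lambda s_{k\cg i^*}$, is immediate: when $k = i^*$ it degenerates, and otherwise the assumption $s_{k\cg i^*} \geq 1/\lambda$ gives $\lambda s_{k\cg i^*} \geq 1 \geq s_{i^*\cg j^*}$. So the real work lies in verifying the partition condition $s_{I\cg j^*} \leq \lambda(1 - s_{i^*, k\cg J})$ for every partition $I \sqcup J = C$ with $i^*, k \in I$ and $j^* \in J$.

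Under hypothesis (I), I would bound $s_{I\cg j^*} \leq s_{k\cg j^*}$ (using $k \in I$). For the right side, any voter in $S_{i^*, k\cg J}$ prefers $k$ to every candidate in $J$, in particular to $j^*$; hence $S_{i^*, k\cg J} \subseteq S_{k\cg j^*}$ and $1 - s_{i^*, k\cg J} \geq 1 - s_{k\cg j^*}$. The required inequality then reduces to $(1+\lambda)\, s_{k\cg j^*} \leq \lambda$, which is exactly the hypothesis $s_{k\cg j^*} \leq \lambda/(1+\lambda)$.

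Under hypothesis (II), the key observation is that $S_{i^*, k\cg J}$ is disjoint from $S_{j^*\cg k}$ (since $j^* \in J$), and likewise from $S_{\ell\cg k}$ whenever $\ell \in J$; so $1 - s_{i^*, k\cg J}$ is at least $s_{j^*\cg k}$, and at least $s_{\ell\cg k}$ in the respective case. I would then split on where $\ell$ lies in the partition. If $\ell \in I$, bound $s_{I\cg j^*} \leq s_{\ell\cg j^*} \leq \lambda s_{j^*\cg k} \leq \lambda(1 - s_{i^*, k\cg J})$ using the second inequality of (II). If $\ell \in J$, bound $s_{I\cg j^*} \leq s_{k\cg j^*} \leq \lambda s_{\ell\cg k} \leq \lambda(1 - s_{i^*, k\cg J})$ using the first. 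In both cases the partition condition holds, and \Cref{lem:post-shift} yields the desired distortion bound.

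There is no substantial obstacle; the corollary is a careful repackaging of \Cref{lem:post-shift} into local (edge-level) tournament conditions. The only subtlety is the case split in (II) based on which side of the partition $\ell$ falls on, together with the observation that in each case one of the two hypotheses of (II), paired with the appropriate disjointness against $S_{i^*, k\cg J}$, supplies exactly the required slack.
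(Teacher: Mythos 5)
Your proof is correct and follows essentially the same route as the paper: the paper verifies the hypotheses of \Cref{cor:post-shift} (the edge-level restatement of \Cref{lem:post-shift}) with the same choice of $k$, the same trivial verification of the first condition via $\lambda s_{k\cg i^*} \geq 1 \geq s_{i^*\cg j^*}$, and the same case split on whether $\ell$ lies in $I$ or $J$ for hypothesis (II). Your only cosmetic difference is that you re-derive the bounds $s_{I\cg j^*} \leq \min_{i\in I} s_{i\cg j^*}$ and $1 - s_{i^*,k\cg J} \geq \max\{s_{j^*\cg k}, s_{\ell\cg k}\}$ directly inside \Cref{lem:post-shift} rather than citing \Cref{cor:post-shift}.
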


\begin{figure}[h!] %
  \centering
  \includegraphics[width=0.9\textwidth]{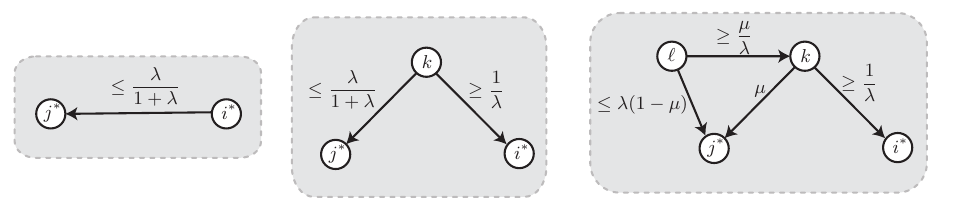} 
  \caption{Examples of local structures where \Cref{cor:local} shows that $\SC(j^*) \leq (1 + 2\lambda)\SC(i^*)$. } 
  \label{fig:local}
\end{figure}

\begin{hideproofAone}[Proof of \Cref{cor:local}]
We will prove that each of the conditions is sufficient one at at time, by showing that they imply the conditions of \Cref{cor:post-shift} (with the same choice of $k$).  First, note that $k = i^*$ or
$$s_{i^*\cg j^*} \leq 1 \leq \lambda s_{k \cg i^*}$$
so the first condition is satisfied. It remains to check the second condition given \ref{enum:1} and  \ref{enum:2}.

\ref{enum:1}: 
If $s_{k \cg j^*} \leq \frac{\lambda}{1 + \lambda}$ then $s_{j^* \cg k} \geq \frac{1}{1 + \lambda}$. Then for all partitions $I\sqcup J$ with $i^*, k^* \in I$, $j^*\in J$, we have
$$\min_{i\in I} s_{i\cg j^*} \leq s_{k \cg j^*} \leq \frac{\lambda}{1 + \lambda} \leq \lambda \cdot s_{j^*\cg k} \leq \lambda \max_{j\in J} \{s_{j\cg i^*},  s_{j\cg k}\}$$
as needed. 

\ref{enum:2}:
Consider a partition $I\sqcup J$ with $i^*, k^* \in I$, $j^*\in J$. If $\ell \in I$, then 
$$\min_{i\in I} s_{i\cg j^*} \leq s_{\ell \cg j^*} \leq \lambda s_{j^* \cg k} \leq \lambda\max_{j\in J} \{s_{j\cg i^*},  s_{j\cg k}\}.$$
If $\ell \in J$, then
\[
\min_{i\in I} s_{i\cg j^*} \leq s_{k \cg j^*} \leq \lambda s_{\ell \cg k} \leq \lambda \max_{j\in J} \{s_{j\cg i^*},  s_{j\cg k}\}. \qedhere
\]
\end{hideproofAone}

The first part of \Cref{cor:local} is the meat of the proof that \cite{DBLP:conf/ec/MunagalaW19}'s rule has distortion $2 + \sqrt{5}$, and is also a direct corollary of \cite[Corollary~5.1]{DBLP:conf/aaai/Kempe20a}). \cite{DBLP:conf/ec/MunagalaW19} defined the following weighted version of the well-known \emph{uncovered set}. 

\begin{definition}[covering]
The candidate $i^*$ \emph{$\beta$-covers} $j^*$ if $s_{i^* \cg j^*} > \beta$ and for all $k$ such that $s_{k\cg i^*} \geq \beta$, we have $s_{k \cg j^*} > \beta$. Conversely, $j^*$ is \emph{$\beta$-uncovered} if for all $i^*\neq j^*$, there exists a candidate $k$ such that $k = i^*$ or $s_{k\cg i^*} \geq \beta$, and  $s_{k \cg j^*} \leq \beta$.
\end{definition}

\cite[Lemma~3.2]{DBLP:conf/ec/MunagalaW19} showed that for any $\beta \in [\frac12, 1]$ there is a candidate that is $\beta$-uncovered. Though their proof was indirect, it is not hard to see that
$$\arg\min_{j\in C} |\{k \in C: s_{k\cg j} > \beta\}|$$
must be $\beta$-uncovered, since if $i^*$ covers $j^*$ then $\{k \in C: s_{k\cg j^*} > \beta\} \supseteq \{i^*\} \cup \{k \in C: s_{k\cg i^*} > \beta\}.$ So in fact, a candidate in the $\beta$-uncovered set can be found with a weighted version of \emph{Copeland's rule} (which chooses the candidate that beats the most candidates in a majority vote).  

Now, we can see that if $j^*$ is a $\beta$-uncovered candidate, then it satisfies \Cref{cor:local} via \ref{enum:1}, so long as $ \frac{1}{\lambda} \leq \beta \leq \frac{\lambda}{1 + \lambda}$. Therefore, choosing $\lambda = \frac{1 + \sqrt{5}}{2}$ (which satisfies $\frac{1}{\lambda} = \frac{\lambda}{1 + \lambda}$), there exists a candidate with distortion at most $1 + 2\lambda = 2 + \sqrt{5}$.  

However, by folding in \ref{enum:2}, we can improve this to $1 + 2\lambda$ where $\lambda$ is the real root of $\lambda^3 - \lambda^2 - 1$.

\NewEnviron{hideproofAtwo}[1][Proof]{%
  \ifshowproofs
    \begin{proof}[#1]
      \BODY
    \end{proof}
  \else
    \par\noindent\textit{#1.} See \Cref{prf:tourn-ub}. \hfill$\square$%
  \par\nobreak\vskip\medskipamount
  \fi
}

\subsection{Upper Bound: \texorpdfstring{\nameref{box:unblanketed}}{Unblanketed Set}}\label{sec:tourn-ub}

The key idea is to create a stronger version of the covering condition, called ``blanketing'', built using the conditions of \Cref{cor:local}. We take a similar approach to \cite{DBLP:conf/ec/MunagalaW19} to prove that there always exists a candidate that is unblanketed: We suppose that there exists some cycle where each candidate in the cycle blankets its neighbor, and then derive a contradiction. In our case, due to the added complexity in the condition, we need to construct a second cycle using the first in order to derive a contradiction.

\begin{definition}[blanketing]
The candidate $i^*$ \textit{$(\alpha, \beta)$-blankets} $j^*$ 
if for all candidates $k$ such that either $k = i^*$ or $s_{k \cg i^*} \geq \alpha$, all of the following hold:
\begin{enumerate}[label=\textnormal{(\Roman*)}] 
    \item $s_{k \cg j^*} > \beta$.
    \item $s_{k\cg j^*} > \alpha$ or for all $\ell \neq k, j^*$, if $s_{k \cg \ell} \leq \beta$ then $s_{j^* \cg \ell} < \beta$.
\end{enumerate}

Conversely, $j^*$ is \textit{$(\alpha, \beta)$-unblanketed} if for all $i^* \neq j^*$, there exists a candidate $k \in C$ such that either $k = i^*$ or $s_{k \cg i^*} \geq \alpha$, and one of the following conditions holds.
\begin{enumerate}[label=\textnormal{(\Roman*)}]  
    \item $s_{k\cg j^*} \leq \beta$.
    \item $s_{k\cg j^*} \leq \alpha$ and there exists $\ell \neq k, j^*$ such that $s_{k\cg \ell} \leq \beta \leq s_{j^*\cg \ell}$.
\end{enumerate}
\end{definition}

\begin{mybox}[label={box:unblanketed},nameref={Unblanketed Set}]{The \underline{Unblanketed Set} Voting Rule (with Parameters $\alpha \geq \beta > \frac12$).}
\begin{itemize}
\item[$\triangleright$] Let the winner be any $j^* \in C$ that is $(\alpha, \beta)$-unblanketed.
\end{itemize}
\end{mybox}

\begin{lemma}\label{lem:ubk}
For all $\alpha \geq \beta > \frac12$, there exists a candidate $j^*$ that is $(\alpha, \beta)$-unblanketed.
\end{lemma}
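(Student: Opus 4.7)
The plan is to argue by contradiction. Suppose every candidate is $(\alpha, \beta)$-blanketed; for each $j$ pick some $f(j) \neq j$ that blankets $j$, and iterate $f$ to extract a cycle $j_0, j_1, \ldots, j_r = j_0$ (indices mod $r$) in which $j_{t+1}$ blankets $j_t$. The case $r = 2$ is immediate: applying condition (I) with $k = i^*$ in both directions yields $s_{j_1 \cg j_0} > \beta$ and $s_{j_0 \cg j_1} > \beta$, summing to more than $1$. So assume $r \geq 3$.

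The core of the proof is a \emph{cascade} argument built on the potential $\phi(j) := |\{\ell \neq j : s_{j \cg \ell} > \beta\}|$. For each $t$ I aim to produce some $s_t \in \{0, 1, \ldots, r - 2\}$ with $\phi(j_{t+1}) > \phi(j_{t - s_t})$. At \emph{level $0$}, apply condition (II) to the blanketing of $j_t$ by $j_{t+1}$ with $k = i^* = j_{t+1}$. The contrapositive of case (b) says $s_{j_t \cg \ell} \geq \beta$ implies $s_{j_{t+1} \cg \ell} > \beta$ for $\ell \neq j_t, j_{t+1}$; combined with $j_{t+1} \notin \{\ell : s_{j_t \cg \ell} \geq \beta\}$ (which follows from $s_{j_{t+1} \cg j_t} > \beta$ and $\beta > \tfrac12$) and $j_t \in \{\ell : s_{j_{t+1} \cg \ell} > \beta\}$ from condition (I), this yields $\phi(j_{t+1}) > \phi(j_t)$, and I set $s_t := 0$. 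Case (a) gives $s_{j_{t+1} \cg j_t} > \alpha$, placing $j_{t+1}$ in the set of valid $k$'s for the blanketing of $j_{t-1}$ by $j_t$ at \emph{level $1$}. Iterating, at \emph{level $s$} I apply condition (II) to the blanketing of $j_{t-s}$ by $j_{t-s+1}$ with $k = j_{t+1}$, where case (b) halts the cascade (producing $s_t := s$ and $\phi(j_{t+1}) > \phi(j_{t-s})$ by the same set-inclusion argument) and case (a) enables level $s+1$. The cascade must halt by level $r - 2$: at that level case (a) would force $s_{j_{t+1} \cg j_{t+2}} > \alpha$, contradicting $s_{j_{t+2} \cg j_{t+1}} > \beta$ from condition (I) on the cycle, since $\alpha + \beta > 1$.

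Finally, I assemble the information: build an auxiliary directed graph $H$ on the vertex set $\{j_0, \ldots, j_{r-1}\}$ by adding one edge $j_{t - s_t} \to j_{t+1}$ per $t$. Every edge of $H$ strictly increases $\phi$, and every vertex has in-degree exactly $1$ (only $t = u - 1$ targets $j_u$). Since each vertex has a unique predecessor, iterating the predecessor map from any starting vertex must produce a directed cycle in $H$, along which $\phi$ would both strictly increase and return to its initial value---an impossibility. The main obstacle is engineering the cascade so that local applications of condition (II) aggregate into this global contradiction; the key insight is that $\alpha + \beta > 1$ guarantees the cascade terminates at a valid index, yielding an in-degree-$1$ auxiliary graph whose inevitable cycle supplies the contradiction.
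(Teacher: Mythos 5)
Your proof is correct, and it shares its first half with the paper's argument while finishing differently. The paper also assumes every candidate is blanketed, extracts a cycle $\mathcal{C}$ of blanketing relations, and then, for each cycle vertex $k$, walks along $\mathcal{C}$ to the first vertex $j^*$ with $s_{k \succ j^*} < \alpha$, maintaining witness-validity exactly as your cascade does (your halting argument at level $r-2$, via $s_{j_{t+1}\succ j_{t+2}} \le 1-\beta < \alpha$, is the paper's observation $s_{k\succ k'} < 1-\beta < \alpha$ for the blanketer $k'$ of $k$). So your cascade reproduces the paper's first minimality step, and your auxiliary graph $H$ is precisely (the reverse of) the paper's second cycle $\hat{\mathcal{C}}$ of pairs where branch (II)(b) of blanketing must hold. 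The endgames genuinely differ: the paper runs a second minimality argument along $\hat{\mathcal{C}}$ to exhibit a concrete $\ell$ with $s_{k\succ\ell}\le\beta< s_{j^*\succ\ell}$, directly violating the blanketing of $j^*$ by $i^*$; you never exhibit such an $\ell$, instead converting each branch-(b) conclusion into the strict inequality $\phi(j_{t+1})>\phi(j_{t-s_t})$ for the potential $\phi(j)=|\{\ell\neq j: s_{j\succ\ell}>\beta\}|$ (the same weighted-Copeland quantity the paper invokes for the uncovered set) and deriving the contradiction by counting. Your endgame avoids the somewhat delicate second cycle-traversal and is robust; it could even be streamlined---since every cycle vertex acquires a predecessor of strictly smaller $\phi$, taking the vertex minimizing $\phi$ over the cycle gives the contradiction immediately, with no need for the in-degree-one graph and its directed cycle. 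Two harmless redundancies: the separate $r=2$ case is subsumed by the general cascade, and once the cascade reaches level $r-2$, condition (I) for the witness $j_{t+1}$ against $j_{t+2}$ already contradicts the cycle edge $s_{j_{t+2}\succ j_{t+1}}>\beta$, so framing that level as ``case (b) halts'' is more than you need.
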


\begin{hideproofAtwo}[Proof of \Cref{lem:ubk}]
Suppose towards a contradiction that we have an election such that every candidate is $(\alpha, \beta)$-blanketed by some other candidate. 

Then we can construct a cycle $\mathcal{C}$ of candidates such that for each edge $(i, j) \in \mathcal{C}$, $i$ $(\alpha, \beta)$-blankets $j$. (See the left of \Cref{fig:cycles}) In particular, this implies that for each such edge, $s_{i \cg j} > \beta$. 

\begin{figure}[h!] %
  \centering
  \includegraphics[width=0.9\textwidth]{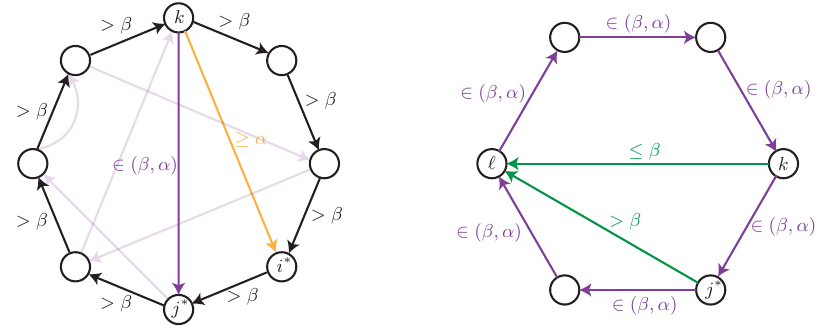} 
  \caption{Left: the cycle $\mathcal{C}$. Right: the cycle $\hat{\mathcal{C}}$. We derive a contradiction by showing that some $i^*$ cannot blanket some $j^*$, using $k$ and $\ell$ as labeled.} 
  \label{fig:cycles}
\end{figure}

Fix a candidate $k$ in the cycle. We claim that there exists some edge $(i^*, j^*) \in \mathcal{C}$ such that  $s_{k\cg j^*} < \alpha$ and either $k = i^*$ or $s_{k \cg i^*} \geq \alpha$. If $(k', k) \in \mathcal{C}$, then $s_{k\cg k'} < 1 - \beta < \alpha$. Therefore, there must be some candidate $j^*$ whose distance from $k$ along the cycle is minimal, and for which $s_{k \cg j^*} < \alpha$. The minimality of $j^*$ implies that either $(k, j^*)\in \mathcal{C}$,  or if $(i^*, j^*) \in \mathcal{C}$, then $s_{k \cg i^*} \geq \alpha$ as claimed.

Since $i^*$ $(\alpha, \beta)$-blankets $j^*$, it follows that $s_{k \cg j^*} > \beta$. Now, we can iterate this analysis, repeatedly replacing $j^*$ with $k$, to construct a second cycle of candidates $ \hat{\mathcal{C}}$ (shown on the right of \Cref{fig:cycles}).

Now fix a candidate $\ell$ in $ \hat{\mathcal{C}}$. By a similar argument to the above, we claim that there exists an edge $(k, j^*) \in \hat{\mathcal{C}}$ such that $\ell \neq k, j^*$ and $s_{k \cg \ell} \leq \beta < s_{j^* \cg \ell}.$ If $(\ell', \ell) \in \hat{\mathcal{C}}$ then $s_{\ell' \cg \ell} > \beta$. Therefore there must be some candidate $j^*$ whose distance from $\ell$ along $\hat{\mathcal{C}}$ is minimal for which $s_{j^* \cg \ell} > \beta$. Note that $(\ell, j^*) \notin \hat{\mathcal{C}}$, otherwise $s_{j^* \cg \ell} < 1 - \beta < \beta$. Therefore, if $(k, j^*) \in  \hat{\mathcal{C}}$, then $\ell \neq k, j^*$, and by the minimality of $j^*$, $s_{k \cg \ell} \leq \beta$ as claimed. 

The existence of $\ell$ violates the condition that $i^*$ blankets $j^*$, and so we have a contradiction.
\end{hideproofAtwo}

\begin{theorem}\label{thm:2ub}
Let $\lambda$ be the real root of $x^3 - x^2 - 1$. The distortion of \nameref{box:unblanketed} is at most $1 + 2\lambda \approx 3.93114$ with parameters $\alpha = \frac{1}{\lambda}\approx 0.68233$  and $\beta = 1 - \frac{1}{\lambda^2} = 2 - \lambda \approx 0.53443$.
\end{theorem}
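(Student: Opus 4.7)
The plan is to combine two existing results: \Cref{lem:ubk}, which guarantees the existence of an $(\alpha,\beta)$-unblanketed candidate for the chosen parameters, and \Cref{cor:local}, which gives the distortion bound based on the local tournament conditions. The strategy is to show that the definition of $(\alpha,\beta)$-unblanketedness with $\alpha = 1/\lambda$ and $\beta = 2 - \lambda$ precisely matches (or refines) the sufficient conditions of \Cref{cor:local} for the parameter $\lambda$ satisfying $\lambda^3 - \lambda^2 - 1 = 0$.

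First I would verify the admissibility of the parameters for \Cref{lem:ubk}, namely that $\alpha \geq \beta > \tfrac{1}{2}$: the inequality $1/\lambda \geq 2 - \lambda$ rearranges to $(\lambda-1)^2 \geq 0$, and $2-\lambda > \tfrac{1}{2}$ reduces to $\lambda < \tfrac{3}{2}$, which holds since $\lambda \approx 1.4656$. Thus \Cref{lem:ubk} yields a candidate $j^*$ that is $(\alpha,\beta)$-unblanketed, and the rule outputs such a $j^*$.

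Next I would fix an arbitrary candidate $i^* \neq j^*$ and show that the distortion bound $\SC(j^*) \leq (1+2\lambda)\SC(i^*)$ holds by directly matching up the conditions. Unblanketedness supplies a witness $k$ with either $k = i^*$ or $s_{k \cg i^*} \geq \alpha = 1/\lambda$, which is exactly the precondition on $k$ in \Cref{cor:local}. Now two subcases follow the two disjuncts in the definition of unblanketedness. In case (I) of unblanketedness, $s_{k \cg j^*} \leq \beta = 2-\lambda$; to activate case \ref{enum:1} of \Cref{cor:local} we need $\beta \leq \lambda/(1+\lambda)$, which rearranges to $\lambda^2 \geq 2$, and this follows from the defining equation by rewriting $\lambda^2(\lambda-1) = 1$ as $\lambda^2 = 1/(\lambda-1)$ together with $\lambda \leq 3/2$. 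In case (II) of unblanketedness we obtain $\ell \neq k, j^*$ with $s_{k \cg \ell} \leq \beta$ (so $s_{\ell \cg k} \geq 1 - \beta = \lambda - 1$) and $s_{j^* \cg \ell} \geq \beta$ (so $s_{\ell \cg j^*} \leq \lambda - 1$), together with $s_{k \cg j^*} \leq \alpha = 1/\lambda$ and $s_{j^* \cg k} \geq 1 - 1/\lambda$. The two inequalities needed for case \ref{enum:2} of \Cref{cor:local} then follow: $s_{k \cg j^*} \leq \lambda s_{\ell \cg k}$ reduces to $1/\lambda \leq \lambda(\lambda - 1)$, i.e.\ $\lambda^2(\lambda - 1) \geq 1$, which is exactly the defining equation; and $s_{\ell \cg j^*} \leq \lambda s_{j^* \cg k}$ reduces to $\lambda - 1 \leq \lambda(1 - 1/\lambda) = \lambda - 1$.

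There is no real obstacle here: the heavy lifting has been done by the stacked-blocks argument behind \Cref{lem:post-shift} and by the combinatorial argument of \Cref{lem:ubk}. The role of this proof is to choose $\lambda, \alpha, \beta$ so that the two thresholds from \Cref{cor:local} are simultaneously saturated, and the defining cubic $\lambda^3 - \lambda^2 - 1 = 0$ is precisely the system $\alpha = 1/\lambda$, $\beta = 2 - \lambda$, $\beta \leq \lambda/(1+\lambda)$, $(1-\alpha)\lambda \geq 1 - \beta$, $\lambda s_{\ell \cg k} \geq s_{k \cg j^*}$ being tight. Once the parameter verification is done, applying \Cref{cor:local} for every $i^* \neq j^*$ directly yields the claimed distortion bound of $1 + 2\lambda$.
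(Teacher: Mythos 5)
Your proposal is correct and follows essentially the same route as the paper: invoke \Cref{lem:ubk} to obtain an $(\alpha,\beta)$-unblanketed winner, then match the two disjuncts of unblanketedness to conditions \ref{enum:1} and \ref{enum:2} of \Cref{cor:local}, with the cubic $\lambda^3-\lambda^2-1=0$ arising exactly from making the binding constraints ($\lambda \geq 1/\alpha$, $\lambda \geq \alpha/(1-\beta)$, $\lambda \geq (1-\beta)/(1-\alpha)$) tight. The only cosmetic difference is that the paper first derives the general constraint $\lambda \geq \max(1/\alpha,\ \beta/(1-\beta),\ \alpha/(1-\beta),\ (1-\beta)/(1-\alpha))$ and then optimizes over $(\alpha,\beta)$, whereas you plug in the specific values and verify directly; your numerical use of $\lambda<3/2$ could be made fully rigorous by noting the cubic is negative at $1$ and positive at $3/2$.
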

\begin{hideproofAtwo}[Proof of \Cref{thm:2ub}]
Intuitively, looking at the rightmost condition in \Cref{fig:local}, by setting $\alpha = \frac{1}{\lambda}$, we can get $\mu \leq \frac{1}{\lambda}$, and by setting $\frac{\mu}{\lambda} = \lambda (1 - \mu)$ (i.e., $\frac{1}{\lambda^2} = \lambda - 1$), we can get the required condition on $\ell$ with $\beta = 1 - \frac{1}{\lambda^2} = 2 - \lambda$.

We will arrive at the theorem organically, by determining the best value of $\lambda$ for which an $(\alpha, \beta)$-unblanketed candidate has distortion at most $1 + 2\lambda$. 

Suppose we have an $(\alpha, \beta)$-unblanketed candidate $j^*$, which we would like to satisfy the conditions of \Cref{cor:local} for the smallest value of $\lambda$. By the initial condition of the rule, there exists $k$ such that $k = i^*$ or $s_{k\cg i^*} \geq \frac1\lambda$, so long as $\alpha \geq \frac1\lambda$.

Next, to satisfy \ref{enum:1}, it suffices if $\beta \leq \frac{\lambda}{1 + \lambda}$. And finally, to satisfy \ref{enum:2} it suffices if $ \lambda(1-\beta) \geq \alpha$ and $1 - \beta \leq \lambda(1 - \alpha)$, since then
$$s_{k\cg j^*} \leq \alpha \leq \lambda(1-\beta) \leq \lambda s_{\ell\cg k}$$
and
$$s_{\ell\cg j^*} \leq 1 - \beta \leq \lambda( 1- \alpha) \leq \lambda s_{j^*\cg k}.$$
Combining our constraints, by \Cref{cor:local}, $j^*$ has distortion at most $1 + 2 \lambda$ if 
$$\lambda \geq \max\left(\frac1\alpha, \frac{\beta}{1 - \beta}, \frac{\alpha}{1 - \beta}, \frac{1 - \beta}{1 - \alpha}\right).$$
It is not hard to see that the expression on the right is minimized with the claimed choice of parameters.
\end{hideproofAtwo}

\NewEnviron{hideproofAthree}[1][Proof]{%
  \ifshowproofs
    \begin{proof}[#1]
      \BODY
    \end{proof}
  \else
      \par\noindent\textit{#1.} See \Cref{prf:tourn-lb}. \hfill$\square$%
  \par\nobreak\vskip\medskipamount
  \fi
}

\subsection{Universal Lower Bound}\label{sec:tourn-lb}

In this section, we will prove a universal lower bound on the distortion of any deterministic tournament rule, using an election with 5 candidates. Since \cite{DBLP:journals/ai/AnshelevichBEPS18} proved the ranked pairs has distortion at most 3 when there are at most 4 candidates (see \Cref{thm:rp}), our instance is as small as possible. 

Before diving into the details, we offer some intuition for the construction. Our goal is to create a cycle of candidates such that each candidate in the cycle can have large distortion with respect to the candidate preceding it in the cycle. This is the outside cycle in \Cref{fig:5lb}. Observing the way that conditions such as \Cref{lem:lp-tournament,lem:post-shift} bound $\lambda$ by ratios between edges in the tournament graph, it is natural for edges along this cycle to increase exponentially with a rate of $\lambda$. To show a lower bound on the distortion, we can use \Cref{thm:biased-iff}. We simply need to exhibit a biased metric for which \cref{eq:precise-biased-det} fails. 

For the exponentially increasing edges, we can use the same simple biased metrics used by \cite{DBLP:conf/soda/CharikarR22} to prove a lower bound against randomized voting rules (called $(0,1,2,3)$-metrics, for the different distances in these metrics after scaling). These correspond to metrics where $x_{i^*} = 0$ and $x_{j} = 1$ for $j \neq i^*$. For the smallest edge in the cycle we use a different metric---a ``half-integral'' analogue of the $(0,1,2,3)$-metrics, where for each $j \neq i^*$,  $x_j \in \{\frac12, 1\}$.

\begin{theorem}\label{thm:lb} 
The distortion of any deterministic \textnormal{($2$-)}tournament rule is at least $3.11287$.
\end{theorem}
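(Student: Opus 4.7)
The plan is to construct a single $5$-candidate election whose tournament graph has a distinguished directed $5$-cycle with exponentially increasing edge weights, and then to show that no matter which candidate $j^*$ a deterministic tournament rule outputs, its cyclic predecessor $i^*$ together with a carefully chosen biased metric witnesses distortion at least $3.11287$ via the ``only if'' direction of \Cref{thm:biased-iff}. Since any deterministic tournament rule depends only on the weighted tournament, fixing one instance is enough.

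First I would set up the instance. Label the candidates $c_0, c_1, c_2, c_3, c_4$ around a cycle and, for a parameter $\lambda$ slightly bigger than $1.05$ (to be optimized at the end), assign the cycle edge $c_{t-1} \cg c_t$ weight
\[
s_{c_{t-1} \cg c_t} = \tfrac{1}{2} + \varepsilon \lambda^{t} \pmod{5}
\]
for a small base $\varepsilon > 0$ chosen so that every margin stays in $[0,1]$. The remaining three ``chord'' edges in each pentagon (the edges from $c_{t+2}$ to $c_t$ and from $c_{t-2}$ to $c_t$, etc.) are set to values in $(\tfrac12,1)$ that make the surrounding $4$-cycle conditions as tight as possible; concretely, I would make the two chords incident to each candidate also roughly geometric in $\lambda$ so that the biased-metric integrals below scale cleanly. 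A McGarvey-style realization (as cited after \Cref{subsec:prelim/tournament}) certifies this weighted tournament arises from an actual ranked profile, so the instance is valid as input to any tournament rule.

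The main step is the per-candidate lower bound. Fix any candidate chosen by the rule, call it $j^* = c_t$, and let $i^* = c_{t-1}$. For the four values of $t$ where the predecessor margin $s_{i^* \cg j^*}$ is \emph{large} (the four long cycle edges), I would use a pure $(0,1,2,3)$-metric, i.e.\ the biased metric with $x_{i^*} = 0$ and $x_c = 1$ for every other $c$. In such a metric the right-hand side of \cref{eq:precise-biased-det} collapses to the length of the one ``tall'' step, namely $1 - s_{\forall c \cg c',\, x_c = x_{c'}} = s_{i^* \cg C \setminus \{i^*\}}$, while the left-hand side is exactly $s_{\{i^*\} \cg j^*}$. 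Rearranging gives a lower bound of the form $1 + 2 \cdot s_{i^* \cg j^*}/(1 - s_{i^* \cg C\setminus\{i^*\}})$ on the distortion achievable with $j^*$ as output; the exponentially increasing edge weights make this ratio large at each such $j^*$. For the single remaining candidate (the one whose predecessor edge is the smallest, closest to $\tfrac12$), the $(0,1,2,3)$ construction is too weak, and I would instead use a half-integral biased metric: $x_{i^*} = 0$, $x_{j^*} = 1$, and $x_c \in \{\tfrac12, 1\}$ for the three remaining candidates, where the assignment of $\tfrac12$ versus $1$ is chosen to match the orientations of the chord edges incident to $i^*$ and $j^*$. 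A direct block-stacking computation (as pictured in \Cref{fig:stacked-blocks}) then evaluates both integrals in \cref{eq:precise-biased-det} as explicit linear combinations of the tournament margins, and the resulting ratio is again at least $3.11287$.

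Finally, I would optimize. All five distortion lower bounds are rational functions of $\lambda$ and $\varepsilon$ (and of the chord weights). Setting every bound equal gives a small polynomial system, whose smallest positive solution yields the numerical value $3.11287$ claimed in the theorem; checking this numerically at the obtained optimum and verifying margin feasibility (every $s \in [0,1]$) completes the proof. The main obstacle is the last step: balancing all five constraints simultaneously requires that the half-integral case give exactly the same bound as the four $(0,1,2,3)$ cases, and it is nontrivial both that a half-integral $x$-vector is \emph{available} (i.e.\ strictly more powerful than $(0,1,2,3)$ here, contra the conjecture of \cite{DBLP:conf/soda/CharikarR22}) and that the free chord weights can be tuned to align all five. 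Once the integral expressions are written in the block-stacking picture of \Cref{fig:stacked-blocks}, however, alignment reduces to elementary algebra.
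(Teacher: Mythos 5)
Your high-level plan coincides with the paper's: a $5$-candidate cycle with geometrically growing cycle edges, the cyclic predecessor playing the role of $i^*$, $(0,1,2,3)$-metrics for four of the candidates and a half-integral biased metric for the one candidate whose predecessor edge is smallest, followed by balancing the five bounds. However, there is a genuine gap at the heart of the argument: the right-hand side of \cref{eq:precise-biased-det} is \emph{not} a function of the tournament margins. For the $(0,1,2,3)$-metric it equals $1 - \plu(i^*)$, and for the half-integral metric it involves quantities like $s_{i^* \cg \{0,1\}}$ and which voters rank $i^*$ first --- all of which depend on the full ranking profile, not just on pairwise margins. A tournament rule only sees the margins, so the adversary may pick any consistent profile, but to make the bound come out to $1+2\lambda$ you must \emph{exhibit} a full profile, consistent with all ten prescribed margins, in which e.g.\ every voter preferring $i^*$ to its own predecessor ranks $i^*$ first (so that $1-\plu(i^*)$ meets its lower bound $\max_k s_{k\cg i^*}$ with equality), and analogous structural conditions for the half-integral case (voters who prefer $0$ to any of $2,3,4$ prefer it to all of them; voters not ranking $4$ first have $3 \cg 4$; each voter ranks both or neither of $0,1$ above $4$). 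Your claim that the integrals evaluate to ``explicit linear combinations of the tournament margins'' is exactly what fails here; this simultaneous realizability is the bulk of the verification in the paper, which supplies explicit profiles (\Cref{tab:lb1234,tab:lb0}) for each choice of $j^*$.

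A second, related gap is your appeal to a McGarvey-style theorem: that result realizes \emph{unweighted} majority tournaments, not prescribed fractional margins, so it does not certify your weighted instance. Realizability of the margins themselves imposes constraints such as $s_{i\cg j}+s_{j\cg k}+s_{k\cg i}\le 2$ for every triple, and in the paper's construction the triple $(3,1,2)$ is binding --- it is precisely what pins down $\beta = \frac{2}{1+\lambda^2+\lambda^3}$ and hence the final polynomial $2\lambda^5+\lambda^4+\lambda^3-\lambda-4=0$; checking only that each margin lies in $[0,1]$ is insufficient. Finally, a smaller quantitative issue: with cycle weights of the form $\tfrac12+\varepsilon\lambda^t$ and $\varepsilon$ small, consecutive-edge ratios are $1+O(\varepsilon)$ and the four $(0,1,2,3)$ cases would only yield distortion $3+O(\varepsilon)$; the weights must be genuinely geometric and bounded well away from $\tfrac12$ (the paper uses $\beta\lambda^t$ with $\beta\approx 0.607$, giving margins $0.64$--$0.76$), which your ``optimize at the end'' step would have to discover, together with the chord weights that keep the instance realizable.
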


\begin{hideproofAthree}[Proof of \Cref{thm:lb}]
The goal is to construct a tournament graph with 5 candidates, such that for each candidate $j^*$, there exists some underlying metric space, and realization of the preferences in terms of full rankings, such that the distortion of $j^*$ is as claimed.

We will start by constructing the tournament graph in terms of two general parameters $\beta$ and $\lambda$, and then explain what conditions are necessary for this tournament graph to give us a general lower bound. We will eventually set $\beta \approx 0.60696$, $\lambda \approx 1.056439$,  and our eventual distortion will be $1 + 2\lambda \approx 3.11287$.

Our candidates are labeled $0,1,2,3,4$, treated modulo 5. The comparisons matrix, whose $(i, j)$ entry is $s_{i\cg j}$ is as follows. The tournament graph is visualized in \Cref{fig:5lb}.
\[
\begin{bmatrix}
  - & \beta\lambda & 1-\beta & 1-\beta  & 1-\beta  \\
  1-\beta\lambda & - & \beta \lambda^2 & 1-\beta  & 1-\beta  \\
  \beta & 1-\beta\lambda^2 &- & \beta \lambda^3 & \beta\lambda^3(1 + \lambda) -1 \\
  \beta & \beta & 1-\beta\lambda^3 & - & \beta \lambda^4 \\
  \beta & \beta & 2 - \beta\lambda^3(1 + \lambda) & 1-\beta\lambda^4 & -
\end{bmatrix} \approx \begin{bmatrix}
- & 0.641 & 0.393 & 0.393 & 0.393 \\
0.359 & - & 0.677 & 0.393 & 0.393 \\
0.607 & 0.323 & - & 0.716 & 0.472 \\
0.607 & 0.607 & 0.284 & - & 0.756 \\
0.607 & 0.607 & 0.528 & 0.244 & -
\end{bmatrix}
\]

\begin{figure}[h!] %
  \centering
  \includegraphics[width=0.5\textwidth]{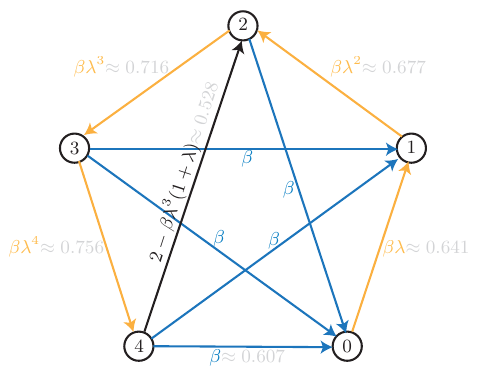} 
  \caption{A tournament graph where each candidate has worst-case distortion at least 3.112.} 
  \label{fig:5lb}
\end{figure}

At this point, we note that a necessary (but insufficient) condition for a tournament graph to be realizable with rankings is that for all candidates $i, j, k$, $s_{i\cg j} + s_{j\cg k} + s_{k \cg i} \leq 2$, since each voter counts towards at most two of the three terms. The triple of candidates $(3, 1, 2)$ will therefore require that
$$\beta + \beta\lambda^2 + \beta\lambda^3 \leq 2$$
and so 
$$\beta \leq \frac{2}{1 + \lambda^2 + \lambda^3}.$$
In fact, this inequality will be satisfied with equality, but we keep $\beta$ as is for ease of notation.

Using \Cref{thm:biased-iff}, we can show that each candidate $j^*$ has large worst-case distortion by exhibiting a biased metric where some other candidate $i^*$ is optimal, and $\SC(j^*) - \SC(i^*) \geq \lambda \cdot 2\SC(i^*)$ (as given by \cref{eq:precise-biased-det}). For each $j^*$, $i^*$ will be the candidate labeled $j^* - 1$ (modulo 5). 

For $j^* = 1,2,3,4$, these metrics are relatively straightforward. Each is defined by the biased metric where $x_{j^* - 1} = 0$, and $x_k = 1$ for $k \neq j^* - 1$. (These are precisely the \textit{$(0,1,2,3)$-metrics} used by \cite{DBLP:conf/soda/CharikarR22} to give nontrivial lower bounds on the distortion of randomized voting rules.) For these metrics, we will have that 
$$\SC(j^*) - \SC(j^* - 1) =  \int_0^\infty  s_{I_t \cg j^*}  \dif t = s_{j^* - 1 \cg j^*} = \beta\lambda^{j^*}.$$
and
$$2\SC(j^* - 1) = \int_0^\infty(1 - s_{\forall j\cg i,  x_j - x_i \leq t}) \dif t = 1 - \plu(j^* - 1)$$
where $\plu(j)$ denotes the fraction of voters that rank $j$ first. Note that
$$1 - \plu(j^* - 1) \geq \max_{k \neq j^* - 1} s_{k \cg j^* - 1} = s_{j^* - 2 \cg j^* - 1} = \beta\lambda^{j^* - 1}.$$
Therefore, if we can construct the full preference profile so that $1 - \plu(j^* - 1) = s_{j^* - 2 \cg j^* - 1}$ (i.e., if $j^* - 1 \cg_v j^* - 2$, then $v$ ranks $j^* - 1$ first), then this shows that $\SC(j^*) \geq (1 + 2\lambda)\SC(j^* - 1)$. 

As it turns out, the condition that $1 - \plu(j^* - 1) = s_{j^* - 2 \cg j^* - 1}$ is easy to satisfy. One can imagine a greedy approach, first breaking up the voters by their preference over $j^* - 1$ and $j^* - 2$, and then inserting the candidates appropriately (in terms of position and proportion) to satisfy each of the tournament constraints. To save a considerable amount of messy algebra, we simply provide the optimal preference profiles for the final setting of parameters in \Cref{tab:lb1234}.

\begin{table}[h!]
    \centering
    \begin{minipage}{0.24\textwidth}
        \centering
        \begin{tabular}{c|c}
            $\sigma$ & $s_\sigma$ \\ \hline
01423 & 0.03426\\
03412 & 0.28436\\
04231 & 0.07442\\
\hline
12340 & 0.35879\\
23401 & 0.11288\\
42301 & 0.13530
        \end{tabular}
    \end{minipage}%
    \begin{minipage}{0.24\textwidth}
        \centering
        \begin{tabular}{c|c}
            $\sigma$ & $s_\sigma$ \\ \hline
12304 & 0.17939\\
14203 & 0.17939\\\hline 
02341 & 0.07442\\
20341 & 0.10497\\
23014 & 0.03426\\
23401 & 0.07862\\
34012 & 0.28436\\
40123 & 0.03426\\
42301 & 0.03033
        \end{tabular}
    \end{minipage}%
    \begin{minipage}{0.24\textwidth}
        \centering
        \begin{tabular}{c|c}
            $\sigma$ & $s_\sigma$ \\ \hline
23401 & 0.32260\\ \hline
01234 & 0.03426\\
03412 & 0.28436\\
12304 & 0.07442\\
12340 & 0.04039\\
14023 & 0.07442\\
14230 & 0.16955
        \end{tabular}
    \end{minipage}%
    \begin{minipage}{0.24\textwidth}
        \centering
        \begin{tabular}{c|c}
            $\sigma$ & $s_\sigma$ \\ \hline
30142 & 0.03426\\
31420 & 0.10497\\
34012 & 0.14514\\ \hline
12304 & 0.17939\\
20341 & 0.17939\\
23140 & 0.07442\\
23401 & 0.03845\\
40123 & 0.21365\\
42301 & 0.03033
        \end{tabular}
    \end{minipage}
    \caption{The underlying preference profiles used for our lower bounds against candidates 1,2,3,4. From left to right, the optimal candidates are 0,1,2,3. The voters that rank the optimal candidate $j^* - 1$ first are above the bar. The remaining voters all prefer $j^* - 2$ over $j^* - 1$.}\label{tab:lb1234}
\end{table}

It remains to show the candidate 0 can have large cost compared to candidate 4. Consider the biased metric with $x_0 = x_1 = 1$, $x_2 = x_3 = \frac12$, $x_4 = 0.$ Then we have
$$\SC(0) - \SC(4) =  \int_0^\infty  s_{I_t \cg j^*}  \dif t = \tfrac12 s_{4 \cg 0} + \tfrac12 s_{2,3,4 \cg 0} \leq \beta.$$
We can satisfy the above with equality if we construct a preference profile such that if a voter $v$ prefers $0$ over any of $2,3,4$, then $0 \cg 2,3,4$ (and so $s_{2\cg0} = s_{3\cg0} = s_{4\cg0} = s_{2,3,4\cg0} = \beta$). On the other hand,
\begin{align*}
2\SC(4) = \int_0^\infty(1 - s_{\forall j\cg i,  x_j - x_i \leq t}) \dif t &= \tfrac12 (1 - \plu(4)) + \tfrac12 (1 - s_{4 \cg 0,1})\\
&\geq \tfrac12 s_{3\cg 4} + \tfrac12 \max(s_{0\cg 4}, s_{1\cg 4}) = \tfrac12 \beta\lambda^4 + \tfrac12(1 - \beta).
\end{align*}
Similarly, we can satisfy the above with equality if we have a preference profile where $3\cg_v 4$ whenever $4$ is not $v$'s top choice, and  $v$ always ranks both or neither of $0, 1$ above $4$.

All of these conditions are indeed satisfied by the preference profile given in \Cref{tab:lb0}. 

\begin{table}[ht!]
\centering
\begin{tabular}{c|c}
$\sigma$ & $s_\sigma$ \\ \hline
01234 & 0.31862\\
10234 & 0.07442\\
23401 & 0.07862\\
34120 & 0.28436\\
42301 & 0.24397
\end{tabular}
\caption{The underlying preference profiles used for our lower bounds against candidate 0. The optimal candidate is 4.}\label{tab:lb0}
\end{table}

In this case, we have
$$\frac{\SC(0) - \SC(4)}{2\SC(4)} = \frac{\beta}{\tfrac12(\beta\lambda^4 + 1 - \beta)},$$
so to get distortion $1 + 2\lambda$, we would like the expression above to be equal to $\lambda$. With $\beta = \frac{2}{1 + \lambda^2 + \lambda^3}$, we find that 
$$2\lambda^5 + \lambda^4 + \lambda^3 - \lambda - 4 = 0.$$
The solution to this equation is $\lambda \approx 1.056439$ as claimed.
\end{hideproofAthree}

\begin{remark}\label{rem:lb-hard}
We note that a natural temptation to get a stronger result would be to attempt the same approach with larger cycles. The challenge however, is that with exponentially increasing weights along the cycle, the weights in the tournament graph can be significantly constrained. We begin to see this effect in \Cref{fig:5lb}, where $s_{4\cg2}$ is forced to be small because $s_{2\cg 3}$ and $s_{3\cg 4}$ are large. As a result, our lower bound benefits from the cycle being as small as possible.
\end{remark}

\section{Selection in \texorpdfstring{$k$}{k}-Tournaments}\label{sec:k_tournament}

We start with some preliminaries on Stable Lotteries, then introduce \nameref{lem:main}, and finally give two new rules: \nameref{box:simul_veto} and \nameref{box:two_lotteries}. Both of these rules are $(k+1)$-tournament rules when using stable $k$-lotteries as a component.

\NewEnviron{hideproofBone}[1][Proof]{%
  \ifshowproofs
    \begin{proof}[#1]
      \BODY
    \end{proof}
  \else
    \par\noindent\textit{#1.} See \Cref{prf:stable}. \hfill$\square$%
    \par\nobreak\vskip\medskipamount
  \fi
}

\subsection{Stable Lotteries}
\label{subsec:prelim/stable}
We first extend the preference of each voter from comparisons between candidates to comparisons between multisets of candidates. Our definition is that between two multisets, the voter prefers the multiset that contains her favorite candidate (favorite within the two multisets). Tie-breaking is crucial here, and is defined in the following way. Each candidate copy in each multiset is attached with a freshly drawn number from a uniform distribution on $[0, 1]$. When tied, the copy with the highest attachment is the most preferred. More precisely, given multisets $S,T$ of candidates, suppose candidate $c$ is voter $v$'s highest ranked candidate in $S \cup T$. Let $n(S)$ (respectively, $n(T)$) be the number of copies of $c$ in $S$ (respectively, $T$). Then voter $v$ prefers $S$ to $T$ with probability $\frac{n(S)}{n(S)+n(T)}$. For example, any voter prefers the multiset $\{c, c, c\}$ to the multiset $\{c, c\}$ with probability $\frac{3}{5}$. We use this tie-breaking rule so that in later discussion on stable $k$-lotteries, the value of a zero-sum game is exactly $1 - \frac{1}{k + 1}$, which is a helpful property used in the proof of \nameref{lem:main}.

For any distribution $D$ over candidates, we denote by $D^k$ the distribution of a size-$k$ multi-set obtained by drawing from $D$ i.i.d.\@ $k$ times.

\begin{definition}[stable $k$-lotteries]
\label{def:stable}
We call a distribution $D$ on the candidates a \emph{stable $k$-lottery} ($k \in \mathbb{Z}^+$) if for any distribution $D'$ on the candidates, it holds that
\[
\Pr_{v \sim V}[D^k \succ_v D'] \geq 1 - \frac{1}{k + 1}.
\]
We call a distribution $D'$ on the candidates a \emph{stable defensive $k$-lottery}, if for any distribution $D$,
\[
\Pr_{v \sim V}[D^k \succ_v D'] \leq 1 - \frac{1}{k + 1}.
\]
\end{definition}
Stable $k$-lotteries generalize Maximal Lotteries \cite{kreweras1965aggregation,fishburn1984probabilistic}, which precisely correspond to stable $1$-lotteries (and stable defensive $1$-lotteries). We comment that the work of \cite{DBLP:journals/teco/ChengJMW20} proves that there is always a distribution over size-$k$ sets of candidates satisfying the condition in \cref{def:stable}. However, their distribution might not come from by i.i.d.\@ sampling from the same base distribution. These two definitions are sometimes different, as illustrated by \cref{ex:stable} below, but the worst-case guarantees of $1 - \frac{1}{k + 1}$ remain the same. The fact that stable lotteries can be implemented with independent sampling is not explicit in \cite{DBLP:journals/teco/ChengJMW20}, but it is a relatively straightforward consequence of the proof via the minimax theorem. We include a proof of \cref{thm:existence_lottery} for completeness.

\begin{example}
\label{ex:stable}
Consider a profile where $n = m$ is a large even number, and the voters' preferences are cyclic: each voter $v_i$ has preference $c_i \succ c_{i + 1} \succ \cdots \succ c_m \succ c_1 \succ c_2 \succ \cdots \succ c_{i - 1}$. The distribution that picks $\{c_i, c_{i + \frac{m}{2}}\}$ for a uniform $i \in \{1, 2, \ldots, \frac{m}{2}\}$ beats any distribution over candidates with probability about $\frac{3}{4}$, but two i.i.d.\@ samples from any distribution over candidates can only guarantee $\frac{2}{3}$ because of \cref{obs:lottery_sym}.
\end{example}

\cref{obs:lottery_sym} directly follows from symmetry, and it shows that the guarantee of $1 - \frac{1}{k + 1}$ for stable $k$-lotteries is tight by taking $D' = D$ in \cref{def:stable}.
\begin{observation}
\label{obs:lottery_sym}
For any distribution $D$ on the candidates,
\[
\Pr_{v \sim V}[D^k \succ_v D] = 1 - \frac{1}{k + 1}.
\]
\end{observation}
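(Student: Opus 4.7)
The plan is to prove the observation by a symmetric coupling of the two random draws. I would generate $k+1$ i.i.d.\ samples $X_1, X_2, \ldots, X_{k+1} \sim D$ together with fresh independent uniform tiebreak labels $U_1, \ldots, U_{k+1} \in [0,1]$, and identify $D^k$ with the labelled multiset $\{(X_1, U_1), \ldots, (X_k, U_k)\}$ and the singleton $D$ with $\{(X_{k+1}, U_{k+1})\}$. This is a valid coupling because the $(X_i, U_i)$ are i.i.d., so the marginal on the first $k$ coordinates is $D^k$ and the marginal on the last coordinate is $D$, independently.

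Next I would reduce the comparison rule to a single argmax. Fix a voter $v$. Since $v$'s preferences over candidates are a total order and only finitely many candidates appear among $X_1, \ldots, X_{k+1}$, there is a well-defined $v$-favourite candidate $c$ occurring in the union. Let $I = \{i \in \{1, \ldots, k+1\} : X_i = c\}$ and $i^\star = \arg\max_{i \in I} U_i$. The tiebreaking rule in the paper's definition (each copy attached to an independent uniform label, highest label wins among copies of the top candidate) is exactly the statement that $v$ prefers the multiset containing the copy indexed by $i^\star$. Thus $D^k \succ_v D$ if and only if $i^\star \le k$.

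Finally, by exchangeability of the i.i.d.\ pairs $(X_i, U_i)$, the law of $i^\star$ is invariant under any permutation of $\{1, \ldots, k+1\}$, so $i^\star$ is uniform on $\{1, \ldots, k+1\}$. Therefore $\Pr[i^\star \le k] = \tfrac{k}{k+1} = 1 - \tfrac{1}{k+1}$ for every voter $v$. Averaging over $v \sim V$ gives the claim. There is no real obstacle here; the only thing to verify carefully is that the reformulation of the tiebreak as ``the labelled copy with maximum $U_i$ among indices $i$ with $X_i = c$'' is equivalent to the probabilistic tiebreak stated after the definition (for each voter the conditional probability of preferring $S$ over $T$ is $\tfrac{n(S)}{n(S)+n(T)}$), which follows from the fact that the argmax of independent uniforms is uniform on its index set.
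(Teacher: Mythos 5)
Your proof is correct, and it is essentially the paper's argument: the paper simply asserts that the observation "directly follows from symmetry," and your coupling of $k+1$ i.i.d.\ labelled samples plus the exchangeability argument (the top-labelled copy of the voter's favourite is uniform over the $k+1$ indices, so it lands in the size-$k$ side with probability $\tfrac{k}{k+1}$) is exactly that symmetry made explicit. The reformulation of the probabilistic tiebreak as an argmax of independent uniform labels is consistent with the paper's definition, so no gap remains.
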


Although we have defined both stable lotteries and stable defensive lotteries, it turns out that every stable $k$-lottery is also a stable defensive $k$-lottery. To the best of our knowledge, this fact was unknown prior to our work.
\begin{theorem}[stable $k$-lotteries are stable defensive $k$-lotteries]
\label{thm:stable_is_defensive_stable}
For any $k$ and any preference profile, any stable $k$-lottery $D$ is also a stable defensive $k$-lottery.
\end{theorem}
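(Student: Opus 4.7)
The plan is to reduce the theorem to a single probabilistic inequality by exploiting a polynomial identity from \Cref{obs:lottery_sym}. Write $u(D, D') := \Pr_{v \sim V}[D^k \succ_v D']$ for brevity. Fix a stable $k$-lottery $D^*$ and an arbitrary distribution $\tilde D$, and set $D_t := (1-t)\, D^* + t\, \tilde D$. By \Cref{obs:lottery_sym}, $u(D_t, D_t) = V := 1 - \tfrac{1}{k+1}$ identically in $t \in [0,1]$. Since $u$ is linear in its second argument, this rewrites as $V = (1-t)\, u(D_t, D^*) + t\, u(D_t, \tilde D)$. Defining the polynomials $f(t) := u(D_t, D^*) - V$ and $g(t) := u(D_t, \tilde D) - V$, both of degree $k$ in $t$, and using $f(0) = g(1) = 0$, this factors as $f(t) = -t\, H(t)$ and $g(t) = (1-t)\, H(t)$ for a common polynomial $H$ of degree $k-1$. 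The target $u(\tilde D, D^*) \leq V$ is equivalent to $H(1) \geq 0$, while the stable lottery condition applied to $\tilde D$ gives the boundary value $H(0) = u(D^*, \tilde D) - V \geq 0$.

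Computing $H(1) = -g'(1)$ and expanding
\[
u(D_t, \tilde D) \;=\; \sum_{n=0}^{k} \binom{k}{n} (1-t)^n\, t^{k-n}\, \phi_n
\]
(conditioning on how many of the $k$ samples in $D_t^k$ came from $D^*$) yields $H(1) = k(\phi_1 - V)$, where
\[
\phi_1 \;:=\; \Pr_v\!\left[\{X, \tilde c_1, \ldots, \tilde c_{k-1}\} \succ_v \tilde c_0\right]
\]
with $X \sim D^*$ and $\tilde c_0, \tilde c_1, \ldots, \tilde c_{k-1} \sim \tilde D$ independent. Thus the theorem reduces to the single inequality $\phi_1 \geq V$: adding one $D^*$-draw to a $(k-1)$-multiset of $\tilde D$-draws and competing with an independent $\tilde D$-draw yields winning probability at least $V$, matching the guarantee a pure $\tilde D$-multiset enjoys by \Cref{obs:lottery_sym}.

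The main obstacle is proving this inequality, since the stability of $D^*$ controls the ``reverse'' comparison (a $k$-multiset of $D^*$-draws versus a single $\tilde D$-draw), and a per-voter swap argument fails: a voter who prefers $\tilde D$-candidates over $D^*$-candidates can see $\phi_1$ drop below $V$ on her own. To extract the voter-averaged bound, I would exploit the full family of polynomial identities obtained by expanding $u(D_s, D_s) \equiv V$ in $s$ about $D^*$. Writing $a_n$ for the probability that a multiset of $k - n$ samples from $D^*$ plus $n$ signed-measure perturbations in direction $\eta := \tilde D - D^*$ beats a $D^*$-draw, and $b_n$ for the analogous quantity with the opposing draw also perturbed to $\eta$, the identity forces $a_0 = V$, $b_k = 0$, and the recurrence $b_{n-1} = -\tfrac{k - n + 1}{n}\, a_n$ for $n \geq 1$. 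Summing $\sum_{m=0}^{k-1} \binom{k}{m}\, b_m$ telescopes to $V - u(\tilde D, D^*)$, so the task becomes expressing this sum as a nonnegative combination of the tangent-direction stability inequalities $u(D^*, \delta_c) \geq V$ (which hold for every candidate $c$, with equality on the support of $D^*$). As an alternative route, one can revisit the game-theoretic proof of \cref{thm:existence_lottery} and show by an exchange argument that the defensive-optimal strategy of the associated bilinear game can be taken to coincide with the offensive-optimal $D^*$.
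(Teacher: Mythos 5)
Your path-interpolation setup is sound as far as it goes: with $D_t=(1-t)D^*+t\tilde D$, the identity $u(D_t,D_t)\equiv V$ from \Cref{obs:lottery_sym}, linearity of $u$ in its second argument, and the factorization through a common polynomial $H$ are all valid, and the computation $H(1)=-g'(1)=k(\phi_1-V)$ checks out. But this only \emph{reformulates} the theorem: for each fixed $\tilde D$, the inequality $\phi_1\geq V$ is equivalent to the defensive inequality $u(\tilde D,D^*)\leq V$ you set out to prove, and the proposal stops exactly there. Neither suggested continuation closes the gap. The recurrence/telescoping route ends with ``the task becomes expressing this sum as a nonnegative combination of the tangent-direction stability inequalities,'' which is precisely the missing argument, and it is doubtful that a combination of the linear constraints $u(D^*,\delta_c)\geq V$ with fixed nonnegative coefficients can yield a target that is degree $k$ in $\tilde D$; some genuinely nonlinear (averaging) inequality is needed, as your own observation that the per-voter version of $\phi_1\geq V$ can fail already indicates. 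The game-theoretic route is circular: showing that ``the defensive-optimal strategy of the associated bilinear game can be taken to coincide with the offensive-optimal $D^*$'' \emph{is} the statement of the theorem, and since the payoff $\Pr_{v\sim V}[D^k\succ_v D']$ is not symmetric in its two arguments, there is no formal reason for maximizer- and minimizer-optimal strategies to coincide without a new argument (the proof of \cref{thm:existence_lottery} only produces some pair $(D^*,D^{**})$, not a common one).

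For comparison, the paper closes this step with a short analytic argument using only two facts: the stability of $D$ tested against the attacking distribution $A=\tilde D$ itself, i.e.\ $\Pr_{v\sim V}[A\succ_v D^k]\leq \frac{1}{k+1}$, and the self-play identity $\int A_v^k \,\mathrm{d}A_v=\frac{1}{k+1}$, where $D_v,A_v$ denote per-voter rank distributions. Writing $\Pr_{v}[A^k\succ_v D]$ as $\int_v\bigl(\int k D_v A_v^{k-1}\,\mathrm{d}A_v\bigr)\mathrm{d}V$, it applies H\"older's inequality per voter with exponents $k$ and $\frac{k}{k-1}$ and then Jensen's inequality over voters, yielding the bound $\frac{k}{k+1}$ exactly. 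An inequality of this interpolation type is what your polynomial framework is missing; if you supplied a proof of $\phi_1\geq V$ along such lines, your route would give an alternative proof, but as written the central inequality is asserted rather than established.
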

\begin{hideproofBone}[Proof of \Cref{thm:stable_is_defensive_stable}]
For any distribution $A$ on the voters, we wish to prove
\[
\Pr_{v \sim V}[D \succ_v A^k] \geq \frac{1}{k + 1}.
\]
Slightly abusing the notation, let $V$ be the measure on the voters. Let $D_v$ (or $A_v$) be the distribution of the rank of the candidate drawn from $D$ (or $A$) according to the voter $v$. (Lower is better.) We can rewrite the inequality above that we wish to prove as
\[
\int_v \left(\int D_v \d A_v^k\right) \d V \leq \frac{k}{k + 1}.
\]

Since $D$ is a stable $k$-lottery, we have
\[
\int_v \left(\int D_v^k \d A_v\right) \d V \leq \frac{1}{k + 1}.
\]
Also, it holds that
\[
\int A_v^k \d A_v = \frac{1}{k + 1}.
\]

By H\"older's inequality, we have
\begin{align*}
\int_v \left(\int D_v A_v^{k - 1} \d A_v\right) \d V &\leq \int_v \left(\left(\int D_v^k \d A_v\right)^{\frac{1}{k}} \cdot \left(\int A_v^k \d A_v\right)^{\frac{k - 1}{k}}\right) \d V\\
&= \left(\frac{1}{k + 1}\right)^{\frac{k - 1}{k}} \cdot \int_v \left(\int D_v^k \d A_v\right)^{\frac{1}{k}} \d V.
\end{align*}
Further, by Jensen's inequality and the concavity of $x^{\frac{1}{k}}$, we get
\[
\int_v \left(\int D_v^k \d A_v\right)^{\frac{1}{k}} \d V \leq \left(\int_v \left(\int D_v^k \d A_v\right) \d V\right)^{\frac{1}{k}}.
\]

Putting them together, we have
\begin{align*}
\int_v \left(\int D_v \d A_v^k\right) \d V &= \int_v \left(\int k D_v A_v^{k - 1} \d A_v\right) \d V\\
&\leq k \cdot \left(\frac{1}{k + 1}\right)^{\frac{k - 1}{k}} \cdot \int_v \left(\int D_v^k \d A_v\right)^{\frac{1}{k}} \d V\\
&\leq k \cdot \left(\frac{1}{k + 1}\right)^{\frac{k - 1}{k}} \cdot \left(\int_v \left(\int D_v^k \d A_v\right) \d V\right)^{\frac{1}{k}}\\
&\leq k \cdot \left(\frac{1}{k + 1}\right)^{\frac{k - 1}{k}} \cdot \left(\frac{1}{k + 1}\right)^{\frac{1}{k}}\\
&= \frac{k}{k + 1},
\end{align*}
and therefore proves the statement.
\end{hideproofBone}

\begin{theorem}[existence of stable $k$-lotteries]
\label{thm:existence_lottery}
For any $k$ and any preference profile, a stable $k$-lottery always exists.
\end{theorem}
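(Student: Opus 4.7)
My plan is to combine a Kakutani fixed-point argument with the symmetry of Observation~\ref{obs:lottery_sym}. Write $f(D, D') = \Pr_{v \sim V}[D^k \succ_v D']$ for distributions $D, D' \in \Delta(C)$, so that $D$ is a stable $k$-lottery if and only if $f(D, D') \geq 1 - \tfrac{1}{k+1}$ for every $D' \in \Delta(C)$. The key structural observation is that although $f(D, D')$ is a degree-$k$ polynomial in the coordinates of $D$, it is affine in $D'$ because the definition of $D'$ only involves a single draw: explicitly, $f(D, D') = \sum_{c' \in C} D'(c') \cdot f(D, \delta_{c'})$ where $\delta_{c'}$ is the point mass at $c'$. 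This affinity is what lets me avoid needing concavity in $D$, which would fail in general, and sidestep von Neumann's minimax theorem (which does not apply since $f$ is not bilinear).

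Define the best-response correspondence $B \colon \Delta(C) \to 2^{\Delta(C)}$ by $B(D) = \arg\min_{D' \in \Delta(C)} f(D, D')$. The affinity of $f(D, \cdot)$ implies that $B(D)$ is a non-empty face of the simplex $\Delta(C)$, hence non-empty, compact, and convex. Joint continuity of $f$ on the compact set $\Delta(C) \times \Delta(C)$, a consequence of $f$ being a polynomial in the finitely many coordinates, implies via Berge's maximum theorem that $B$ is upper hemi-continuous. Since $\Delta(C)$ is a compact convex subset of Euclidean space, Kakutani's fixed-point theorem then produces a $D^* \in \Delta(C)$ with $D^* \in B(D^*)$.

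It remains to identify $D^*$ as a stable $k$-lottery, which is a one-line application of Observation~\ref{obs:lottery_sym}. By the fixed-point property, $f(D^*, D^*) \leq f(D^*, D')$ for every $D' \in \Delta(C)$, and by Observation~\ref{obs:lottery_sym} (applied with $D = D^*$), $f(D^*, D^*) = 1 - \tfrac{1}{k+1}$. Combining, $f(D^*, D') \geq 1 - \tfrac{1}{k+1}$ for all $D'$, which is exactly the defining inequality of a stable $k$-lottery. The main subtlety in this plan is verifying the hypotheses of Kakutani's theorem, and the only non-obvious ingredient is that $B$ is convex-valued; as noted above this reduces to the affinity of $f$ in its second argument, which is immediate from linearity of expectation over the single draw from $D'$.
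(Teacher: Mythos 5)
Your proof is correct, but it takes a genuinely different route from the paper's. The paper sets up the zero-sum game with payoff $\Pr_{v\sim V}[D^k \succ_v D']$, invokes the minimax theorem for mixed strategies over the (already continuous) strategy spaces, and then collapses mixed strategies to pure ones: linearity in $D'$ handles the minimizing player, while for the maximizing player it uses Jensen's inequality on the concave function $x \mapsto 1-(1-x)^k$ to show that averaging a mixture $\Delta$ of lotteries into $D^* = \E[\Delta]$ can only help. A pure equilibrium $(D^*, D^{**})$ then exists, and \cref{obs:lottery_sym} sandwiches the game value at $1-\frac{1}{k+1}$, certifying $D^*$ as a stable $k$-lottery (and $D^{**}$ as a stable defensive one). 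You instead apply Kakutani's fixed-point theorem directly to the adversary's best-response correspondence $B(D) = \arg\min_{D'} f(D,D')$, using only affinity of $f$ in $D'$ (for convex values) and joint continuity (for upper hemicontinuity via Berge), and then the same symmetry observation turns the fixed point $D^* \in B(D^*)$ into the stability inequality. What your approach buys: it needs no concavity of $f$ in $D$ and no mixed-to-pure reduction, so it bypasses the identity $\Pr[D^k\succ_v D'] = 1-(1-\Pr[D\succ_v D'])^k$ on which the paper's Jensen step rests, and the observation that a fixed point of the minimizer's response alone suffices (because $f(D,D)$ is pinned at $1-\frac{1}{k+1}$ for every $D$) is a nice simplification. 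What the paper's approach buys: it produces the equilibrium pair and the exact game value in one stroke, giving the stable defensive $k$-lottery simultaneously (your argument yields only the stable $k$-lottery, though the paper anyway derives defensiveness separately in \cref{thm:stable_is_defensive_stable}). One small caveat: your parenthetical that von Neumann's theorem ``does not apply since $f$ is not bilinear'' is slightly misleading—the paper does make a minimax argument work by passing to mixed strategies—but this does not affect the validity of your proof.
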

\begin{hideproofBone}[Proof of \Cref{thm:existence_lottery}]
Following \cite{DBLP:journals/teco/ChengJMW20,jiang2020approximately}, we prove the theorem using the minimax theorem \cite{v1928theorie}. 
Consider the zero-sum game between two players $A$ and $B$, where player $A$ plays a distribution $D$ on the candidates and player $B$ plays a distribution $D'$. The value of the game is $\Pr_{v \sim V}[D^k \succ_v D']$. Player $A$ wishes to maximize the value, while Player $B$ wishes to minimize the value.

First, we show that there is a pure-strategy Nash equilibrium.
On the one hand, for any distribution $\Delta$ over distributions $D$, if we set $D^*=\E[\Delta]$, then for any distribution $D'$ over candidates and for any fixed $v \in V$,
\begin{equation}
\label{eqn: concave}
\Pr_{D^* = \E[\Delta]}[({D^*})^k \succ_v D'] \geq \Pr_{D \sim \Delta}[D^k \succ_v D'].    
\end{equation}
This is because the left-hand side is
\[
\Pr_{D^* = \E[\Delta]}[({D^*})^k \succ_v D'] = 1 - \left(1 - \E_{D \sim \Delta}[\Pr[D \succ_v D']]\right)^k
\]
and the right-hand side is
\[
\Pr_{D \sim \Delta}[D^k \succ_v D'] = \E_{D \sim \Delta}\left[1 - \left(1 - \Pr[D \succ_v D']\right)^k\right].
\]
\cref{eqn: concave} follows from applying Jensen's inequality on the concave function $f(x) = 1 - (1 - x)^k$.
Therefore, against any mixed strategy $\Delta'$ on $D'$, the pure strategy $D^*$ achieves at least the same value as $\Delta$.

On the other hand, for each distribution $\Delta'$ over distributions $D'$, if we set $D^{**}=\E[\Delta']$,
then for any distribution $D$ and any fixed $v \in V$, by definition, we have
\[
\Pr[{D}^k \succ_v D^{**}] = \Pr_{D' \sim \Delta'}[D^k \succ_v D'].
\]
Therefore, against any mixed strategy $\Delta$ on $D$, the pure strategy $D^{**}$ achieves at most the same value as $\Delta'$.
Since there is always a mixed-strategy Nash equilibrium in this zero-sum game, from our discussion above, we conclude that there is always a pure-strategy Nash equilibrium.

Let $(D^*,D^{**})$ be a pure-strategy Nash equilibrium. From \Cref{obs:lottery_sym}, we have
\[
1 - \frac{1}{k + 1} = \Pr_{v \sim V}[(D^{**})^k \succ_v D^{**}] \leq \Pr_{v \sim V}[(D^*)^k \succ_v D^{**}] \leq \Pr_{v \sim V}[(D^{*})^k \succ_v D^{*}] = 1 - \frac{1}{k + 1}.
\]
Therefore, the value of the game must be $1-\frac{1}{k+1}$, and hence by definition, $D^*$ is a stable $k$-lottery, and $D^{**}$ is a stable defensive $k$-lottery.
\end{hideproofBone}

\NewEnviron{hideproofBtwo}[1][Proof]{%
  \ifshowproofs
    \begin{proof}[#1]
      \BODY
    \end{proof}
  \else
    \par\noindent\textit{#1.} See \Cref{prf:sr-lemma}. \hfill$\square$%
  \par\nobreak\vskip\medskipamount
  \fi
}

\subsection{The Stability-Representation Lemma}\label{sec:sr-lemma}

In this section, we introduce \nameref{lem:main}. It quantitatively characterizes one property of stable $k$-lotteries: if many voters think that a single candidate $i \in C \setminus J$ is better than all candidates in $J$, then a stable $k$-lottery cannot pick a candidate in $J$ too often; in other words, the subset $C \setminus J$ must be well-represented in the stable $k$-lottery.

\begin{lemma}[the stability-representation lemma]
\label{lem:main}
Fix any $J \subsetneq C$ and $i \in C \setminus J$. For a stable $k$-lottery $D$ with probability of $p_J$ to pick a candidate in $J$, it holds that for some $g(k) = \Theta\Big(\sqrt{\frac{\log k}{k}}\Big)$,
\[
p_J - g(k) \ \leq \ 1 - s_{i \succ J}.
\]
\end{lemma}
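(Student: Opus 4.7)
The plan is to test the stability of $D$ against the single-candidate alternative $D' = \delta_i$. By \cref{def:stable},
\[
\E_{v \sim V}\bigl[\Pr[\delta_i \succ_v D^k]\bigr] \;\leq\; \tfrac{1}{k+1},
\]
where the inner probability is over both the $k$ i.i.d.\@ draws forming $D^k$ and the tie-breaking attachments. For any voter $v \in S_{i \succ J}$, all of $J$ sits strictly below $i$ in $v$'s ranking, so if every one of the $k$ draws from $D$ lands in $J$ (an event of probability $p_J^k$), then $i$ is $v$'s unique favorite among $\{i\} \cup D^k$ and $\delta_i \succ_v D^k$ holds deterministically with no tie-breaking needed. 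This yields the pointwise lower bound $\Pr[\delta_i \succ_v D^k] \geq p_J^k$, and averaging gives the multiplicative estimate $s_{i \succ J} \cdot p_J^k \leq \tfrac{1}{k+1}$.

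This estimate is too weak by itself to imply the lemma: maximizing $p + s$ over $[0,1]^2$ subject to $s\cdot p^k \leq \tfrac{1}{k+1}$ allows $p + s$ as large as $1 + (1/(k+1))^{1/k}$, leaving a deficit close to~$1$ rather than~$\sqrt{\log k / k}$. I plan to sharpen the pointwise bound using the exact expression obtained by unpacking the tie-breaking rule of \cref{subsec:prelim/stable},
\[
\Pr[\delta_i \succ_v D^k] \;=\; \frac{(p_i + c_v)^{k+1} - c_v^{k+1}}{(k+1)\,p_i} \;=\; \frac{1}{k+1}\sum_{j=0}^{k}(p_i + c_v)^{j}\,c_v^{k-j},
\]
where $c_v \geq p_J$ is the mass that $D$ places strictly below $i$ in $v$'s ranking. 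I will then couple this with a Chernoff-type concentration bound on the $\mathrm{Bin}(k, p_J)$ variable counting how many draws from $D^k$ land in $J$: when $p_J$ is close to~$1$, with probability $1 - e^{-\Omega(k(1 - p_J)^2)}$ essentially every draw lands in $J$, and the geometric-series form above boosts $\Pr[\delta_i \succ_v D^k]$ far above $p_J^k$ for every $v \in S_{i \succ J}$.

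Substituting this boosted pointwise bound into the stability inequality and inverting should then produce $p_J + s_{i \succ J} \leq 1 + O(\sqrt{\log k / k})$. The main obstacle is executing the Chernoff--geometric-series combination tightly across all parameter regimes simultaneously: the pointwise formula interpolates between a concentration-dominated regime (where the sum is close to~$1$ because $p_i + c_v$ is close to~$1$) and a geometric-dominated regime (where only the $j=0$ term contributes meaningfully), and the claimed rate $\sqrt{\log k / k}$ emerges naturally as the smallest~$t$ at which the $\mathrm{Bin}(k, p_J)$ Chernoff tail $e^{-\Omega(k t^2)}$ drops below~$1/k$.
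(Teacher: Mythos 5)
There is a genuine gap: your entire derivation uses the stability of $D$ against the single fixed alternative $D' = \delta_i$, and that one constraint is information-theoretically too weak to imply the lemma, no matter how tightly you combine the tie-breaking formula with Chernoff bounds. The ``boost'' in your exact expression is driven entirely by $p_i$, the mass that $D$ itself places on candidate $i$; nothing forces a stable lottery to place any mass on $i$, and as $p_i \to 0$ your formula collapses back to $\Pr[\delta_i \succ_v D^k] = c_v^k$, i.e.\ exactly the naive estimate $s_{i\succ J}\,p_J^k \le \frac{1}{k+1}$ that you already identified as insufficient (this estimate is the paper's \cref{lem:small_stability_representation}, which is used only in a separate small-$k$ argument). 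Concretely, take $p_i = 0$, put mass $1 - \tfrac{2\ln k}{k}$ on $J$ and mass $\tfrac{2\ln k}{k}$ on a single candidate $a \in I\setminus\{i\}$, and let every voter rank $a$ above $i$ and $i$ above all of $J$ (so $s_{i\succ J} = 1$ and $c_v = p_J$ for every voter). Then every quantity you rely on --- the inequality $\E_v[\Pr[\delta_i\succ_v D^k]] \le \frac{1}{k+1}$, the identity for $\Pr[\delta_i\succ_v D^k]$, the bound $c_v \ge p_J$, and the $\mathrm{Bin}(k,p_J)$ concentration --- holds, yet $p_J + s_{i\succ J} = 2 - O\big(\tfrac{\log k}{k}\big)$, far above the target $1 + O\big(\sqrt{\log k/k}\big)$. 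Hence no argument built only from these facts can close the gap; whether such a $D$ can actually be stable is beside the point, since your proof never invokes stability against any other opponent.

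The missing idea is to test $D$ against a much richer, adaptively chosen opponent. The paper's proof restricts attention to the voter set $S = S_{i\succ J}$, constructs a stable \emph{defensive} $k'$-lottery $D^*$ on $S$ (using \cref{thm:stable_is_defensive_stable}) with $k' \approx (1-p_J)k + \Theta(\sqrt{k\log k})$, and plays $D$ against $D^*$. Stability of $D$ forces $\Pr_{v\sim S}[D^k \succ_v D^*] \ge 1 - \frac{1}{s_{i\succ J}(k+1)}$, while a coupling argument (\cref{lem:lottery_on_S}: for voters in $S$, the roughly $p_J k$ draws landing in $J$ can be replaced by a small number of copies of $i$ without hurting, so $D^k$ is nearly dominated by $\hat D^{k'}$) combined with the defensive guarantee caps this probability at $1 - \frac{1}{k'+1} + o(1)$. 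Comparing the two game values yields $s_{i\succ J}\,k \lesssim k'$, which is exactly $p_J + s_{i\succ J} \le 1 + \Theta\big(\sqrt{\log k/k}\big)$. Your Chernoff-on-$\mathrm{Bin}(k,p_J)$ instinct does appear in the paper, but inside this coupling step, not as a sharpening of the $\delta_i$ test; if you want to salvage your route, you would have to replace $\delta_i$ by an opponent whose strength scales with the structure of $S$, which is essentially the paper's construction.
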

\begin{hideproofBtwo}[Proof of \Cref{lem:main}]
Let $S$ be the subset of voters with the preference $i \succ J$. Let $\alpha = s_{i \succ J} = \frac{|S|}{n}$ and $\lambda = p_J$.

Let $D^*$ be a stable $k$-lottery and hence a stable defensive $k'$-lottery (\cref{def:stable}, \cref{thm:stable_is_defensive_stable}) on the set $S$, with $k' = \Big\lceil(1 - \lambda)k + 2\sqrt{k} + \sqrt{k\ln k}\Big\rceil$. Since $D$ is a stable $k$-lottery, by definition, it holds that
\[
\Pr_{v \sim V} [D^k \succ_v D^*] \geq 1 - \frac{1}{k + 1}.
\]
This inequality and the fact that
\begin{align*}
\Pr_{v \sim V} [D^k \succ_v D^*] &= \alpha \cdot \Pr_{v \sim S} [D^k \succ_v D^*] + (1 - \alpha) \cdot \Pr_{v \sim (V \setminus S)} [D^k \succ_v D^*]\\
&\leq \alpha \cdot \Pr_{v \sim S} [D^k \succ_v D^*] + 1 - \alpha
\end{align*}
together imply
\[
\Pr_{v \sim S} [D^k \succ_v D^*] \geq 1 - \frac{1}{\alpha (k + 1)}.
\]

Denote $I=C\setminus J$, and let $D|_{I}$ be the conditional distribution of $D$ on the subset $I$ (that is, for each $t\in I$, $\Pr_{D|_{I}}[t]=\Pr_{D}[t\mid t\in I]$).
We define another distribution $\hat D$ as follows:
\begin{itemize}
\item With probability $\frac{\sqrt{k}}{k'}$, it chooses the candidate $i$.
\item With probability $1-\frac{\sqrt{k}}{k'}$, it draws a sample from the distribution $D|_I$.
\end{itemize}
By \cref{lem:lottery_on_S} which we prove next, we have
\[
\Pr_{v \sim S} [{\hat D}^{k'} \succ_v D^*] \geq \Pr_{v \sim S} [D^k \succ_v D^*] - \frac{1}{k^2} - 2e^{-\frac{\sqrt{k}}{3}} \geq 1 - \frac{1}{\alpha (k + 1)} - \frac{1}{k^2} - 2e^{-\frac{\sqrt{k}}{3}}.
\]
On the other hand, since $D^*$ is a stable defensive $k'$-lottery on $S$, by definition,
\[
\Pr_{v \sim S} [{\hat D}^{k'} \succ_v D^*] \leq 1 - \frac{1}{k' + 1}.
\]
Putting them together,
\[
1 - \frac{1}{\alpha (k + 1)} - \frac{1}{k^2} - 2e^{-\frac{\sqrt{k}}{3}} \leq 1 - \frac{1}{k' + 1}.
\]
As $\alpha\le 1$, the above inequality implies that
\[
\alpha k \leq k' + 1 \leq (1 - \lambda)k + \Theta( \sqrt{k \log k} ).
\]
The lemma statement follows from rearranging the terms.
\end{hideproofBtwo}

\begin{lemma}
\label{lem:lottery_on_S}
The following inequality holds for $D$, $\hat D$, and any distribution $D'$ over the candidates.
\[
\Pr_{v \sim S} [{\hat D}^{k'} \succ_v D'] \geq \Pr_{v \sim S} [D^k \succ_v D'] - \frac{1}{k^2} - 2e^{-\frac{\sqrt{k}}{3}}.
\]
Recall that (I) $k' = \Big\lceil(1 - \lambda)k + 2\sqrt{k} + \sqrt{k\ln k}\Big\rceil$, (II) $D$ satisfies $p_J = \lambda$, (III) $\hat D$ is a combination of $\{i\}$ with weight $\frac{\sqrt{k}}{k'}$ and $D|_I$ (where $I=C\setminus J$) with weight $1 - \frac{\sqrt{k}}{k'}$, and (IV) $i \succ J$ for voters in $S$.
\end{lemma}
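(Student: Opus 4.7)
The plan is to construct a coupling of $D^k$ and $\hat{D}^{k'}$ (together with a common draw $m \sim D'$) such that, on a high-probability event $E$, any voter $v \in S$ that prefers $D^k$ to $D'$ also prefers $\hat{D}^{k'}$ to $D'$. The driving intuition is that for $v \in S$, the distinguished candidate $i$ strictly beats every candidate in $J$, so the explicit $i$-copies inside $\hat{D}^{k'}$ more than compensate for the draws of $D^k$ that fall into $J$ and are therefore ``wasted'' from $v$'s perspective.

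First I would describe the coupling. Draw $X_I \sim \mathrm{Bin}(k, 1-\lambda)$ and an infinite i.i.d.\ sequence $Z_1, Z_2, \ldots \sim D|_I$; take $D^k$ to consist of $Z_1, \ldots, Z_{X_I}$ together with $X_J := k - X_I$ fresh i.i.d.\ draws from $D|_J$. Independently draw $Y_i \sim \mathrm{Bin}(k', \sqrt{k}/k')$ and take $\hat{D}^{k'}$ to consist of $Y_i$ copies of $i$ together with $Z_1, \ldots, Z_{Y_I}$ (reusing the \emph{same} $Z_j$'s), where $Y_I := k' - Y_i$. Tie-breaking tags are attached as follows: for each coupled $Z_j$ (with $j \leq \min(X_I, Y_I)$) a single uniform $[0,1]$ tag is shared between its two copies; all other copies and the sample $m \sim D'$ receive independent fresh tags.

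Next I would define the good event $E = \{Y_i \geq 1\} \cap \{Y_I \geq X_I\}$ and bound $\Pr[E^c]$ using standard concentration. With threshold $T = (1-\lambda)k + \sqrt{k\ln k}$, the inclusion $\{Y_I < X_I\} \subseteq \{X_I > T\} \cup \{Y_I \leq T\}$ reduces this to two one-sided deviations. Hoeffding's inequality gives $\Pr[X_I > T] \leq e^{-2\ln k} = 1/k^2$; the multiplicative Chernoff bound gives $\Pr[Y_i \geq 2\sqrt{k}] \leq e^{-\sqrt{k}/3}$, and the inequality $k' - T \geq 2\sqrt{k}$ (guaranteed by the choice of $k'$) turns this into a bound on $\Pr[Y_I \leq T]$; finally $\Pr[Y_i = 0] \leq e^{-\sqrt{k}}$. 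A union bound then yields $\Pr[E^c] \leq 1/k^2 + 2e^{-\sqrt{k}/3}$.

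The heart of the argument is the dominance claim on $E$: for every $v \in S$, if the top-ranked copy of $D^k \cup \{m\}$ under $v$'s preference (with tag-breaking) lies in $D^k$, then the top-ranked copy of $\hat{D}^{k'} \cup \{m\}$ lies in $\hat{D}^{k'}$. I would case-split on the candidate $c$ of the winning copy. If $c \in I$, the winner is some $Z_j$ with $j \leq X_I \leq Y_I$, which also sits in $\hat{D}^{k'}$ with the same tag, so the identical pairwise comparison against $m$ still succeeds. If $c \in J$, then $v \in S$ forces $i \succ_v c$; because $Z_j$ beat $m$ in the first comparison we get $c \succeq_v m$'s candidate (with $i \neq c$ ruling out any tag-tie pathology), so any of the $Y_i \geq 1$ copies of $i$ in $\hat{D}^{k'}$ strictly beats $m$. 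In either case, the winner of $\hat{D}^{k'} \cup \{m\}$ cannot be $m$, so $v$ prefers $\hat{D}^{k'}$ to $D'$. The main obstacle will be executing this case split cleanly alongside the tag-breaking rule---in particular, verifying that the coupled tag of a winning $I$-copy remains the pool maximum in the larger multiset $\hat{D}^{k'}$, and keeping strict-preference cases and tag-tie cases cleanly separated. Combining dominance on $E$ with the bound on $\Pr[E^c]$ then yields the stated inequality.
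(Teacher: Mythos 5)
Your proposal is correct and follows essentially the same route as the paper's proof: the paper likewise conditions on the two concentration events (at least one and at most $2\sqrt{k}$ copies of $i$ in $\hat{D}^{k'}$, and at most $(1-\lambda)k + \sqrt{k\ln k}$ draws of $D^k$ from $D|_I$), bounds them by the same Chernoff/Hoeffding estimates giving the error terms $\frac{1}{k^2}$ and $2e^{-\sqrt{k}/3}$, and then invokes first-order stochastic dominance of $\hat{D}^{k'}$ over $D^k$ for voters in $S$. Your explicit coupling of the $D|_I$-draws and the shared tie-breaking tags is simply a more detailed execution of the dominance step that the paper states in one line.
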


By definition of $\lambda$, the distribution $D$ can be alternatively defined as follows: 
\begin{itemize}
\item With probability $\lambda$, it draws a sample from the distribution $D|_J$.
\item With probability $1-\lambda$, it draws a sample from the distribution $D|_I$.
\end{itemize}

Intuitively, $k$ samples from distribution $D$ consist of $\lambda k$ samples from $D|_J$ and  $(1 - \lambda) k$ samples from $D|_I$. Since voters in $S$ prefer $i$ than all candidates in $J$, the lottery $D^k$ is almost stochastically dominated by ${\hat D}^{k'}$ due to concentration inequalities.

\begin{hideproofBtwo}[Proof of \cref{lem:lottery_on_S}]
Let $A_1$ be the event that ${\hat D}^{k'}$ chooses element $i$ (as opposed to drawing from $D|_I$) at least $1$ and at most $2 \sqrt{k}$ times. Let $A_2$ be the event that $D^k$ draws samples from $D|_I$ (as opposed to $D|_J$) at most $(1 - \lambda) k + \sqrt{k \ln k}$ times. We will next prove that each of the events $A_1$ and $A_2$ happens with probability close to $1$.

\paragraph{For $A_1$:} In expectation, ${\hat D}^{k'}$ chooses element $i$ for $\sqrt{k}$ times. By a multiplicative Chernoff bound,
\[
\Pr[A_1] \geq 1 - 2e^{-\frac{\sqrt{k}}{3}}.
\]

\paragraph{For $A_2$:} In expectation, $D^k$ draws $(1 - \lambda) k$ times from $D|_I$. By an additive Chernoff bound,
\[
\Pr[A_2] \geq 1 - \frac{1}{k^2}.
\]

Note that, conditioned on the event $A_1 \land A_2$,  $D^k$ is first-order stochastically dominated by ${\hat D}^{k'}$ in the preference ordering of every voter in $S$. The lemma holds since $\Pr[A_1 \land A_2] \geq 1 - \frac{1}{k^2} - 2e^{-\frac{\sqrt{k}}{3}}$. 
\end{hideproofBtwo}

\NewEnviron{hideproofBthree}[1][Proof]{%
  \ifshowproofs
    \begin{proof}[#1]
      \BODY
    \end{proof}
  \else
  \par\noindent\textit{#1.} See \Cref{prf:det-k}. \hfill$\square$%
  \par\nobreak\vskip\medskipamount
  \fi
}

\subsection{Deterministic Selection in \texorpdfstring{$k$}{k}-Tournaments: \texorpdfstring{\nameref{box:simul_veto}}{Simultaneous Lottery Veto}}\label{sec:det-k}
We first present the notion of a quasi-kernel, which we will use in \nameref{box:quasi-kernel}. It refers to an independent set in a directed graph that can reach all vertices in at most two hops.
\begin{definition}[quasi-kernel \cite{chvatal1974every}]
\label{def:quasi-kernel}
In any (unweighted) directed graph $G = (V, E)$, a \emph{quasi-kernel} (sometimes called a \emph{semi-kernel}) $K$ is an independent set of vertices such that for any vertex $v \in V \setminus K$, there is a path of length at most $2$ from a vertex $u \in K$ to the vertex $v$.
\end{definition}
Quasi-kernels always exist \cite{chvatal1974every} and are efficiently computable via a simple algorithm \cite{croitoru2015note}.
\begin{theorem}[\cite{chvatal1974every,croitoru2015note}]
Every directed graph $G = (V, E)$ has a quasi-kernel, and one can be computed in $O(|V| + |E|)$ time.
\end{theorem}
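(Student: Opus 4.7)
The plan is to follow the classical two-step approach: establish existence by induction on $|V|$, and then implement the inductive construction as a linear-time recursive algorithm. For existence, pick any vertex $v \in V$ and let $N = N^+(v)$ denote its set of out-neighbors. Apply the inductive hypothesis to $G' = G[V \setminus (\{v\} \cup N)]$ to obtain a quasi-kernel $K'$ of $G'$. Then split into two cases based on whether any vertex of $K'$ has an arc into $v$.

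In case (i), some $u \in K'$ satisfies $u \to v$; take $K = K'$. Independence of $K$ is inherited from $K'$ (since $K' \subseteq V \setminus (\{v\} \cup N)$, no arcs within $K'$ are added when we move from $G'$ to $G$). For the reachability, every vertex of $G'$ is reached from $K'$ at distance $\leq 2$ in $G'$ hence in $G$; the vertex $v$ is reached from $u$ in one hop; and every $w \in N$ is reached from $u$ via the length-$2$ path $u \to v \to w$. In case (ii), no vertex of $K'$ has an arc into $v$; take $K = K' \cup \{v\}$. Independence holds because $v$ has no in-arc from $K'$ by the case assumption and no out-arc into $K' \subseteq V \setminus (\{v\} \cup N)$ by definition of $N$. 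Reachability is immediate: every $w \in N$ is reached from $v$ in one hop, and every remaining vertex of $G'$ is reached by $K'$ in $\leq 2$ hops.

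For the linear-time bound, we translate this recursion into an algorithm: at each level, pick any remaining vertex $v$, mark $\{v\} \cup N^+(v)$ as ``removed'' in $O(1 + \deg^+(v))$ time, and recurse on the rest. Across the whole execution, each vertex and each edge is touched $O(1)$ times since vertices are removed exactly once. The only delicate operation is the case check upon returning from recursion: whether the recursively computed $K'$ contains an in-neighbor of $v$. This can be performed within the overall $O(|V|+|E|)$ budget by pre-storing, for each vertex $v$ that is ``popped'' in the recursion, a pointer to the list of its in-neighbors that are still live at the time $v$ is removed, and then evaluating these lists as the recursion unwinds.

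The main obstacle is not mathematical depth but the two-pronged invariant in the induction: we must simultaneously preserve independence of the output set and the distance-$\leq 2$ domination property when passing from $G'$ back to $G$. The case split is designed precisely to enforce independence, and the reachability argument in case (i) is the subtle part, since it is the single recovered in-neighbor $u$ of $v$ that must push domination all the way through $v$ into $N$---which is exactly why the statement concerns quasi-kernels (distance $\leq 2$) rather than kernels (distance $\leq 1$), the latter of which need not exist.
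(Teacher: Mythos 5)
Your proof is correct: the paper itself does not prove this statement (it cites it as an external fact from Chv\'atal--Lov\'asz and Croitoru), and your induction---remove $v$ and its out-neighbors, recurse, then either keep $K'$ (if some $u \in K'$ points to $v$, covering $v$ in one hop and $N^+(v)$ in two) or add $v$---is exactly the classical argument in those references, with the standard linear-time iterative implementation. One minor simplification: you do not need to pre-store the ``live'' in-neighbors of each pivot; since $K'$ only ever contains vertices that were never removed, it suffices to scan all in-neighbors of $v$ in $G$ against a membership bit-array when the recursion unwinds, which already totals $O(|E|)$ over all pivots.
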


Next, we describe the procedure of \nameref{box:quasi-kernel}. It has in effect been used in the voting rule ``$\beta$-Random Dictatorship on the Quasi-Kernel'' from \cite{DBLP:journals/jacm/CharikarRWW24}; here, we distill the procedure and apply it for other purposes.
\begin{mybox}[label={box:quasi-kernel},nameref={quasi-kernel pruning}]{Quasi-Kernel Pruning (with Parameter $\theta \in (\frac{1}{2}, 1]$)}
\begin{itemize}
    \item[$\triangleright$] Build a directed graph whose vertices represent the candidates in $C$.
    \item[$\triangleright$] Draw an edge from $a \in C$ to $b \in C \setminus \{a\}$ if $s_{a \succ b} \geq \theta$.
    \item[$\triangleright$] Compute a quasi-kernel in the graph and discard all vertices not in the quasi-kernel.
\end{itemize}
\end{mybox}
\cref{lem:pruning_loss} bounds the ``loss'' of \nameref{box:quasi-kernel}.
\begin{lemma}
\label{lem:pruning_loss}
After running \nameref{box:quasi-kernel}, there is always an unpruned candidate whose social cost is at most $\frac{4}{\theta} - 3$ times the optimal social cost.
\end{lemma}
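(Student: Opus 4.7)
The plan is to combine the definition of a quasi-kernel with \Cref{cor:two-step} (the ``two-step'' distortion bound). Let $i^*$ denote an optimal candidate (one minimizing $\SC$), and let $K$ be the quasi-kernel computed by \nameref{box:quasi-kernel} on the directed graph $G$ whose edges are exactly the pairs $(a,b)$ with $s_{a \succ b} \geq \theta$. By the defining property of a quasi-kernel, either $i^* \in K$, or there exists $u \in K$ and a directed path in $G$ of length $1$ or $2$ from $u$ to $i^*$. In each case I will produce an unpruned candidate $j^* \in K$ to which \Cref{cor:two-step} applies, yielding $\SC(j^*) \le \bigl(\tfrac{4}{\theta}-3\bigr)\SC(i^*)$.

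Case 1: $i^* \in K$. Then $i^*$ itself is unpruned and the bound holds trivially (with ratio $1 \le \frac{4}{\theta}-3$ since $\theta \le 1$).

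Case 2: $i^* \notin K$. If the witnessing path from $u \in K$ to $i^*$ has length $1$, it is a single edge $u \to i^*$, so $s_{u \succ i^*} \ge \theta$. Apply \Cref{cor:two-step} with $j^* = u$ and $k = i^*$: the hypothesis $s_{j^* \succ k} \ge \theta$ holds and $k = i^*$, so $\SC(u) \le \bigl(\tfrac{4}{\theta}-3\bigr)\SC(i^*)$. If instead the path has length $2$, it has the form $u \to w \to i^*$ for some intermediate vertex $w$, giving $s_{u \succ w} \ge \theta$ and $s_{w \succ i^*} \ge \theta$. Apply \Cref{cor:two-step} with $j^* = u$ and $k = w$: both hypotheses $s_{j^* \succ k} \ge \theta$ and $s_{k \succ i^*} \ge \theta$ hold, so $\SC(u) \le \bigl(\tfrac{4}{\theta}-3\bigr)\SC(i^*)$. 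In either subcase the unpruned candidate $u \in K$ witnesses the claimed bound.

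There is not really a significant obstacle here: the lemma is essentially a direct packaging of the quasi-kernel's ``reach in at most two hops'' guarantee together with the previously proved two-step distortion bound from \Cref{cor:two-step}. The only thing to be careful about is handling the length-$0$ vs.\ length-$1$ vs.\ length-$2$ cases of the reaching path cleanly (and noting that length $0$ coincides with Case 1 above since a quasi-kernel is an independent set, so $u = i^*$ is only possible when $i^* \in K$).
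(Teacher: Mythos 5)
Your proof is correct and is exactly the argument the paper intends: its own proof of \Cref{lem:pruning_loss} is the one-line remark that \Cref{cor:two-step} together with the two-hop property of quasi-kernels implies the statement, and your case analysis (optimal candidate in the quasi-kernel, or reached by a path of length $1$ or $2$, invoking \Cref{cor:two-step} with $k=i^*$ or $k=w$ respectively) is just the spelled-out version of that. No gaps.
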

\begin{proof}
\cref{cor:two-step} and the two-hop property of quasi-kernels imply the lemma statement.
\end{proof}

Our voting rule requires the following definition.
\begin{definition}[reverse stable lotteries]
For a set $C$ of candidates and an ordering $\sigma$ on $C$, the \emph{reverse ordering} of $\sigma$ is an ordering $\sigma^R$ such that $c\succ_{\sigma} c'$ if and only if $c'\succ_{\sigma^R} c$. Given a preference profile, its \emph{reverse stable $k$-lottery} is the stable $k$-lottery on the profile in which the preference ordering of each voter is replaced by its reverse ordering.
\end{definition}

Now we are ready to present \nameref{box:simul_veto}. It is inspired by the Simultaneous Veto rule of \cite{DBLP:conf/sigecom/Kizilkaya023}.
\begin{mybox}[label={box:simul_veto},nameref={Simultaneous Lottery Veto}]{The \underline{Simultaneous Lottery Veto} Voting Rule (with Parameters $k, \theta$)}
\begin{itemize}
\item[$\triangleright$] Run \nameref{box:quasi-kernel} with parameter $\theta$. Let $\hat C$ be the remaining candidates.

\item[$\triangleright$] Compute a stable $k$-lottery on $\hat C$. For each $c\in \hat C$, initialize $\score(c)$ as the probability of picking $c$ in the stable $k$-lottery.

\item[$\triangleright$] At each time $t \in [0, 1]$, compute a reverse stable $k$-lottery $\Delta_t$ on the set of candidates $\{c \in \hat C \mid \score(c) > 0\}$. Decrease the vector $\score(\cdot)$ at rate $\Delta_t$.

\item[$\triangleright$] The candidate $c$ with positive $\score(c)$ for the longest time wins. (Break ties arbitrarily.)
\end{itemize}
\end{mybox}

\cite{DBLP:conf/sigecom/Kizilkaya023} proves that Simultaneous Veto has distortion $3$, which is optimal among deterministic ranked voting rules. \cite{DBLP:journals/jacm/CharikarRWW24} has observed that the similar rule of Plurality Veto \cite{DBLP:conf/ijcai/KizilkayaK22} can be proved within the biased metric framework. In fact, the biased metric framework admits simple analysis for Simultaneous Veto and \nameref{box:simul_veto} as well.

We start our proof with the following observation.
\begin{observation}
\label{obs: score}
For every subset $J\subseteq C$, $\int_0^1 \Delta_t(J) \d t = \score(J)$.
\end{observation}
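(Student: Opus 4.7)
The plan is to recognize the score-decrease process as a simple ODE and integrate it. The rule prescribes $\frac{d}{dt}\score_t(c) = -\Delta_t(c)$ for each candidate $c\in\hat C$, where $\score_t$ denotes the score at time $t$, and the observation asks exactly for the total decrease of the score of $J$ over $[0,1]$.

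First, I would show that every candidate's score is driven to zero by time $t=1$. Since $\Delta_t$ is a probability distribution on the candidates with strictly positive score, we have $\sum_{c\in\hat C}\Delta_t(c)=1$ whenever at least one candidate still has positive score. In particular, the total score $\sum_{c\in\hat C}\score_t(c)$ decreases at unit rate, and because a candidate that has just reached zero is excluded from the support of $\Delta_t$ (so cannot turn negative), the total score cannot go below zero. Starting from $\sum_{c\in\hat C}\score_0(c)=1$ (the stable $k$-lottery is a probability distribution on $\hat C$), this forces $\sum_{c\in\hat C}\score_t(c)=\max(0,\,1-t)$, so at $t=1$ the total is zero, and therefore each individual $\score_1(c)$ is zero.

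Second, I would integrate per candidate. By the fundamental theorem of calculus applied to the ODE above,
\[
\score_0(c) - \score_1(c) \;=\; \int_0^1 \Delta_t(c)\,\mathrm{d}t.
\]
Summing this identity over $c\in J$ and using $\score_1(c)=0$ for each such $c$ yields $\score_0(J)=\int_0^1\Delta_t(J)\,\mathrm{d}t$, which under the convention that $\score(J)$ refers to the initial scores is exactly the claim.

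No real obstacle is anticipated; this is essentially a conservation-of-mass argument for an ODE with a nonnegative rate vector of unit total weight. The only subtle point is making sure that once a candidate's score reaches zero it is not pushed below zero, which is exactly why the rule restricts the support of $\Delta_t$ to candidates with $\score(c)>0$.
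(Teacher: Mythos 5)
Your proposal is correct and follows essentially the same argument as the paper: the total score starts at $1$ and decreases at unit rate (since $\Delta_t$ is a probability distribution on candidates with positive score), so all scores reach zero by time $1$, and integrating the per-candidate decrease over $[0,1]$ gives the claim. Your explicit handling of nonnegativity via the support restriction is a minor elaboration of what the paper leaves implicit.
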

\begin{hideproofBthree}[Proof of \Cref{obs: score}]
Initially, since $\score(\cdot)$ is a distribution, $\sum_c\score(c)=1$. At any time $t$, the rate that the value $\sum_c\score(c)$ is being decreased is $\sum_c \Delta_t(c)=1$. Therefore, after $1$ unit of time, the value of $\sum_c\score(c)$ becomes $0$ and so $\score(\cdot)$ becomes an all-$0$ vector, and therefore for each subset $J\subseteq C$, the total amount that the coordinates $\{\score(c)\}_{c\in J}$ have been decreased is $\int_0^1 \Delta_t(J) \d t = \score(J)$.
\end{hideproofBthree}

\begin{theorem}
\label{thm:distortion_simul_veto}
The distortion of \nameref{box:simul_veto} with parameter $\theta = \frac{1}{1 + \sqrt{g(k)}}$ (where $g(k)$ is  as in \nameref{lem:main}) is at most $3 + O\left(\left(\frac{\log k}{k}\right)^{\frac{1}{4}}\right)$.
\end{theorem}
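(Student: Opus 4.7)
The plan is to use the biased metric framework of \Cref{thm:biased-iff} together with two applications of \nameref{lem:main}: one to the initial stable $k$-lottery $D$ and one, pointwise in time $\tau$, to the reverse stable $k$-lotteries $\Delta_\tau$. Fix a biased metric with $x_{i^*} = 0$, let $j^*$ be the winner of \nameref{box:simul_veto}, and recall $I_t = \{c : x_c \leq t\}$, $J_t = C \setminus I_t$. First I would dispose of the pruning step. If $i^* \notin \hat C$, \Cref{lem:pruning_loss} furnishes an $i' \in \hat C$ with $\SC(i') \leq (\tfrac{4}{\theta} - 3)\SC(i^*)$, so it suffices to prove distortion $3 + O(\sqrt{g(k)})$ against an optimal candidate in $\hat C$; the multiplicative loss $\tfrac{4}{\theta}-3 = 1 + O(\sqrt{g(k)})$ will be absorbed at the end. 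Henceforth assume $i^* \in \hat C$. Since $\hat C$ is independent in the pruning graph, $s_{i^* \succ j^*} < \theta$ and hence $s_{j^* \succ i^*} > 1-\theta$, and the stacked-blocks interpretation gives the crucial bound $2\SC(i^*) \geq x_{j^*}\, s_{j^* \succ i^*} > (1-\theta)\, x_{j^*}$, which will convert additive error into multiplicative error.

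The core of the proof is the pointwise inequality
\[
s_{I_t \succ j^*} \ \leq \ 1 - s_{i^* \succ J_t} + 2g(k), \qquad 0 \leq t < x_{j^*}.
\]
To prove it I would apply \nameref{lem:main} twice. First, applying it to the stable $k$-lottery $D$ on $\hat C$ with the set $J_t \cap \hat C$ and the candidate $i^* \in \hat C \setminus J_t$ yields $D(J_t) = D(J_t \cap \hat C) \leq 1 - s_{i^* \succ J_t \cap \hat C} + g(k) \leq 1 - s_{i^* \succ J_t} + g(k)$. Second, for each time $\tau \in [0,1)$, applying the lemma (with preferences reversed) to the reverse stable $k$-lottery $\Delta_\tau$ on the active set $A_\tau$ with the set $I_t \cap A_\tau$ and the candidate $j^* \in A_\tau \setminus I_t$ gives
\[
\Delta_\tau(I_t \cap A_\tau) \leq 1 - s_{I_t \cap A_\tau \succ j^*} + g(k) \leq 1 - s_{I_t \succ j^*} + g(k),
\]
so $\Delta_\tau(J_t \cap A_\tau) \geq s_{I_t \succ j^*} - g(k)$. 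Integrating in $\tau$ and using \Cref{obs: score} together with the fact that $D$ and every $\Delta_\tau$ are supported in $\hat C$, we obtain $\int_0^1 \Delta_\tau(J_t \cap A_\tau)\,\d\tau = D(J_t \cap \hat C) = D(J_t)$, which chained with the first bound produces the displayed pointwise inequality.

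Integrating $t$ from $0$ to $x_{j^*}$ (and noting $s_{I_t \succ j^*} = 0$ for $t \geq x_{j^*}$),
\[
\SC(j^*) - \SC(i^*) = \int_0^{x_{j^*}} s_{I_t \succ j^*}\,\d t \ \leq \ 2\SC(i^*) + 2g(k)\, x_{j^*} \ \leq \ 2\SC(i^*)\left(1 + \tfrac{2g(k)}{1-\theta}\right),
\]
so $\SC(j^*) \leq \big(3 + \tfrac{4g(k)}{1-\theta}\big)\SC(i^*)$. Plugging in $\theta = 1/(1+\sqrt{g(k)})$ gives $\tfrac{4g(k)}{1-\theta} = 4\sqrt{g(k)} + 4g(k)$, and combining with the pruning case yields a total distortion of $3 + O(\sqrt{g(k)}) = 3 + O\!\big((\log k/k)^{1/4}\big)$, as claimed.

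The step I expect to be most delicate is the second application of \nameref{lem:main}, done at each time $\tau$ on the shrinking active set $A_\tau$, and then integrated back to a statement about $D(J_t)$ via score conservation. The partitions $(I_t,J_t)$ from the biased metric live on all of $C$, whereas $D$ and $\Delta_\tau$ are supported on $\hat C \supseteq A_\tau$, so one must carefully check that restricting to $A_\tau$ only makes the inequalities tighter in the right direction. Beyond that, the only geometric ingredient is the quasi-kernel bound $s_{j^* \succ i^*} > 1 - \theta$, which is the one place the pruning parameter $\theta$ enters and determines the optimal $\theta \approx 1$.
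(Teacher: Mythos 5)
Your proof is correct and follows essentially the same route as the paper's: two applications of \nameref{lem:main} (to the initial stable $k$-lottery and, pointwise in time, to the reverse lotteries on the active set), chained via the score-conservation identity of \Cref{obs: score}, the quasi-kernel regularity bound $s_{i^* \succ j^*} \leq \theta$ within $\hat C$, and a final multiplication by the pruning loss $\frac{4}{\theta}-3$ with the same choice of $\theta$. The only cosmetic difference is that you verify the integral condition of \Cref{thm:biased-iff} directly, converting the additive $2g(k)$ slack to a multiplicative factor via $x_{j^*} \leq \frac{2\SC(i^*)}{1-\theta}$, whereas the paper verifies the partition condition of \Cref{cor:lp-condition} and performs the same conversion per partition via $1 - s_{i^* \succ J} \geq 1-\theta$.
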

\begin{hideproofBthree}[Proof of \Cref{thm:distortion_simul_veto}]
Let $j^*$ be the winning candidate of \nameref{box:simul_veto}. Let $I \sqcup J$ be any partition of $C$ with $j^*\in J$. We define $\score(J):=\sum_{c\in J}\score(c)$ and $\Delta_t(J):=\sum_{c\in J}\Delta_t(c)$, and define $\score(I),\Delta_t(I)$ similarly.
By definition, $\score(I)+\score(J)=1$.

From \nameref{lem:main} (\cref{lem:main}), for each $i\in I$,
\[
\score(J) - g(k) \leq 1 - s_{i \succ J}.
\]
Applying \nameref{lem:main} to the reversed preferences, we get that
\[
\Delta_t(I) - g(k) \leq 1 - s_{I\succ j^*}, \qquad \forall t \in (0, 1).
\]
From \Cref{obs: score}, for each $i\in I$,
\[
\begin{split}
s_{I \succ j^*} & \leq \int_0^1 \bigg(1+g(k) - \Delta_t(I)\bigg)\d t \leq 
1+g(k)-\score(I)\\
& \leq 
g(k)+\score(J)\le g(k)+\bigg(1 - s_{i \succ J} +g(k)\bigg)=2g(k)+1-s_{i \succ J}.
\end{split}
\]
From the \nameref{box:quasi-kernel} process, for each pair $i,j$ of candidates in the set $\hat C$ it produces, $s_{i \succ j}\le \theta$. Therefore, $s_{i \succ J} \leq \theta$, so $1 - s_{i \succ J} \geq 1 - \theta$, and so
\[
s_{I \succ j^*} \leq 2g(k)\cdot \frac{1-s_{i \succ J}}{1 - \theta} + (1- s_{i \succ J}) = \left(\frac{2g(k)}{1 - \theta} + 1\right) \cdot (1 - s_{i \succ J}).
\]
From \Cref{cor:lp-condition} and \Cref{lem:pruning_loss}, the distortion of \nameref{box:simul_veto} is at most
\[
\left(1 + 2 \cdot \left(\frac{2g(k)}{1 - \theta} + 1\right)\right) \cdot \left(\frac{4}{\theta} - 3\right).
\]
For $\theta = \frac{1}{1 + \sqrt{g(k)}}$, %
this simplifies to
\[
\left(3 + 4\sqrt{g(k)} + 4g(k)\right) \cdot \left(1 + 4\sqrt{g(k)}\right) = 3 + O\left(\sqrt{g(k)}\right) = 3 + O\left(\left(\frac{\log k}{k}\right)^{\frac{1}{4}}\right). \qedhere
\]
\end{hideproofBthree}

\NewEnviron{hideproofBfour}[1][Proof]{%
  \ifshowproofs
    \begin{proof}[#1]
      \BODY
    \end{proof}
  \else
    \par\noindent\textit{#1.} See \Cref{prf:random-any}. \hfill$\square$%
  \par\nobreak\vskip\medskipamount
  \fi
}

\subsection{Randomized Selection in \texorpdfstring{$k$}{k}-Tournaments: \texorpdfstring{\nameref{box:two_lotteries}}{Pruned Double Lotteries}}\label{sec:random-any}
\begin{mybox}[label={box:two_lotteries},nameref={Pruned Double Lotteries}]{The \underline{Pruned Double Lotteries} Voting Rule (with Parameters $k, \mu, \theta$)}
\begin{itemize}
\item[$\triangleright$] With probability $\mu$, select a candidate according to a stable $1$-lottery.

\item[$\triangleright$] With probability $1 - \mu$, run \nameref{box:quasi-kernel} with parameter $\theta$. Select a candidate according to a stable $k$-lottery on the remaining candidates.
\end{itemize}
\end{mybox}

\begin{theorem}
\label{thm:distortion_two_lotteries}
There is a universal constant $r < 3$ and universal parameters $\mu$ (close to $1$) and $\theta$ (slightly larger than $0.5$), so that the distortion of \nameref{box:two_lotteries} with these parameters is at most $r$ for any $k \geq 2$. (It is a $(k+1)$-tournament rule.)
\end{theorem}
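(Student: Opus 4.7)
The plan follows the proof strategy of \cite{DBLP:journals/jacm/CharikarRWW24}, which showed that a mixture of Maximal Lotteries and a randomized Weighted Uncovered Set has distortion strictly below $3$. I would apply the biased metric framework of \Cref{thm:biased-iff} to the mixture $p = \mu p^{(1)} + (1-\mu) p^{(2)}$, where $p^{(1)}$ is the stable $1$-lottery (Maximal Lotteries) and $p^{(2)}$ is the stable $k$-lottery on the pruned set $\hat C$. Since the integrand in \cref{eq:precise-biased} is linear in $p$, the two contributions can be bounded separately and recombined.

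For the first component, Maximal Lotteries is known to have distortion exactly $3$, yielding $\int_0^\infty \sum_j p^{(1)}_j s_{I_t \succ j} \dif t \leq \int_0^\infty (1 - s_{\forall j \succ i,\, x_j - x_i \leq t}) \dif t$ for every biased metric. For the second component, \Cref{lem:pruning_loss} guarantees a candidate $\hat\imath \in \hat C$ with $\SC(\hat\imath) \leq (\tfrac{4}{\theta}-3) \SC(i^*)$, and \nameref{lem:main} bounds the mass that the stable $k$-lottery places on any $J_t \cap \hat C$ in terms of $s_{\hat\imath \succ J_t \cap \hat C}$. This gives a (weaker than distortion $3$) integral bound on the $p^{(2)}$ contribution that can be expressed in the same biased-metric template as the Maximal Lotteries bound.

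The main obstacle, and the heart of the proof, is extracting a strict improvement below $\lambda = 1$. A naive weighted sum of the two individual bounds yields at best distortion $\mu \cdot 3 + (1-\mu)(\tfrac{4}{\theta}-3)$, which for $\theta$ near $\tfrac12$ exceeds $3$. The improvement must come from showing that the two components cannot simultaneously be tight. Following the case analysis in \cite{DBLP:journals/jacm/CharikarRWW24}, I would characterize the biased metrics on which Maximal Lotteries saturates its distortion-$3$ bound: these metrics force a particular rigid structure on the preferences, with $s_{i^* \succ J_t}$ close to $0$ in the relevant range. On those metrics, \nameref{lem:main} forces $p^{(2)}$ to concentrate away from $J_t$, producing strict slack in the stable $k$-lottery's contribution. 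Conversely, on any metric where Maximal Lotteries already has slack, the $(1-\mu)$-weighted contribution of the stable $k$-lottery is absorbable. A careful bookkeeping across both regimes produces a uniform $\lambda < 1$.

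Finally, the parameters $\mu$ and $\theta$ are tuned to balance the two regimes: $\theta$ slightly above $\tfrac12$ activates the quasi-kernel guarantee of \Cref{lem:pruning_loss} without pruning too aggressively, and $\mu$ close to $1$ exploits Maximal Lotteries' optimality while leaving enough weight on the stable $k$-lottery to cover the pathological cases. Because the improvement depends on the universal slack between the two regimes and not on the precise value of $k$, the same $\mu,\theta$ work for every $k \geq 2$; the error $g(k)$ in \nameref{lem:main} only weakens the second-component bound, and for any $k \geq 2$ the stable $k$-lottery is strong enough—via a direct argument in the small-$k$ case using the basic $\Pr_v[D^k \succ_v D'] \geq 1-\tfrac{1}{k+1}$ guarantee, or via the asymptotic version of \nameref{lem:main} otherwise—to close the gap to some universal $r < 3$.
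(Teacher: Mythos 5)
Your overall architecture is the same as the paper's: mix Maximal Lotteries with a stable $k$-lottery on the quasi-kernel, analyze via biased metrics, split into a regime where Maximal Lotteries already has constant slack and a regime where it is near-tight, and prove a $k$-uniform distortion-below-$3$ bound for the stable $k$-lottery component by treating small $k$ directly through the basic $1-\frac{1}{k+1}$ stability guarantee and large $k$ through \nameref{lem:main}. The paper packages the regime analysis as \Cref{lem:reduce_to_balanced} (imported from \cite{DBLP:journals/jacm/CharikarRWW24} via $(\alpha,\beta)$-consistency) and then, on $\theta$-regular profiles, verifies the single inequality $p_J\le\frac{\lambda}{\theta}(1-s_{i\succ J})$ of \Cref{cor:lp-condition} using \Cref{lem:small_stability_representation} for small $k$ and \nameref{lem:main} for large $k$ --- which matches your last paragraph.

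However, your resolution of the near-tight regime --- the step you yourself identify as the heart of the proof --- does not work as stated. You claim that on metrics where Maximal Lotteries saturates distortion $3$ one has $s_{i^*\succ J_t}\approx 0$, and that \nameref{lem:main} then forces $p^{(2)}$ to concentrate away from $J_t$. The lemma acts in the opposite direction: it upper-bounds $p_J$ by roughly $1-s_{i\succ J}$, so it is binding precisely when $s_{i\succ J}$ is \emph{large}; if $s_{i^*\succ J_t}\approx0$ it is vacuous (and in that case the target inequality is anyway easy, since its right-hand side is $\approx\lambda/\theta$). The genuinely hard issue in the near-tight regime is not the lottery's weights at all: the stable $k$-lottery is computed on $\hat C$, so its distortion must be charged to some candidate of $\hat C$, and \Cref{lem:pruning_loss} alone only guarantees one with cost $\bigl(\frac{4}{\theta}-3\bigr)\SC(i^*)$ --- which, as you note, is useless for $\theta$ near $\frac12$. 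What closes this is the $(\alpha,\beta)$-consistency argument of \cite{DBLP:journals/jacm/CharikarRWW24}: when the biased metric is consistent enough that Maximal Lotteries has no constant slack, any candidate reaching $i^*$ by a short path of heavy edges is metrically close to $i^*$, so the quasi-kernel retains a candidate whose cost is close to $\SC(i^*)$, and only then does the $\theta$-regular bound for the stable $k$-lottery translate into overall distortion below $3$. This consistency step (the content of \Cref{lem:reduce_to_balanced}) is missing from your plan. A smaller point: when combining your small-$k$ and large-$k$ arguments you should check that a single $\theta$ serves all $k\ge2$ --- the small-$k$ bound $\frac{\theta}{1-\theta}\bigl(\frac{1}{\theta(k+1)}\bigr)^{1/k}$ drifts above $1$ as $k$ grows, while the \nameref{lem:main}-based bound only drops below $1$ beyond a universal threshold, so $\theta$ must be chosen close enough to $\frac12$ that the two ranges overlap.
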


It is implicit in \cite{DBLP:journals/jacm/CharikarRWW24} that if a voting rule has distortion less than $3$ among the remaining candidates after \nameref{box:quasi-kernel}, then it can have distortion less than $3$ in general after \nameref{box:quasi-kernel} and mixing with a stable $1$-lottery (Maximal Lotteries).  The next lemma rephrases this fact in the context of \nameref{box:two_lotteries}.
\begin{lemma}[implicit in \cite{DBLP:journals/jacm/CharikarRWW24}]
\label{lem:reduce_to_balanced}
We call a preference profile $\theta$-regular if $s_{i \succ j} \leq \theta$ for all $i, j \in C$. Fix a parameter $\theta \in (0.5, 1]$.
If the distortion constant of stable $k$-lotteries is less than $r < 3$ (independent of $k$) among $\theta$-regular profiles for all $k \geq 2$, then the distortion constant of \nameref{box:two_lotteries} with a sufficiently large parameter $\mu < 1$ (independent of $k$) is less than $r' < 3$ (independent of $k$) in general for all $k \geq 2$.
\end{lemma}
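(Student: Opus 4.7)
The plan is to decompose \nameref{box:two_lotteries}'s expected cost by branch, bound each naively, and then invoke a complementarity argument to handle the regime where the naive bound fails. Starting from
\[
\E[\SC(\text{PDL})] \leq \mu \cdot \E[\SC(\text{ML})] + (1-\mu) \cdot \E[\SC(\text{stable $k$-lottery on } \hat{C})],
\]
the first term is at most $3\mu \cdot \SC(i^*)$ since Maximal Lotteries has distortion $\leq 3$. For the second term, the key observation is that after \nameref{box:quasi-kernel} the restricted profile on $\hat{C}$ is automatically $\theta$-regular: any pair $a, b \in \hat{C}$ with $s_{a \succ b} \geq \theta$ would give an edge violating the independent-set condition of a quasi-kernel. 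Hence the hypothesis applies to $\hat{C}$, giving $\E[\SC(\text{stable $k$-lottery on }\hat{C})] \leq r \cdot \min_{j \in \hat{C}} \SC(j) \leq r \left(\tfrac{4}{\theta} - 3\right) \SC(i^*)$ by \Cref{lem:pruning_loss}. Combining yields distortion at most $3\mu + (1-\mu)\, r \left(\tfrac{4}{\theta} - 3\right)$.

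The main obstacle is that this coarse mixture bound is strictly less than $3$ only when $r\left(\tfrac{4}{\theta}-3\right) < 3$, which fails for $\theta$ close to $\tfrac{1}{2}$. To cover arbitrary fixed $\theta \in (\tfrac{1}{2}, 1]$, I would follow the template of \cite{DBLP:journals/jacm/CharikarRWW24} and argue per biased metric using \Cref{thm:biased-iff} rather than through worst-case social costs. For any biased metric with $x_{i^*} = 0$, the mixture's integrand $\sum_j p_j \, s_{I_t \succ j}$ splits into an ML contribution and a pruned-stable contribution. The ML contribution is bounded above by $1 - s_{\forall j \succ i,\; x_j - x_i \leq t}$ via the distortion-$3$ characterization. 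The pruned contribution is bounded, via the two-hop reachability property of the quasi-kernel and the hypothesis applied on the $\theta$-regular profile $\hat{C}$, by an analogous integrand with constant $\tfrac{r-1}{2} < 1$, after a coordinate shift from $i^*$ to a nearby $\hat{i} \in \hat{C}$ (parallel to the ``shifting'' argument used in \Cref{lem:post-shift}).

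The hardest step, and the one that is implicit in \cite{DBLP:journals/jacm/CharikarRWW24}, is a complementarity lemma: the biased metrics on which the pruned branch incurs the $\tfrac{4}{\theta} - 3$ penalty are precisely those with candidates in $C \setminus \hat{C}$ that are ranked above $i^*$ by many voters, and this same structural feature creates slack of size $\Omega_\theta(1)$ in the ML integrand. Once this deficit-penalty accounting is formalized, one obtains a combined integrand bounded by some $\lambda' < 1$ depending only on $\theta$, $r$, and $\mu$, uniformly in $k \geq 2$. Taking $\mu$ sufficiently close to $1$ (independent of $k$) then yields the claimed distortion $r' = 1 + 2\lambda' < 3$.
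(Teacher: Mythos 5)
Your proposal follows essentially the same route as the paper's (sketch) proof: split the mixture, observe that the profile restricted to the quasi-kernel is $\theta$-regular so the hypothesis applies there with the $\bigl(\tfrac{4}{\theta}-3\bigr)$ pruning loss of \Cref{lem:pruning_loss}, recognize that the resulting naive bound $3\mu + (1-\mu)\,r\bigl(\tfrac{4}{\theta}-3\bigr)$ cannot go below $3$ when $\theta$ is near $\tfrac12$, and then argue per biased metric that the instances where \nameref{box:quasi-kernel} is lossy are exactly those where the Maximal Lotteries branch has constant slack below $3$. The one substantive correction concerns how you state the complementarity: it is not triggered by the mere existence of candidates outside $\hat{C}$ that many voters rank above $i^*$ --- such a candidate creates no ML slack if it is close to $i^*$ in the biased metric, and in that case pruning is nearly lossless anyway, since the two-hop path from the quasi-kernel to $i^*$ then has small cost. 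The correct pivot, which the paper's sketch invokes through the $(\alpha,\beta)$-consistency notion of \cite{DBLP:journals/jacm/CharikarRWW24}, is whether candidates with $s_{j \succ i^*} \geq \beta$ can simultaneously be far from $i^*$ (large $x_j$): inconsistent metrics give the stable $1$-lottery a constant advantage below $3$ while the pruned branch remains bounded, and consistent metrics force some quasi-kernel candidate close to $i^*$ so the pruned stable $k$-lottery inherits the $r<3$ guarantee. With that substitution your deficit--penalty accounting coincides with the paper's two-case argument, and, like the paper, it legitimately defers the formal execution of the consistency machinery to \cite{DBLP:journals/jacm/CharikarRWW24}.
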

\begin{hideproofBfour}[Proof sketch of \Cref{lem:reduce_to_balanced}]
\cite{DBLP:conf/soda/CharikarR22} showed that to bound distortion, without loss of generality, we only need to consider biased metrics.

\cite{DBLP:journals/jacm/CharikarRWW24} defined the notion of $(\alpha, \beta)$-consistency, which says that if under an $(\alpha, \beta)$-consistent biased metric, a candidate $j$ satisfies $s_{j \succ i^*} \geq \beta$ where $i^*$ is the optimal candidate, then $j$ must be close (parametrized by $\alpha$) to the optimal candidate.
They also
showed that stable $1$-lotteries (Maximal Lotteries) have distortion $3$, and the distortion becomes better than $3$ as the biased metric becomes more inconsistent.

Any deterministic selection after \nameref{box:quasi-kernel} with parameter $\theta \in [0.5, 0.99]$ always has bounded distortion, because in the tournament graph there is a short path from the selected candidate to the optimal candidate with edge weights bounded away from zero (see e.g., \cref{cor:two-step}).

Therefore, we randomize between (I) a stable $1$-lottery, and (II) a stable $k$-lottery after \nameref{box:quasi-kernel}. As long as we can show that the stable $k$-lottery has distortion better than $3$ by a constant for sufficiently consistent biased metrics, then the overall rule of \nameref{box:two_lotteries} has distortion better than $3$ for a sufficiently large $\mu$:
\begin{itemize}
\item If the biased metric is not sufficient consistent, then rule (I) achieves distortion less than $3$ by a constant and rule (II) has bounded distortion. Therefore, randomizing between them with a sufficiently large $\mu$ achieves distortion less than $3$ by a constant.
\item On the other hand, if the biased metric is sufficiently consistent, then rule (I) achieves distortion at most $3$ and rule (II) has distortion less than $3$ by a constant (since by $(\alpha, \beta)$-consistency, there is a candidate in the quasi-kernel being very close to the optimal candidate, and hence \nameref{box:quasi-kernel} incurs little loss). Randomizing between them achieves distortion less than $3$ by a constant.
\end{itemize}
This concludes the proof sketch of the lemma. (To see how these ideas can be formally executed, we refer the reader to \cite{DBLP:journals/jacm/CharikarRWW24}.)
\end{hideproofBfour}

The next lemma has a similar form as \nameref{lem:main}, but is more useful for a small $k$.
\begin{lemma}
\label{lem:small_stability_representation}
Fix a stable $k$-lottery $D$ and let $p_J$ be its probability of selecting a candidate in $J$. It holds that
\[
s_{i \succ J} \leq \frac{p_J^{-k}}{k + 1}.
\]
\end{lemma}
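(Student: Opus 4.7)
The plan is to instantiate the defining property of a stable $k$-lottery against the degenerate distribution concentrated at $i$, and then lower-bound the resulting ``defeat probability'' by restricting attention to voters counted by $s_{i \succ J}$. The intuition is straightforward: if $D$ places mass $p_J$ on $J$, then with probability exactly $p_J^k$ all $k$ i.i.d.\@ samples from $D$ land inside $J$; a voter who ranks $i$ above every element of $J$ then strictly prefers the singleton $\{i\}$ to the sampled multiset, and the stability of $D$ forces such voters to be few.

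Concretely, the first step is to apply \cref{def:stable} with $D' = \{i\}$ (the Dirac lottery on $i$) to obtain
\[
\Pr_{v \sim V}[D^k \succ_v \{i\}] \geq 1 - \frac{1}{k+1}.
\]
Because the randomized tie-breaking rule turns each multiset comparison into a binary event with the two outcomes' probabilities summing to $1$, this is equivalent to $\Pr_{v \sim V}[\{i\} \succ_v D^k] \leq 1/(k+1)$.

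The second step is to lower-bound the same quantity by focusing on voters $v \in S_{i \succ J}$. For any such voter, condition on the event $E$ that every one of the $k$ independent samples from $D$ lies in $J$; by independence, $\Pr[E] = p_J^k$. Since $i \in C \setminus J$ and $v$ strictly prefers $i$ to every candidate of $J$, on $E$ the candidate $i$ is the unique top choice of $v$ in $\{i\} \cup M$ (where $M$ denotes the sampled multiset) and is present in $\{i\}$ but not in $M$; hence the tie-breaking rule deterministically declares $\{i\} \succ_v D^k$. Therefore $\Pr[\{i\} \succ_v D^k] \geq p_J^k$ for every $v \in S_{i \succ J}$. Integrating this pointwise bound over the empirical voter distribution yields
\[
s_{i \succ J} \cdot p_J^k \leq \Pr_{v \sim V}[\{i\} \succ_v D^k] \leq \frac{1}{k+1},
\]
and rearranging gives the claimed inequality.

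I do not anticipate a real obstacle here; the only delicate point is verifying that the randomized tie-breaking collapses to a deterministic outcome on the good event $E$, which it does because $i$ is a strict favourite of such voters and is not present in the multiset sampled from $J$.
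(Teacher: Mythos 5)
Your proof is correct and follows essentially the same route as the paper's: instantiate the stability guarantee against the point lottery on $i$ to get $\Pr_{v\sim V}[i \succ_v D^k] \leq \frac{1}{k+1}$, then lower-bound this probability by restricting to voters with $i \succ J$ and using the event (of probability $p_J^k$) that all $k$ i.i.d.\@ samples land in $J$. Your extra care about the tie-breaking collapsing to a deterministic outcome on that event is a valid (and slightly more explicit) justification of the step the paper states directly.
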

\begin{hideproofBfour}[Proof of \Cref{lem:small_stability_representation}]
Since $D$ is a stable $k$-lottery, we have
$\Pr_{v \sim V}[i \succ_v D^k] \leq \frac{1}{k + 1}$. On the other hand, let $S$ be the subset of voters with the preference $i \succ J$ and we have
\begin{align*}
\Pr_{v \sim V}[i \succ_v D^k] = &s_{i \succ J} \cdot \Pr_{v \sim S}[i \succ_v D^k] + (1 - s_{i \succ J}) \cdot \Pr_{v \sim C \setminus S}[i \succ_v D^k]\\
\geq &s_{i \succ J} \cdot \Pr_{v \sim S}[i \succ_v D^k] \geq s_{i \succ J} \cdot p_J^k.
\end{align*}
Therefore, $s_{i \succ J} \cdot p_J^k \leq \frac{1}{k + 1}$.
\end{hideproofBfour}

We are finally ready to prove \cref{thm:distortion_two_lotteries}.
\begin{hideproofBfour}[Proof of \cref{thm:distortion_two_lotteries}]
Using \cref{lem:reduce_to_balanced} and \cref{cor:lp-condition}, we wish to show that in the case of $s_{i \succ j} \leq \theta$ for all $i, j \in C$, it holds that
\[
\sum_{j \in J} s_{I \succ j} p_j \leq \lambda (1 - s_{i \succ J})
\]
with $\lambda < 1$. Writing $p_J = \sum_{j \in J} p_j$, since $s_{I \succ j} \leq \theta$, it is sufficient to show
\begin{equation}
\label{eq:lotteries_to_show}
p_J \leq \frac{\lambda}{\theta} \cdot (1 - s_{i \succ J}).
\end{equation}

\paragraph{Small $k$:}
If $p_J \leq \frac{\lambda}{\theta} \cdot (1 - \theta)$, then \cref{eq:lotteries_to_show} automatically holds. Using \cref{lem:small_stability_representation}, we have
\[
\frac{\lambda}{\theta} \cdot (1 - s_{i \succ J}) \geq \frac{\lambda}{\theta} \cdot \left(1 - \frac{p_J^{-k}}{k + 1} \right).
\]
Therefore, if $p_J \leq \frac{\lambda}{\theta} \cdot \left(1 - \frac{p_J^{-k}}{k + 1}\right)$, then \cref{eq:lotteries_to_show} also holds.

Putting them together, we know that is enough to cover all $p_J \in [0, 1]$ if we set
\[
\lambda = \frac{\theta}{1 - \theta} \cdot \left(\frac{1}{\theta(k + 1)}\right)^\frac{1}{k}.
\]
For any finite $k \geq 2$, this $\lambda$ is strictly less than $1$ when $\theta > 0.5$ is sufficiently close to $0.5$.

\paragraph{Large $k$:}
Using \nameref{lem:main}, we know that
\[
p_J \leq g(k) + 1 - s_{i \succ J}.
\]
Since $1 - s_{i \succ J} \geq 1 - \theta$, we further have
\[
p_J \leq \left(\frac{g(k)}{1 - \theta} + 1\right) \cdot (1 - s_{i \succ J}).
\]
Therefore, setting $\lambda = \theta + \frac{\theta}{1 - \theta} \cdot g(k)$ is sufficient for \cref{eq:lotteries_to_show}. This is less than $1$ for sufficiently large $k$ for any $\theta < 1$.

\paragraph{Summary:} The discussion above shows that there exist parameters $r < 3$ and $\theta > 0.5$, so that for all $k \geq 2$, stable $k$-lotteries have distortion less than $r$ among profiles in which $s_{i \succ j} \leq \theta$ for all $i, j \in C$. Invoking \cref{lem:reduce_to_balanced} completes the proof.
\end{hideproofBfour}

\section{Conclusions}\label{sec:conclusions}

We conclude with a retrospective discussion of our results and suggestions for future work.

\paragraph{A Condorcet view of distortion $3$.} Part of the reason why $3$ is a natural goal (and barrier) for many types of voting rules is that it is the distortion achieved by a \emph{Condorcet winner}---a candidate preferred to any other by a majority of voters. In particular when a majority of voters prefers $j^*$ over $i^*$, this implies that $\SC(j^*) \leq 3\SC(i^*)$. Since Condorcet winners do not always exist (a fact known as \emph{Condorcet's paradox}), a distortion $3$ candidate can be viewed as a relaxation of a Condorcet winner that \emph{does} always exist.  

Though a Condorcet winner (when one exists) can be identified using the tournament graph, \Cref{thm:lb} shows that it is not always enough to identify a distortion 3 candidate. One can view this result as a strengthening of Condorcet's paradox in the metric distortion setting. On the other hand, we also show that with the additional information in $k$-tournaments, we can get distortion approaching $3$ with deterministic rules (\Cref{thm:distortion_simul_veto}), and even beyond 3 with randomization (\Cref{thm:distortion_two_lotteries}).

\paragraph{Closing metric distortion gaps.} A natural question arising from our work is whether the gap of $(3.1128, 3.9312)$ for optimal distortion of deterministic tournament rules can be closed. We anticipate this may be challenging. As with metric distortion in randomized voting rules, a tight upper bound would likely need to show that a highly asymmetric and unnatural instance is the worst-case. More specifically, we note some specific challenges to improving our approach in \Cref{rem:ub-hard,rem:lb-hard}. We find it somewhat intriguing that in \Cref{rem:ub-hard}, a wishful argument for analyzing the biased metrics actually suggests the Ranked Pairs rule. Even though fully closing the gap may be challenging, we are optimistic that natural deterministic tournament rules (such as an appropriate modification of Ranked Pairs) may improve the upper bound.

We are also optimistic that some of our techniques may be useful in resolving other problems in metric distortion, particularly the optimal metric distortion of randomized voting rules, where the optimal metric distortion is currently known to be in $(2.112, 2.753)$ \cite{DBLP:conf/soda/CharikarR22,DBLP:journals/jacm/CharikarRWW24}. As noted in the introduction, \cite{DBLP:journals/jacm/CharikarRWW24} suggested that a deterministic tournament rule with distortion less than $2 + \sqrt{5}$ could be a useful ingredient for improved randomized voting rules, and it would be interesting to explore whether \nameref{box:unblanketed} or variants of it could play this role. On the lower bound front, we find it intriguing that our lower bound mostly uses the same $(0,1,2,3)$-metrics used in \cite{DBLP:conf/soda/CharikarR22}'s lower bound, but also a ``half-integral'' version of these metrics for one of the cases. Though \cite{DBLP:conf/soda/CharikarR22} conjectured that their lower bound is optimal, we suspect that their bound can be improved by considering more sophisticated metrics such as these.

\paragraph{A theory of $k$-tournament rules.} One of the conceptual contributions of our work is the generalization of tournament rules to $k$-tournament rules, and the introduction of natural $k$-tournament rules inspired by algorithms for committee selection. Considering the vastness of the literature on tournament rules, we see this as a potential treasure trove of untapped voting rules that may be practically useful. We leave this as an exciting direction for future work.

\section*{Acknowledgments}
Moses Charikar is supported by a Simons Investigator Award. Prasanna Ramakrishnan is supported by Moses Charikar’s Simons Investigator Award and Li-Yang Tan’s NSF awards 1942123, 2211237, 2224246, Sloan Research Fellowship, and Google Research Scholar Award.

\bibliographystyle{alpha}
\bibliography{ref}

\appendix

\section{Relevant External Facts}

We show that \Cref{lem:lp-tournament} implies the result due to \cite{DBLP:conf/ec/MunagalaW19} that in an election with a cyclically symmetric tournament graph, every candidate has distortion at most 3.
\begin{theorem}[\cite{DBLP:conf/ec/MunagalaW19}]\label{thm:cyc-sym}
Suppose there exists a cyclic permutation $\tau$ on the candidates such that for all candidates $i$ and $j$, $s_{i\cg j} = s_{\tau(i)\cg \tau(j)}$. Then for all candidates $j^*$ and $i^*$, $\SC(j^*) \leq 3 \SC(i^*)$.
\end{theorem}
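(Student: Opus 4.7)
The plan is to invoke \Cref{lem:lp-tournament} with $\lambda = 1$, which reduces the claim $\SC(j^*) \leq 3 \SC(i^*)$ to showing
$$\min_{i \in I} s_{i \succ j^*} \ \leq\ \max_{j \in J} s_{j \succ i^*}$$
for every partition $I \sqcup J = C$ with $i^* \in I$ and $j^* \in J$. The case $i^* = j^*$ is trivial, so assume $i^* \neq j^*$.

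The first step is to exploit the cyclic symmetry to collapse the pairwise comparisons into a single one-variable function. Since $\tau$ is an $m$-cycle on $C$, identify $C$ with $\mathbb{Z}/m\mathbb{Z}$ so that $\tau(i) = i + 1 \pmod m$. The hypothesis $s_{i \succ j} = s_{\tau(i) \succ \tau(j)}$ then makes $s_{i \succ j}$ depend only on the cyclic difference $j - i \pmod m$, and I will write $s_{i \succ j} = f(j - i \bmod m)$ for a function $f \colon \{1, 2, \ldots, m - 1\} \to [0, 1]$. In particular, $s_{i \succ j^*} = f(j^* - i)$ for $i \in I$ and $s_{j \succ i^*} = f(i^* - j)$ for $j \in J$ (all arithmetic mod $m$).

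The second step is a pigeonhole argument to produce a pair $(i, j) \in I \times J$ where these two values are literally equal. Define the two difference sets
$$A = \{j^* - i \bmod m : i \in I\}, \qquad B = \{i^* - j \bmod m : j \in J\}.$$
Because $j^* \notin I$ we have $0 \notin A$, and because $i^* \notin J$ we have $0 \notin B$. Hence $A, B \subseteq \{1, 2, \ldots, m - 1\}$, a ground set of size $m - 1$. However, $\tau$ acts as a bijection, so $|A| = |I|$ and $|B| = |J|$, giving $|A| + |B| = m > m - 1$. Pigeonhole then forces $A \cap B \neq \emptyset$.

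Picking any $d \in A \cap B$ with $d = j^* - i$ for some $i \in I$ and $d = i^* - j$ for some $j \in J$, the one-variable reduction yields $s_{i \succ j^*} = f(d) = s_{j \succ i^*}$, so
$$\min_{i' \in I} s_{i' \succ j^*} \ \leq\ s_{i \succ j^*} \ =\ s_{j \succ i^*} \ \leq\ \max_{j' \in J} s_{j' \succ i^*}$$
and the hypothesis of \Cref{lem:lp-tournament} is verified with $\lambda = 1$. The only conceptual step is noticing that cyclic symmetry converts both sides of the desired inequality into values of the same function $f$; once this is in place, the pigeonhole matching is automatic, so I do not anticipate a significant obstacle.
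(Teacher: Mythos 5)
Your proposal is correct and takes essentially the same approach as the paper: both reduce the claim to \Cref{lem:lp-tournament} with $\lambda = 1$ and then use the cyclic symmetry to exhibit $i \in I$, $j \in J$ with $s_{i \succ j^*} = s_{j \succ i^*}$. The only difference is cosmetic---you locate this pair by a pigeonhole argument on the difference sets $A$ and $B$, whereas the paper normalizes $\tau(j^*) = i^*$ and takes the first crossing point $\tau^t(j^*) \in I$, $\tau^{t+1}(j^*) \in J$ along the $\tau$-orbit.
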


\begin{proof}
Without loss of generality, we can assume that $\tau(j^*) = i^*$, since we can replace $\tau$ with some $\tau^t$ so that this is the case. 

Suppose we have an arbitrary partition $I \sqcup J = C$ such that $i^* \in I,  j^* \in J$. To apply \Cref{lem:lp-tournament}, it suffices to show that 
$$\min_{i\in I}s_{i\cg j^*} \leq \max_{j\in J} s_{j \cg i^*}.$$

Let $t$ be such that $\tau^t(j^*) \in I$ but $\tau^{t + 1}(j^*) \in J$. Such a $t$ must exist since $\tau^1(j^*) = i^* \in I$ and $\tau^m(j^*) = j^* \in J$. 

Then observe that
$$\min_{i\in I}s_{i\cg j^*} \leq s_{\tau^{t}(j^*) \cg j^*} = s_{\tau^{t + 1}(j^*) \cg i^*} \leq \max_{j\in J} s_{j \cg i^*}$$
as desired.
\end{proof}

Next, we revisit the \emph{Ranked Pairs} rule, due to Tideman \cite{tideman1987independence}.  \cite{DBLP:journals/ai/AnshelevichBEPS18} intriguingly showed that this voting rule has distortion at most 3 when there are at most 4 candidates, leading them to conjecture that it has distortion at most 3 in general. We will give a simple proof of this fact, leveraging \Cref{lem:post-shift}.

\begin{mybox}[label={box:ranked_pairs},nameref={Ranked Pairs}]{The \underline{Ranked Pairs} Voting Rule.}
\begin{itemize}
\item[$\triangleright$] Initialize a directed graph $G$ whose vertices are the candidates, and no edges.

\item[$\triangleright$] Iterating over each ordered pair $(i, j)$ of candidates in decreasing order of $s_{i \cg j}$:

\begin{itemize}
    \item Add the edge $(i, j)$ to $G$ if it does not create a cycle. 
\end{itemize}

\item[$\triangleright$] Return the candidate with no in-edges in $G$.

\end{itemize}
\end{mybox}

\begin{theorem}[\cite{DBLP:journals/ai/AnshelevichBEPS18}]\label{thm:rp}
In elections with at most 4 candidates, ranked pairs has distortion at most 3.
\end{theorem}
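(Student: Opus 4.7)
The plan is to invoke \Cref{cor:post-shift} with $\lambda = 1$ for each pair $(j^*, i^*)$, where $j^*$ is the Ranked Pairs winner and $i^* \ne j^*$ is arbitrary. The central structural insight is that $j^*$ has no incoming edges in the final DAG $G$: for any $i^*$, either \emph{(A)} the direct edge $(j^*, i^*)$ lies in $G$, or \emph{(B)} when the edge $(i^*, j^*)$ was processed it was rejected for creating a cycle, meaning a directed path $j^* = a_0 \to a_1 \to \cdots \to a_t = i^*$ was already present in $G$ at that moment. Because the algorithm processes edges in decreasing order of weight, every edge on this path satisfies $s_{a_{u-1} \succ a_u} \ge s_{i^* \succ j^*}$; moreover Case B forces $s_{j^* \succ i^*} < s_{i^* \succ j^*}$, since otherwise $(j^*, i^*)$ would have been processed first with no obstruction (as $j^*$ is a source throughout the execution) and would therefore belong to $G$, putting us back in Case A.

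In Case A, I would choose $k = i^*$ in \Cref{cor:post-shift}, so the partition condition collapses to $\min_{i \in I} s_{i \succ j^*} \le \max_{j \in J} s_{j \succ i^*}$; this holds because $i^* \in I$ contributes $s_{i^* \succ j^*}$ on the left while $j^* \in J$ contributes $s_{j^* \succ i^*} \ge s_{i^* \succ j^*}$ on the right. In Case B, I would choose $k = a_{t-1}$, the penultimate vertex of the path. The initial condition $s_{i^* \succ j^*} \le s_{a_{t-1} \succ i^*}$ is immediate from the last path edge. For the partition condition I rely on two observations drawn directly from the path: \emph{(i)} whenever $(j^*, a) \in G$ is an edge of weight at least $s_{i^* \succ j^*}$, taking complements gives $s_{a \succ j^*} = 1 - s_{j^* \succ a} \le 1 - s_{i^* \succ j^*} = s_{j^* \succ i^*}$, so $s_{a \succ j^*}$ is dominated by the $s_{j^* \succ i^*}$ term already on the right; and \emph{(ii)} each internal path edge $(a_{u-1}, a_u)$ carries weight at least $s_{i^* \succ j^*}$, so whenever $a_{u-1} \in J$ and $a_u \in I$ the max on the right-hand side picks up $s_{a_{u-1} \succ a_u} \ge s_{i^* \succ j^*}$.

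With $m \le 4$, the path in Case B has length $t \in \{2, 3\}$. For $t = 2$ the path is $j^* \to a_1 \to i^*$ and one extra candidate $\ell$ may lie in either $I$ or $J$; observation (i) applied to $a_1$ handles both partitions. For $t = 3$ the path is $j^* \to a_1 \to a_2 \to i^*$, with $k = a_2$ fixed in $I$ and the remaining vertex $a_1$ free: if $a_1 \in I$ observation (i) applied to $a_1$ again works, and if $a_1 \in J$ observation (ii) applied to the edge $(a_1, a_2)$ delivers a heavy weight into the max. The main obstacle is just making sure these subcases exhaust all partitions and that the numerical inequalities line up cleanly, but the small candidate count keeps this routine. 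What the plan also makes transparent---matching \Cref{rem:ub-hard}---is why the argument does not extend beyond $m = 4$: once the path has two or more internal vertices, they can straddle the partition in a configuration where neither observation produces a dominating term, so a single application of the shift in \Cref{lem:post-shift} is no longer enough.
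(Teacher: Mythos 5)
Your overall route is the same as the paper's: extract from the Ranked Pairs execution a $j^*$-to-$i^*$ path whose edges all have weight at least $s_{i^* \succ j^*}$, then apply \Cref{cor:post-shift} with $\lambda = 1$ and $k$ the penultimate path vertex; your observations (i) and (ii) are exactly the paper's two partition cases ($k_1 \in I$ versus $k_1 \in J$). However, your Case~A contains a genuine error: you assume that $(j^*, i^*) \in G$ implies $s_{j^* \succ i^*} \geq s_{i^* \succ j^*}$, and that implication is false. The heavier edge $(i^*, j^*)$ can be processed first and rejected because a path such as $j^* \to a \to i^*$ is already present; afterwards the lighter edge $(j^*, i^*)$ is still added, since a cycle would require a path into $j^*$ and $j^*$ is a source throughout. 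In that situation you are routed into Case~A (the direct edge is in $G$) but the inequality your Case~A argument rests on fails, so as written the proof does not cover this configuration. (Your implication goes only the other way: $s_{j^* \succ i^*} \geq s_{i^* \succ j^*}$ does imply the direct edge is in $G$.) A second, more cosmetic slip: observation (ii) is stated for an arbitrary internal edge with $a_{u-1} \in J$, $a_u \in I$, but the right-hand side of \Cref{cor:post-shift} only contains comparisons against $i^*$ and against $k$, so (ii) is legitimate only when $a_u \in \{k, i^*\}$; in the instances where you invoke it ($a_u = a_2 = k$) this happens to be satisfied.

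The fix is small and stays within your own framework: define the dichotomy by the weight comparison rather than by edge membership. If $s_{j^* \succ i^*} \geq s_{i^* \succ j^*}$, your Case~A computation is valid verbatim (membership of $(j^*, i^*)$ in $G$ is never actually used there). If $s_{j^* \succ i^*} < s_{i^* \succ j^*}$, then since $j^*$ has no in-edges the edge $(i^*, j^*)$ was rejected, and at that moment the direct edge $(j^*, i^*)$ had not yet been processed, so the rejection path is simple of length $t \in \{2, 3\}$ and your Case~B analysis goes through. This is essentially what the paper does, except that it avoids the case split entirely by always taking the rejection path (which exists because $j^*$ is a source) and allowing the degenerate conventions $k_1 = k_2$ or $k_1 = k_2 = i^*$ for short paths.
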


\begin{proof}
The key property of Ranked Pairs is that if the winner is $j^*$, then for all candidates $i^* \neq j^*$, there exists a path $j^*, k_1, k_2, \ldots, k_r, i^*$ such that 
\begin{equation}\label{eq:path}
s_{i^* \cg j^*} \leq s_{j^* \cg k_1}, s_{k_1 \cg k_2}, \dots, s_{k_r \cg i^*}.  
\end{equation}
(Otherwise, the edge $(i^*, j^*)$ would have been added to $G$.) If there are at most 4 candidates, then this path has length at most 3. Let us suppose that the path is $j^*, k_1, k_2, i^*$, where we allow that $k_1 = k_2$ if the path is length 2, and $k_1 = k_2 = i^*$ if the path has length 1.

We will verify that the conditions of \Cref{lem:post-shift} are satisfied with $\lambda = 1$ and $k\leftarrow k_2$. \cref{eq:path} gives us that either $k_2 = i^*$ or $s_{i^* \cg j^*} \leq s_{k_2 \cg i^*}.$ Now suppose that we have an arbitrary partition of candidates $I\sqcup J$ such that $i^*, k_2 \in I$ and $j^* \in J$. If $k_1 \in I$ then we have
$$\min_{i\in I} s_{i \cg j^*} \leq s_{k_1 \cg j^*} \leq s_{j^* \cg i^*} \leq \max_{j\in J} s_{j \cg i^*}.$$
On the other hand, if $k_1 \in J$ (which means $k_1 \neq k_2$) then we have
$$\min_{i\in I} s_{i \cg j^*} \leq s_{i^* \cg j^*} \leq s_{k_1 \cg k_2} \leq \max_{j\in J} s_{j \cg k_2}.$$
Therefore, the conditions of \Cref{lem:post-shift} are satisfied, and it follows that $\SC(j^*) \leq 3 \SC(i^*).$
\end{proof}

\end{document}